\documentclass[11pt, oneside]{article}   	
\usepackage{geometry}                		
\usepackage{graphicx}	
\usepackage{mathtools}
\usepackage{amsmath,amssymb,amsthm}
\usepackage{mathrsfs}
\usepackage[colorlinks=true,
            linkcolor=blue,
            urlcolor=blue,
            citecolor=blue]{hyperref}
\usepackage{tikz}
\usepackage{comment}

\title{Characterizing normality via automata and random matrix products}
\author{Laurent Bienvenu, Santiago Cifuentes, Hugo Gimbert}

\newtheorem{theorem}{Theorem}
\newtheorem{definition}[theorem]{Definition}
\newtheorem{proposition}[theorem]{Proposition}
\newtheorem{lemma}[theorem]{Lemma}
\newtheorem{claim}{Claim}
\newtheorem{corollary}[theorem]{Corollary}
\newtheorem{example}{Example}

\newcommand{\emptystr}{\Lambda}
\renewcommand{\alph}{\mathrm{A}}
\newcommand{\norm}[1]{\ensuremath{\|#1\|}}
\renewcommand{\u}{\mathbf{u}}
\renewcommand{\v}{\mathbf{v}}
\newcommand{\s}{\mathbf{s}}
\newcommand{\x}{\mathbf{x}}
\newcommand{\y}{\mathbf{y}}
\newcommand{\w}{\mathbf{w}}
\newcommand{\e}{\mathbf{e}}
\newcommand{\zero}{\mathbf{0}}
\newcommand{\one}{\mathbf{1}}
\newcommand{\cM}{\mathcal{M}}
\newcommand{\N}{\mathbb{N}}
\newcommand{\M}{\mathbf{M}}
\newcommand{\mat}[3]{\cM_{#1 {\scriptscriptstyle \times} #2}\left(#3\right)}
\newcommand{\genvect}{\mat{1}{m}{\mathbb{R}}}
\newcommand{\vect}{\mathbb{R}_{\geq 0}^m}
\newcommand{\pvect}{\mathbb{R}_{> 0}^m}
\newcommand{\matm}{\mat{m}{m}{\mathbb{R}_{\geq 0}}}

\renewcommand{\dh}{\ensuremath{d_{H}}}
\newcommand{\U}{\mathbf{U}}
\renewcommand{\P}{\mathbb{P}}
\newcommand{\supp}{\mathrm{supp}}
\newcommand{\interval}{{\ensuremath{[1,\ldots,m]}}}
\newcommand{\ps}{2^\interval}

\newcommand{\Live}{\mathrm{Live}}
\newcommand{\A}{\mathfrak{A}}
\newcommand{\B}{\mathfrak{B}}

\newcommand{\D}{\mathbf{D}}
\renewcommand{\S}{\mathcal{S}}
\newcommand{\R}{\mathbb{R}}
\newcommand{\T}{\mathbf{T}}
\newcommand{\PSPACE}{\textsc{PSPACE}}
\newcommand{\betfunc}{\gamma}

 \usepackage{xcolor}

\usepackage{caption}
\begin{document}
\maketitle

\begin{abstract}
	For a fixed alphabet $\alph$, an infinite sequence $X$ is said to be normal if every word $w$ over $\alph$ appears in $X$ with the same frequency as any other word of the same length. A classical result
	 relates normality to finite automata as follows: a sequence $X$ is normal if and only if all gambling strategies implementable with finite deterministic automata lose all their capital when trying to predict the
	 next bit of $X$ after seing the ones before. More precisely, Schnorr and Stimm (1972) proved that the capital goes exponentially fast to zero unless the automaton represents the gambler that never bets, in which case the capital remains constant. In this paper we show that an analogous result holds when considering probabilistic automata:
	 a sequence $X$ is normal if and only if for any gambling strategy implementable with probabilistic finite automaton it holds that the expected value of the capital of the gambler converges exponentially fast to a finite value when playing against $X$.
	 To obtain this result, we show a more general statement related to the convergence of martingales given by finite sets of non-negative matrices $\{\M_a\}_{a\in A}$. In particular, we show that $X$ is normal if and only if $||\v \M_{X[1]} \ldots \M_{X[n]}||$ converges exponentially fast to
	 a finite value for any non-negative starting vector $\v$. Moreover, we distinguish three distinctive behaviours that this sequence can attain, and prove that the problem of recognizing, given a family of matrices, to which case it belongs, is decidable. 
\end{abstract}

\section{Introduction}

\subsection{Normality as unpredictability}

Given a finite alphabet $\alph$, an infinite sequence $X$ of elements from $\alph$ is said to be \textit{normal}
if every word appears as a substring of $X$ with the same frequency as any other word of the same length. This
notion was introduced by Borel in 1909 \cite{emile1909probabilites} as a way to formalize statistical randomness at the level of finite substrings.
Since then, normal sequences have been studied in the context of number theory \cite{champernowne1933construction,copeland1946note} and theoretical computer science \cite{agafonov1968normal, becher2002example, schnorr1972endliche}.
It is well-known that almost all sequences are normal (in the sense of Lebesgue measure), and there are explicit constructions, such as the Champernowne constant
\begin{align*}
0.12345678910111213\ldots
\end{align*}

The definition of normality is syntactical: it only depends on the frequency of finite substrings. However, it can also be defined, in the spirit of the theory known as ``algorithmic randomness'' (whose goal is to give a meaning to the notion of individual random objects, see~\cite{DowneyH2010,LiV2008,Nies2009,ShenUV2017} for the main textbooks on the subject) via the so-called unpredictability paradigm. Imagine we have a betting strategy -- formalized via the notion of martingale --  that bets on the values of an infinite sequence~$X$ where for each value the bet is based on values already seen (and the reward or loss is fair when the bet is correct or incorrect respectively). Whenever a martingale $D$ is able to obtain over time an unbounded capital against this infinite sequence~$X$, it is said that $D$~\textit{succeeds} against~$X$. Then, a sequence $X$ is said to be ``unpredictable'' relative to some class $\mathcal{C}$ of martingales if no martingale~$D$ in the class~$\mathcal{C}$ succeeds against~$X$. Classes~$\mathcal{C}$ of interest are usually those corresponding to a particular computability class: one can for example take for~$\mathcal{C}$ the class of computable martingales, or primitive recursive martingales, or polynomial-time martingales, etc. As one might expect, different classes~$\mathcal{C}$ yield different notions of randomness and as it turns out, when~$\mathcal{C}$ is the class of martingales representable by finite automata, unpredictable sequences relative to~$\mathcal{C}$ are exactly the normal sequences! This fundamental theorem is due to Schnorr and Stimm~\cite{schnorr1972endliche} and informally tells us that the only type of non-randomness that can be detected by finite automata is non-normality. \\
\begin{theorem}[Schnorr-Stimm~\cite{schnorr1972endliche}]\label{thm:schnorr-stimm}
An infinite sequence~$X$ is normal if and only if no martingale computable by a finite automaton can succeed on it. 
\end{theorem}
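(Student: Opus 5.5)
We prove the two directions separately. Throughout, a finite-automaton martingale is a deterministic automaton with state set $Q$, transition function $\delta$ and an initial state, together with a betting function $\betfunc : Q \times \alph \to \mathbb{R}_{\geq 0}$ such that $\sum_{a\in\alph}\betfunc(q,a)=|\alph|$ for every state $q$. The capital is $D(X[1..n]) = \prod_{i=1}^{n} \betfunc(q_{i-1}, X[i])$, where $q_i$ is the state reached after reading $X[1..i]$.

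\textbf{Non-normal $\Rightarrow$ some automaton succeeds.} Suppose $X$ is not normal. Then there is a length $k$ and a word $w$ of length $k$ whose limiting frequency in $X$ differs from $|\alph|^{-k}$ along some subsequence of lengths. A standard counting argument shows the failure already appears for aligned blocks: there are a length $\ell$ and an offset such that the empirical distribution $\mu$ of the consecutive non-overlapping $\ell$-blocks of $X$ has a limit point $\mu \neq$ uniform. We then build an automaton that reads $X$ in blocks of length $\ell$, remembers the current partial block in its state, and bets according to the conditionals of a rational distribution $\nu$ on $\alph^\ell$ close to $\mu$ with full support. Over one block this multiplies the capital by $|\alph|^\ell \nu(u)$, where $u$ is the block read. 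After $N$ blocks the log-capital is therefore $N\bigl(\ell\log|\alph| + \sum_u \hat\mu_N(u)\log\nu(u)\bigr)$, with $\hat\mu_N$ the empirical block distribution. Along the subsequence where $\hat\mu_N \to \mu$, this is approximately $N\bigl(\ell\log|\alph| - H(\mu) - \mathrm{KL}(\mu\|\nu)\bigr)$. Since $\mu$ is not uniform, $H(\mu) < \ell\log|\alph|$, and choosing $\nu$ close enough to $\mu$ makes the bracket strictly positive. Hence the capital is unbounded, so the martingale succeeds.

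\textbf{Normal $\Rightarrow$ no automaton succeeds.} This is the harder direction, and the main obstacle is that the automaton's state is correlated with the sequence. Normality gives frequencies of words, not of automaton states, so the key step is to transfer normality to the automaton. Fix an automaton and reduce to a recurrent (terminal) strongly connected component $C$. Since $X$ is normal, every word appears in it, so the run eventually enters a terminal component and stays there. On $C$, consider the Markov chain driven by uniform random letters. I claim that for every $t$ the empirical frequency of pairs (state $q$, next $t$ letters $u$) along the run converges to $\pi(q)\,|\alph|^{-t}$, where $\pi$ is the stationary distribution.

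To prove the claim, I would use a synchronizing-style argument. Normality implies that the joint frequencies of $(q_i, X[i+1..i+t])$ are limits of frequencies of words of length $L+t$, via the state reached after the preceding $L$ letters. Since $\delta$ applied from different starting states differs, I would instead argue via invariance. Any limit point $\rho$ of the state frequencies is invariant under the uniform-letter transition operator, because by normality the letter following position $i$ is asymptotically uniform and independent of the preceding $L$-window, which determines the state up to the initial ambiguity. More carefully, the frequency of state $q$ followed by $u$ equals the limit over long windows, and normality gives independence of $u$ from the window. Irreducibility then forces $\rho = \pi$. I expect this step to be the crux.

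Given the claim, the average log-growth of the capital tends to
\[
\sum_q \pi(q)\,\frac{1}{|\alph|}\sum_a \log\betfunc(q,a) \le 0,
\]
by concavity of $\log$, with equality iff $\betfunc(q,\cdot)\equiv 1$ on the states of $C$. If the limit is strictly negative, the capital tends to $0$. In the equality case the martingale never bets within $C$, so the capital is constant from the entry time on. Either way it is bounded, so no automaton succeeds, which completes both directions. Some care is needed with $\betfunc(q,a)=0$, where $\log$ is $-\infty$. In that case the capital hits $0$ as soon as letter $a$ is read in state $q$, and this happens eventually because the pair $(q,a)$ has positive frequency.
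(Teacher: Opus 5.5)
Your non-normal $\Rightarrow$ success direction is fine. The easy half of the aligned-block criterion gives a non-uniform limit point $\mu$, and betting with the conditionals of a full-support $\nu$ near $\mu$ wins exponentially along a subsequence. In the other direction, everything after your frequency claim (Jensen state by state, $\pi>0$ on $C$, the zero bets) is routine. The paper gives no proof of this theorem beyond the citation; your outline matches the informal sketch in its overview, which does not address the step below either.

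\textbf{The gap is the claim itself.} The argument you sketch for it does not work. Invariance of a limit point $\rho$ needs the frequency of $\{q_i=q,\ X[i+1]=a\}$ to equal $\rho(q)/|\alph|$, i.e.\ independence of the next letter from the \emph{state}. Normality only gives independence from the window $X[i-L+1\ldots i]$. But $q_i=\delta^*(q_{i-L},X[i-L+1\ldots i])$, and the ``initial ambiguity'' $q_{i-L}$ depends on the unbounded past. Your reasoning goes through only when windows eventually pin down the state, e.g.\ if the automaton has a synchronizing word. A counterexample: take the two-state automaton in which every letter toggles the state. Then $q_i$ is the parity of $i$, and no window carries any information about it. Your claim with $t=1$ then says that the letters at odd positions of $X$ are equidistributed. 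That is a special case of Wall's theorem on subsequences along arithmetic progressions, which independence from windows does not reach.

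\textbf{The missing idea} is to use long windows to \emph{average out} the unknown state, not to determine it.
First, take the ergodic theorem for the chain of pairs (state, next $t$ letters) from Proposition~\ref{prop:automata_behaves_as_markov_chain}, plus a union bound over the finitely many starting states. For $L$ large, all but an $\eta$-fraction of words $u\in\alph^L$ then satisfy: for \emph{every} start $p\in C$, the run from $p$ on $u$ spends a fraction $\pi(q)|\alph|^{-t}\pm\varepsilon$ of its time in state $q$ with next $t$ letters equal to $v$.
Second, each position of $X$ lies in about $L$ of the windows $X[j+1\ldots j+L]$. So the global frequency of $(q,v)$ is, up to $O(t/L)$, the average over $j$ of the within-window frequencies of the run started at the true state $q_j$.
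Uniformity in $p$ makes $q_j$ irrelevant on good windows, and normality makes the asymptotic fraction of bad windows smaller than $\eta$. This proves your claim, and averaging over all overlapping windows avoids needing aligned-block normality of $X$.

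Alternatively, skip the claim and apply the same scheme directly to the log-capital, as the paper does for its own theorem in Case~1. There one shows a loss of a factor $e^{-\alpha N}$ with high probability on random words of length $N$, uniformly over the starting vector, and transfers this to the normal sequence by cutting it into blocks. The non-betting case is trivial.
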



In recent years, several papers explored whether giving the martingales access to randomness increases their predictive power. The first paper to provide a framework for this type of question is due to Buss and Minnes~\cite{buss2013probabilistic}, where they considered the set of computable martingales with internal access to random coin flips. When playing with a probabilistic martingale~$D$ (which now becomes a random variable) against a given sequence~$X$, there are several types of success:
\begin{itemize}
\item ($\mathbb{P}$-success) With positive probability (over the internal random source), $D$ succeeds against~$X$
\item ($\mathbb{P}1$-success) With probability~$1$ (over the internal random source), $D$ succeeds against~$X$
\item ($\mathbb{E}$-success) The expectation $\mathbb{E}(D(X[1\ldots n]))$ of the capital of~$D$ after playing against the first~$n$ bits of~$X$ takes arbitrarily large values as $n \rightarrow \infty$. 
\end{itemize}
It is clear from the definition that $\mathbb{P}1$-success implies $\mathbb{P}$-success, and in general no other implication holds.

Buss and Minnes studied $\mathbb{P}1$-success and $\mathbb{E}$-success in the context of computable martingales. They proved the following.

\begin{theorem}[{\cite[Theorem 5.1 and 5.2]{buss2013probabilistic}}]
Let~$X$ be an infinite sequence such that no computable martingale succeeds on it. Then, no probabilistically computable martingale can $\mathbb{P}1$-succeed or $\mathbb{E}$-succeed on it either. 
\end{theorem}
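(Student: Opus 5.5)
The plan is to reduce both forms of probabilistic success to success of a deterministic computable martingale. The key object is the averaged martingale
\[
d(\sigma) = \mathbb{E}\bigl(D(\sigma)\bigr),
\]
where the expectation is taken over the internal coin flips of the probabilistically computable martingale $D$.

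\textbf{Step 1: $d$ is a martingale.} Linearity of expectation gives this directly. Conditionally on the coin flips used up to reading $\sigma$, the fairness condition $\sum_{a\in\alph} D(\sigma a) = |\alph|\, D(\sigma)$ holds pointwise. Taking expectations yields the same identity for $d$.

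\textbf{Step 2: $d$ is computable, at least from below.} Along a finite prefix $\sigma$, the machine uses only finitely many coin flips on each halting branch. Hence $d(\sigma)$ is a sum over finite coin sequences $\rho$ of $2^{-|\rho|}$ times the capital computed along $\rho$.
\begin{itemize}
\item Enumerating the halting computations shows that $d$ is left-c.e., i.e.\ computably approximable from below.
\item If the model requires the martingale to be total with probability $1$, the non-halting mass tends to $0$. In that case $d$ is in fact computable, since $1-\sum$ (probability mass already accounted for) bounds the error, together with a bound on the capital after $|\sigma|$ steps. Fair betting gives the bound $D(\sigma)\le |\alph|^{|\sigma|}D(\emptystr)$.
\end{itemize}
So $d$ is a computable martingale. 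It is then a standard fact that a computable real-valued martingale can be replaced by a rational-valued exactly computable one that succeeds on the same sequences.

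\textbf{Step 3: the $\mathbb{E}$-success case.} This is now immediate. If $D$ $\mathbb{E}$-succeeds on $X$, then $\sup_n d(X[1\ldots n])=\infty$. So the computable martingale $d$ succeeds on $X$, which contradicts the hypothesis.

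\textbf{Step 4: the $\mathbb{P}1$-success case.} Here I would reduce to the expectation case.
\begin{itemize}
\item Suppose $D$ $\mathbb{P}1$-succeeds on $X$.
\item Fix a threshold $k$ and let $D_k$ be the martingale that follows $D$ until its capital first reaches $2^k$, and then stops betting. This is still probabilistically computable.
\item With probability $1$, $D_k$ eventually reaches and keeps capital $2^k$ along $X$.
\item By Fatou's lemma, or simply monotone convergence of the event ``threshold reached by time $n$'', we get $\liminf_n \mathbb{E}(D_k(X[1\ldots n])) \ge 2^k$.
\item Combine these as $\sum_k 2^{-k/2} D_k$, with weights chosen to sum to a finite initial capital. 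This is probabilistically computable, and its expectation along $X$ is unbounded.
\end{itemize}
This reduces $\mathbb{P}1$-success to $\mathbb{E}$-success, which is handled by Step 3.

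\textbf{Main obstacle.} The delicate step is Step 2: ensuring that $d$ is genuinely computable rather than merely left-c.e. If only left-c.e.\ is available, I would instead appeal to the fact that the hypothesis (no computable martingale succeeds) is equivalent to computable randomness. Then I would argue via a savings or approximation trick that a left-c.e.\ martingale obtained as an expectation of an almost-surely total computation is actually computable. Concretely, the unaccounted probability mass is computably bounded because totality lets us wait until the mass halted exceeds $1-\epsilon$.
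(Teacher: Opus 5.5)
The paper gives no proof of this statement. It is quoted from Buss and Minnes, so there is no in-paper argument to compare against directly. Your argument is correct, and it is the standard averaging argument.

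It is also exactly the first half of what the paper does for automata. In the proof of Theorem~\ref{teo:prob_automata_fail_against_normal}, $\mathbb{E}[C_n^X]$ is identified with $\norm{\v \M_{X[1\ldots n]}}$, which is your averaged martingale $d$. The difference lies in your Step~2. For Turing machines, $d$ is again computable, so the hypothesis applies at once. For probabilistic automata, $d$ is in general not computable by a finite automaton. That is why the paper cannot simply invoke Schnorr--Stimm and needs the matrix-product analysis instead.

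Three things to tighten.

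\textbf{Totality.} Almost-sure totality of the strategy should be stated as part of the definition, not hedged, because the statement is false without it. If strategies could diverge on a positive-measure set of coin sequences, then a deterministic partial computable martingale would be a special case. Such martingales succeed on some computably random sequences. So your fallback ``if only left-c.e.\ is available'' cannot work. Your ``concretely'' sentence there in fact re-assumes totality and just repeats Step~2.

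\textbf{Step~4 uses less than ``probability one''.} Write $T_k$ for the first time $D$ reaches $2^k$ along $X$. Then $D_k(X[1\ldots n]) \geq 2^k\,\mathbf{1}[T_k \leq n]$, hence $\liminf_n \mathbb{E}[D_k(X[1\ldots n])] \geq 2^k\,\mathbb{P}(T_k<\infty)$. So success with any positive probability already suffices. Under almost-sure totality, your argument therefore also rules out $\mathbb{P}$-success. This is consistent with the Bienvenu--Delle Rose--Steifer separation only because their strategies need to be total merely on a positive-measure set of coin sequences.

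\textbf{Summing.} It is slightly cleaner to sum the averaged martingales $d_k = \mathbb{E} D_k$ directly. They are uniformly computable, with tail bounded by $|\alph|^{|\sigma|}\sum_{k>K}2^{-k/2}$. This avoids building the mixed probabilistic strategy, so you need not check that the infinite mixture and the threshold test are implementable.
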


Surprisingly, Bienvenu, Delle Rose and Steifer were able to show the opposite result for $\mathbb{P}$-success. 

\begin{theorem}[\cite{BienvenuDS2022}]
There exists a sequence~$X$ such that no computable martingale succeeds on it, but some probabilistically computable martingale $\mathbb{P}$-succeeds on it. 
\end{theorem}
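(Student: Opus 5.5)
The plan is to separate the two notions with a sequence $X$ that is computably random but not partial computably random. The probabilistic martingale will simulate a partial computable martingale $d$ that succeeds on $X$. It uses its random bits to guess, with positive overall probability, enough information about the domain of $d$ that it never waits forever on a divergent computation.

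First, I would build $X$ by a finite-extension construction, together with a partial computable martingale $d$. At stage $e$ we handle the $e$-th total computable martingale $M_e$ (partial ones are skipped when they fail to converge). We extend the current prefix $\sigma$ by a block on which $M_e$ does not increase, using the standard ``choose the bit on which $M_e$ does not gain'' argument. That argument is applied to a suitable weighted sum $\sum_e 2^{-e} M_e$, so that no total computable martingale succeeds on $X$. We interleave coding blocks on which $d$ bets its capital and gains a constant factor. This makes $d(X[1\ldots n])\to\infty$.

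The design of $d$ is the delicate part. On a string $\tau$, $d$ bets only after a certain enumeration (e.g. of the halting set, or of the indices $e$ that have ``acted'') reaches a threshold. The point is that no total computable martingale can imitate $d$: if some $M_e$ could predict when $d$ bets, it would compute information that the construction forces to be noncomputable. So $d$ is genuinely partial, and $X$ stays computably random.

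Second, I would define the probabilistic martingale $D$. At each coding block $n$, $D$ does not attempt to decide the convergence of $d$. Instead it uses fresh random bits to guess a single finite parameter $g_n$, for instance the number of indices $e\le n$ that ever act, or equivalently the number of halting computations among a finite list. The guess is drawn from a distribution that puts mass at least $1-2^{-n}$ on the candidate that the construction makes correct. Once it has guessed $g_n$, $D$ runs the enumeration until $g_n$ events have been seen. This halts whenever the guess is not too large. $D$ then copies $d$'s bet on that block, and otherwise does not bet, so its capital stays constant.

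If every guess is correct, $D$ behaves exactly like $d$ along $X$ and so succeeds. That event has probability at least $\prod_n (1-2^{-n})>0$. A wrong guess only causes $D$ to abstain, or to wait forever on some block; each block must therefore be structured so that waiting is harmless, with the waiting done between bets.

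The main obstacle is the tension in the construction. On one hand, the information $D$ guesses must be noncomputable, otherwise some computable martingale could simulate $D$ and $X$ would not be computably random. On the other hand, it must be guessable with probability close to $1$ block by block, and no fixed distribution over a fast-growing, non-computable function can achieve that. My first attempt would be to arrange the construction so that the relevant parameter at block $n$ is correct for a default value (say ``nothing new has acted'') except at a sparse set of blocks. Those exceptional blocks would have a summable probability cost, and the coding blocks would be placed only where the default guess is correct. This is a finite-injury style argument, in which the guesser pays only for finitely many injuries per requirement. If that fails, I would fall back on letting $D$ randomly choose which infinite subsequence of blocks to bet on. This would spread the risk, using the fact that a random choice can avoid a fixed null set of bad choices with positive probability.
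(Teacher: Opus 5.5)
The paper does not prove this statement; it quotes it from \cite{BienvenuDS2022}. There it is obtained by a fireworks-type argument, and the sequence $X$ constructed is even \emph{partial} computably random, so no partial computable martingale is simulated. Your route is therefore genuinely different, and as written it has a fatal gap.

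The core mechanism defeats itself: if it worked, it would show that $X$ is \emph{not} computably random.
\begin{itemize}
\item Your guesses $g_n$ are drawn independently, the correct value has probability $p_n\geq 1-2^{-n}$, and success rests on the event that all guesses are correct.
\item The law of $g_n$ is generated from fair coins, so the mass of each value is computable from below. Hence for every $n\geq 2$ the correct value is the unique value of mass $>1/2$, and it can be found effectively by enumerating that law.
\item Consider the deterministic machine that uses these values (hard-coding $g_1$) instead of random guesses. It is total, since every wait terminates when the counts are exact. It behaves exactly like $D$ on the event ``all guesses correct''. It is therefore a total computable martingale that copies $d$ along $X$ and succeeds on $X$.
\end{itemize}
The ``sparse default'' repair does not help. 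It still needs $p_n\to 1$, so the same majority vote applies. Moreover, if coding blocks sit where the default is correct, then either their positions are computable, and a deterministic gambler bets there too, or $D$ must guess them as well. In general, any scheme in which the coins only supply advice that is correct block by block, with probability tending to $1$, is derandomized this way. You name this tension in your last paragraph but do not resolve it. The fallback of randomly choosing an infinite subsequence of blocks is exactly where a new idea is needed (the role played by the fireworks argument in the cited work), and it is not developed.

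There is also a totality problem: your $D$ diverges with positive probability, namely whenever a guess overshoots.
\begin{itemize}
\item \emph{If the random strategy must be total for almost every coin sequence}, this is forbidden. The requirement in fact blocks the whole guess-then-simulate approach. Wait until coin sequences of total measure $\geq 1-2^{-n}$ have produced their $n$-th guess, and let $h(n)$ be the largest of these guesses. Then $h$ is computable. By Borel--Cantelli, almost every coin sequence has all but finitely many guesses $\leq h(n)$. So simulating $d$ deterministically with advice $h$ copies $d$ on at least the same blocks along $X$, and therefore also succeeds. This holds for advice such as time bounds or halting counts, where a larger admissible value never gives less information.
\item \emph{If only the successful runs must be total}, divergence elsewhere is harmless. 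But then you would need advice that is correct only infinitely often and that no computable bound captures (for instance, produced by a procedure that may itself diverge). You would also need a construction of $X$ under which this happens with positive probability. None of this is in the proposal.
\end{itemize}
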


This result is rather counter-intuitive: it says that there is a sequence~$X$ which cannot be algorithmically predicted, but if somehow we use a random source (which is independent of~$X$, hence should carry no information about it), then suddenly we are able to guess~$X$ with some positive (and in fact, as close to~$1$ as we want) probability. \\

In light of these results, it is natural to ask what the situation is for probabilistic automata: Which sequences can be defeated by martingales computable by a probabilistic automaton, in the sense of $\mathbb{P}$-success, $\mathbb{P}1$-success or $\mathbb{E}$-success? Does the Schnorr-Stimm theorem (Theorem~\ref{thm:schnorr-stimm}) still hold for these stronger prediction models? A slightly different (but roughly equivalent\footnote{They considered a different prediction model, known as selectors, but it is an easy exercise to turn a winning selector into a winning martingale and vice versa.}) version of this question was first considered by Lechine, Seiller and Simonsen in~\cite{LechineSS2024}, where they (essentially) showed that probabilistic automata whose transition probabilities are rational cannot $\P$-succeed against a normal sequence. This was later extended by Bienvenu, Pulari and Gimbert~\cite{BienvenuGP2025} to all probabilistic automata.

\begin{theorem}[\cite{BienvenuGP2025,LechineSS2024}]
A sequence $X$ is normal if and only if no martingale computable by probabilistic automaton can $\P$-succeed on~$X$ (if and only if no martingale computable by probabilistic automaton can $\P1$-succeed on~$X$). 
\end{theorem}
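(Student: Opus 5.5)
The statement has two directions, and the plan handles them separately. The direction ``$X$ not normal $\Rightarrow$ some probabilistic-automaton martingale $\P1$-succeeds (hence $\P$-succeeds)'' is immediate from Theorem~\ref{thm:schnorr-stimm}. If $X$ is not normal, there is a deterministic finite-automaton martingale $D$ that succeeds on $X$. A deterministic automaton is a degenerate probabilistic automaton whose transition probabilities are all in $\{0,1\}$. Its capital is then a deterministic quantity, so success happens with probability $1$, which gives both $\P1$- and $\P$-success. Since $\P1$-success implies $\P$-success, the whole statement reduces to one claim: if $X$ is normal, then no probabilistic-automaton martingale $\P$-succeeds on $X$.

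For that claim I would first set up the model. Take a probabilistic automaton with states $Q$, an initial distribution, transition matrices $\{\T_a\}_{a\in\alph}$, and betting functions $\betfunc(q,a)$ with $\sum_a \betfunc(q,a)=1$. Read $X$ and run the random state trajectory $q_0q_1\ldots$. The capital is then
\[
D_n=\prod_{i<n} |\alph|\,\betfunc(q_i,X[i+1]).
\]
The first step is to replace ``success with positive probability'' by a statement about a single random run that we can compare to a deterministic object. Condition on the random source, i.e.\ on the infinite sequence of coin flips $r$. For fixed $r$ the automaton becomes a deterministic device, but it has infinitely many states, so Theorem~\ref{thm:schnorr-stimm} does not apply directly. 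Instead I would use the \emph{joint} process $(X[n], q_n)$. Normality of $X$ means that the empirical $k$-block frequencies of $X$ converge to uniform for every $k$. By a Birkhoff/ergodic-type argument for the finite Markov chain driven by $X$, this should imply the following: almost surely, the empirical frequencies of the pairs (state, next symbol) along the run are asymptotically those of a process in which the next symbol is uniform and independent of the current state. The idea is that the state depends only on the past symbols and on fresh coins, so the next symbol of a normal sequence cannot correlate with it.

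Granting this, the log-capital satisfies
\[
\frac1n\log D_n\approx \sum_q \mu_n(q)\,\frac1{|\alph|}\sum_a \log\bigl(|\alph|\,\betfunc(q,a)\bigr)\le 0 ,
\]
by concavity of $\log$ (Jensen). Equality holds only at states where the bet is uniform. Two cases follow. If the run spends a positive frequency in non-uniform-betting states, the capital decays exponentially almost surely. Otherwise the capital changes only on a density-zero set of steps, and I would need to bound those changes. For the second case I would use the fact that the capital is a nonnegative martingale under a uniformly random input, together with the bounded number of states, to show it stays bounded.

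The main obstacle is the ``decorrelation'' step. Normality is a property of frequencies only, and the coin flips introduce genuine randomness that could synchronize with patterns of $X$. Concretely, one must show that for a.e.\ $r$ the frequency of (state $q$, next symbol $a$) is close to the frequency of $q$ times $1/|\alph|$. I would first try to prove this in expectation over $r$: the expected frequency is governed by products $\v\,\T_{X[i+1]}\cdots\T_{X[j]}$, and by normality these products can be averaged over all words of length $k$. Then I would upgrade to almost-sure convergence via a martingale strong law, since the deviations form martingale differences with respect to the coin filtration. Here the nonnegative-matrix-product analysis announced in the abstract (convergence of $\norm{\v\M_{X[1]}\cdots\M_{X[n]}}$) is exactly the tool I would expect to need, with $\M_a$ the matrix $\T_a$ reweighted by $|\alph|\betfunc(\cdot,a)$.
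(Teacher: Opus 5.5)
Your reduction is fine: deterministic automata are probabilistic automata with $0/1$ transitions, so Theorem~\ref{thm:schnorr-stimm} gives $\P1$-success, hence $\P$-success, against any non-normal $X$. Everything therefore reduces to showing that no probabilistic automaton $\P$-succeeds on a normal $X$. The paper only quotes this statement from the literature and gives no proof, so I judge that direction on its own merits.

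There the proposal has a genuine gap, exactly at the step you flag. You need that, for almost every coin sequence, $\sum_{n<N}\mathbf{1}[q_n=q]\,\bigl(\mathbf{1}[X[n+1]=a]-1/|\alph|\bigr)=o(N)$ for all $q,a$. Establishing this in expectation is not enough, and your proposed upgrade does not work. For fixed $X$, these summands are not martingale differences for the coin filtration, since $\mathbf{1}[q_n=q]$ is not centred given the past coins. The natural martingale differences are $\mathbf{1}[q_n=q]-\T_{X[n]}(q_{n-1},q)$, but substituting them only pushes the same correlation question back to the pair $(q_{n-1},X[n]X[n+1])$. Nor do run frequencies concentrate around their means: if the first coin sends the automaton into one of two disjoint deterministic sub-automata, the state frequencies are genuinely random.

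The missing idea is an estimate that is uniform in the starting state:
\begin{enumerate}
\item For a uniformly random block $u\in\alph^\ell$, the within-block decorrelation sum is a true martingale (in $u$ and the coins jointly). By Azuma it is small with high probability from every starting state.
\item By Markov's inequality, most $u\in\alph^\ell$ are such that, from every state, fresh coins make the block good with high probability.
\item Normality makes most aligned $\ell$-blocks of $X$ such words. Coins in different blocks are independent, so a martingale strong law over blocks gives the almost-sure statement.
\end{enumerate}
Equivalently, you can absorb the coins into the input. Refine all transition distributions into one finite partition, so the automaton becomes deterministic over $\alph\times K$ and reads $(X,J)$ with $J$ i.i.d. For a.e.\ $J$, $(X,J)$ is normal for the product Bernoulli measure, and the Schnorr--Stimm analysis applies. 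For rational probabilities this is literally Theorem~\ref{thm:schnorr-stimm} over the product alphabet.

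The second case is also not handled. Frequency-level decorrelation plus Jensen says nothing about bets placed on a density-zero set of times. The fact that the capital is a martingale under a \emph{uniformly random} input says nothing about a fixed $X$. The same objection applies to invoking the paper's matrix-product theorem: $\norm{\v\M_{X[1\ldots n]}}$ is the expected capital $\mathbb{E}[C_n^X]$. For fixed $X$ the capital is not a martingale over the coins, so a bound on expectations does not by itself exclude $\limsup_n D_n=\infty$ with positive probability. This is why the paper treats $\mathbb{E}$-success as a separate question.

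What you need is the structural fact behind Schnorr--Stimm. Consider the graph with an edge $q\to q'$ whenever $\T_a(q,q')>0$ for some $a$. Almost surely the run enters a bottom strongly connected component of this graph, because a BSCC-synchronizing word over $\alph\times K$ occurs in $(X,J)$. Every state of that component is then visited with positive frequency, because a word over $\alph\times K$ passing through a given state from every starting state also occurs with positive frequency. So either some state of the component bets, is visited with positive frequency, and your Jensen step gives exponential loss; or none bets and the capital is eventually constant. The ``sparse betting'' case therefore never occurs, but this has to be proved rather than dismissed.
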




In this paper, we ask whether this last theorem also holds for $\mathbb{E}$-success. This is a much more intricate question as the expectation of a martingale computable by probabilistic automaton cannot in general be itself computed/simulated by an automaton. We reduce the question to a random matrix multiplication problem. Let $\v$ be a non-negative initial vector in $\mathbb{R}^m$, let $\{\M_a\}_{a \in \alph}$ be a family of non-negative $m \times m$ matrices satisfying for all non-negative vectors~$\u$:
\begin{align*}
\norm{\u}\geq \frac{\sum_{a \in \alph} \norm{\u \M_a}}{|\alph|}
\end{align*}
(where $\norm{\u} = \sum_i |\u(i)|$) and let~$X=a_1 a_2 \ldots$ be a normal infinite sequence over the alphabet~$\alph$. Then our initial question about $\mathbb{E}$-success amounts to asking whether the norm
\[
\norm{\v \M_{a_1} \M_{a_2} \ldots \M_{a_n}}
\]
remains bounded or not. We prove that it indeed remains bounded, and furthermore converges to a finite limit with exponential speed as $n \rightarrow \infty$. As a corollary, we will obtain that normality also prevents $\mathbb{E}$-success.

\begin{theorem}
A sequence $X$ is normal if and only if no martingale computable by probabilistic automaton can $\mathbb{E}$-succeed on~$X$. 
\end{theorem}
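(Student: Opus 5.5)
One direction is immediate. If $X$ is not normal, Theorem~\ref{thm:schnorr-stimm} gives a deterministic finite-automaton martingale that succeeds on $X$. A deterministic automaton is a degenerate probabilistic automaton whose transition probabilities are all $0$ or $1$. Its capital is then deterministic, so its expectation is unbounded and it $\mathbb{E}$-succeeds on $X$.

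For the converse I would pass to the matrix formulation sketched above. Fix a probabilistic automaton with states $1,\dots,m$, a start distribution $\v$, and for each state $q$ a betting function $\beta_q:\alph\to\R_{\ge 0}$ with $\sum_a \beta_q(a)=|\alph|$ (fair odds). Reading letter $a$ from state $q$ moves to $q'$ with probability $p_a(q,q')$. The capital is multiplied by $\beta_q(a)$, with the convention chosen so that the bet placed depends on the state before reading $a$.

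I would track the vector $\u_n(q') = \mathbb{E}[\text{capital after } n \text{ letters};\ \text{state}=q']$. By linearity, $\u_n = \v\M_{a_1}\cdots\M_{a_n}$, where $\M_a(q,q') = \beta_q(a)p_a(q,q')$, and $\mathbb{E}(D(X[1\ldots n])) = \norm{\u_n}$. For non-negative $\u$ we get $\sum_a \norm{\u\M_a} = \sum_q \u(q)\sum_a \beta_q(a) = |\alph|\,\norm{\u}$. So the family satisfies the averaging condition above, in fact with equality.

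The main theorem of the paper, which I am allowed to invoke, says that for such a family and normal $X$ the norm $\norm{\v\M_{a_1}\cdots\M_{a_n}}$ converges to a finite limit. Hence the expected capital is bounded and $D$ does not $\mathbb{E}$-succeed.

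The main obstacle is checking that the paper's model of probabilistic-automaton martingales really yields this linear form. Three points need care.
\begin{itemize}
\item Bets may depend on the current state and possibly on internal randomness. Randomised bets can be absorbed by enlarging the state space or by averaging $\beta$, and linearity of expectation keeps the representation valid.
\item Transition probabilities may be arbitrary reals. This is harmless, since the matrix theorem allows any non-negative entries.
\item If capital-dependent bets (proportional betting) are used, they are still linear in the current capital, so the representation is unaffected.
\end{itemize}
Once this reduction is checked, the theorem follows from the matrix result together with Schnorr--Stimm.
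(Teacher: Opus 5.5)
Your proof is correct and follows the paper's argument: the paper likewise encodes the automaton as the fair family $\M_a = \D_a\T_a$, with $\M_a(s,s') = \betfunc(s,a)\,\delta(s,a)(s')$, and shows by induction that the $s$-th coordinate of $\v\M_{X[1\ldots n]}$ equals $\mathbb{E}[C_n^X \mid S_n^X=s]\,\P(S_n^X=s)$; it then invokes Theorem~\ref{thm:main} and gets the converse from Schnorr--Stimm via deterministic automata. Your extra remarks about randomised or capital-dependent bets are not needed, since in the paper's model the capital multiplier is the fixed function $\betfunc(s,a)$ of the state before the transition.
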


\subsection{Notation and terminology}

Throughout the paper, $\alph$ is a fixed finite alphabet and $m$ a fixed positive integer (the size the matrices in the family we'll consider).

We denote by $\alph^*$ the set of words over $\alph$. The empty word is denoted by $\emptystr$. Given $w \in \alph^*$, we write $|w|$ for the length of $w$. For $i< |w|$, $w[i]$ is the $i$-th letter of $w$, and for $i \leq j \leq |w|$, $w[i \ldots j]$ is the sub-word $w[i] w[i+1] \ldots w[j]$. We write $w \sqsubseteq w'$ to mean that $w$ is a prefix of $w'$, and $w \sqsubset w'$ when $w$ is a strict prefix of $w'$ (that is, $w \sqsubseteq w'$ and $|w|<|w'|$). The set of infinite sequences is denoted by $\alph^\omega$, and we define $X[i]$ and $X[i \ldots j]$ in the same way as for words. We sometimes need to choose a random word or sequence. When we say that we choose a sequence $X \in \alph^\omega$ \emph{uniformly at random}, we mean that each letter $X[i]$ is drawn at random in~$A$ with all letters having probability $1/|A|$, and for $i<j$, the values of $X[i]$ and $X[j]$ are independent. The probability measure induced by this process on the set $\alph^\omega$ is referred to as the uniform measure, or Lebesgue measure. 


The set of $n \times k$ real-valued matrices is denoted by $\mat{n}{k}{\R}$ (and, likewise, $\mat{n}{k}{\R_{\geq 0}}$ or $\mat{n}{k}{\R_{> 0}}$ for non-negative and positive $n \times k$ matrices). For $i,j$ indices, $\M(i,j)$ is the value of the matrix~$\M$ at position $(i,j)$. Most matrices considered below are square $m \times m$ matrices, but for $C,D \subseteq \interval$ of size $n$ and $k$ respectively, $\M^{[C \times D]}$ is the $n \times k$ submatrix of~$\M$, obtained by keeping only the entries $\M(i,j)$ of $\M$ such that $i \in C$ and $j \in D$.

Vectors $\R^m$ are identified with $1$-line matrices (i.e., elements of $\genvect$) and we write $\v(i)$ for $\v(1,i)$. For $1 \leq i \leq m$, $\e_i$ is the canonical basis vector such that $\e_i(i)=1$ and $\e_i(j)=0$ for $j \not=i$. The zero vector is written~$\zero$. The norm of a vector $\v$ is the quantity $\norm{\v} = \sum_i |\v(i)|$. Observe that when a pair $\v,\v'$ of vectors is nonnegative (as are most vectors in this paper), we have $\norm{\v+\v'}=\norm{\v}+\norm{\v'}$. 
We define $\U_n$ to be the subset of $\vect$ consisting of vectors of norm~$1$ (a.k.a. \emph{unitary} vectors), and $\U_m^+$ as the subset of $\U_m$ consisting of vectors with positive coordinates. 

Given $\v \in \vect$ a nonnegative vector, we call \emph{support} of $\v$ the set $\{i \in \interval \mid \v(i)>0\}$ and denote it by $\supp(\v)$. Given $E \subseteq \interval$, we denote by $\R^m_{\geq 0}(E)$ the set of unitary vectors whose support is exactly~$E$ and by $\U_m^+(E)$ the set of unitary vectors whose support is exactly~$E$. 

As stated before, we will be interested in studying, given some normal sequence $X = a_1 a_2\ldots$ over alphabet $\alph$, the behaviour of the sequence $(\v \M_{a_1} \M_{a_2} \ldots \M_{a_n})_{n \in \mathbb{N}}$ for any initial non-negative vector $\v$ under the assumption that the family of matrices $\{\M_a\}_{a \in \alph}$ is \textit{fair} in the following sense.

\begin{definition}
A family $(\M_a)_{a \in \Sigma}$ of matrices in $\matm$ is said to be \emph{fair} if for every vector $\v \in \vect$
\[
\norm{\v} = \frac{\sum_{a \in \alph} \norm{\v \M_a}}{| \alph |}
\]

It is said to be \emph{superfair} if for every vector $\v \in \vect$
\[
\norm{\v} \geq  \frac{\sum_{a \in \alph} \norm{\v \M_a}}{| \alph |}
\]
\end{definition}

Note that, rather awkwardly, fairness implies superfairness, but not the other way around. Nonetheless, we use this terminology to resemble the concepts of martingale and supermartingale that will be defined later.\\

From here on, the superfair family $\{\M_a\}_{a \in \alph}$ is fixed once and for all. For~$w=a_1 \ldots a_k$, we write $\M_w$ for the product $\M_{a_1} \ldots \M_{a_k}$ (and $\M_\emptystr$ is the identity matrix). 

Our main theorem is the following: 

\begin{theorem}\label{thm:main}
Let $(\M_a)_{a \in \alph}$ be a superfair family of matrices. If $X \in \alph^\omega$ is a normal sequence and $\v \in \vect$ a nonnegative vector,  then the sequence $(\norm{\v \M_{X[1 \ldots n]}})_{n \in \N}$ converges to a finite limit with exponential speed. 
\end{theorem}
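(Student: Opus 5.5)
We will prove Theorem~\ref{thm:main} by replacing the deterministic normal sequence with a uniformly random one, reading off a finite-time contraction phenomenon from the random model, and then transferring it to $X$ through frequencies of blocks.

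The starting point is the random model. Let $Y$ be uniform on $\alph^\omega$. Superfairness says exactly that $\v \mapsto \norm{\v\M_{Y[1\ldots n]}}$ is a nonnegative supermartingale under the uniform measure. It therefore converges almost surely, and its expectation is non-increasing. The difficulty is that $X$ is a single fixed sequence. The only information about it that we may use is that every word $w$ of length $k$ occurs with frequency $|\alph|^{-k}$, including along non-overlapping blocks of any fixed length $k$.

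The first step is a structural decomposition of the index set $\interval$. Coordinates on which mass is lost, meaning strict superfairness, will be called \emph{leaking}. Consider the directed graph on $\interval$ with an edge $i\to j$ when $\M_a(i,j)>0$ for some $a$. Following Schnorr--Stimm, we look at its closed (bottom) strongly connected components. On a closed component $C$ where no leak is reachable, the restricted family $\{\M_a^{[C\times C]}\}$ is fair. Every other coordinate either eventually feeds into such components or can reach a leaking coordinate.

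The first claim concerns the transient part. There exist $k$ and a word $w$ of length $k$ such that, for every vector supported on the non-closed-fair coordinates, a definite fraction of its mass either leaks or is transferred into the closed fair components when $w$ is read. In averaged form, $\frac{1}{|\alph|^k}\sum_{|u|=k}\norm{\u\M_u}$ is strictly less than $\norm{\u}$ on those coordinates. We obtain this from compactness of $\U_m$ together with reachability in the graph.

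The second claim concerns the fair components. On a fair component the total mass is conserved in expectation, so mass entering such a component behaves like a fair martingale. We need its norm along $X$ to converge rather than oscillate. Here we would use the fact that $\u\mapsto\norm{\u\M_w}$ is linear on nonnegative vectors: $\norm{\u\M_w}=\u\cdot\M_w\one$. So the quantity we track is $\v\M_{X[1\ldots n]}\one$. If we set $h_n = \M_{X[n+1\ldots]}$-dependent values, this becomes a telescoping question. More directly, the row-sum vector $\mathbf{r}_w=\M_w\one$ satisfies $\frac1{|\alph|}\sum_a \M_a \mathbf{r}_{w}\le \mathbf{r}_w$ entrywise.

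Fairness on $C$ gives $\frac{1}{|\alph|}\sum_a\M_a\one=\one$ on $C$. Hence $\norm{\u\M_w}=\norm{\u}$ on average, but individual words can increase the norm. The key point is that the increment $\norm{\v\M_{wa}}-\norm{\v\M_w}$ equals $(\v\M_w)\cdot(\M_a\one-\one)$. Summed over $a$ this increment is zero, so the change over a block depends on the ``direction'' $\v\M_w/\norm{\v\M_w}$ and on the letters read. We would show that this direction forgets its past exponentially fast along a synchronizing-type word for the projective action, in the spirit of a Birkhoff contraction argument on $C$. Once that holds, the change over each length-$k$ block of $X$ is approximately a function of the block alone, plus an exponentially small error. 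Normality then makes the averaged increments cancel. The main obstacle is that individual block increments need not vanish and only their average does. Cancellation along $X$ only yields $o(n)$, which is not enough for convergence.

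The resolution we would pursue is to show that on fair components the $\M_a$ must in fact preserve the norm of every vector that can occur along $X$. Fairness forces $\norm{\u\M_a}$ to be constant in $a$ for all directions $\u$ that are reachable, because otherwise a Schnorr--Stimm-style deterministic automaton that tracks a discretized direction could gain on non-normal frequencies. Turning this into a proof, and handling the case where contraction fails because the projective action is periodic, is where we expect most of the work.

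Assuming this, exponential convergence follows directly. Let $S_n$ denote the mass on the transient part. By the first claim and normality, a positive density of length-$k$ blocks of $X$ equals $w$, and each such block multiplies $S_n$ by at most some $\rho<1$. Hence $S_n\le C\rho'^n$. The mass in the fair components is nondecreasing up to $S_n$-sized inflow and is otherwise preserved. So $\norm{\v\M_{X[1\ldots n]}}$ is Cauchy with exponentially small tails, which gives Theorem~\ref{thm:main}.
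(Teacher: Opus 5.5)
\textbf{Main gap: the fair components.} The decisive step fails. Your proposed resolution for the fair closed components is that fairness forces $\norm{\u\M_a}$ to be independent of $a$ for every direction $\u$ reachable along $X$. This is false, and it is exactly where the content of the theorem lies. Take $m=1$, $\alph=\{0,1\}$, $\M_0=(3/2)$, $\M_1=(1/2)$. The family is fair and the single coordinate is a closed leak-free component. Yet $\norm{\v\M_{X[1\ldots n]}}=(3/4)^{n/2+o(n)}\norm{\v}\to 0$ along every normal $X$. The paper's two-state example with $p_1\neq 1/2$ is a strongly connected instance of the same phenomenon, with full supports.

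Your Schnorr--Stimm justification is also inverted. That a betting automaton could win on some non-normal sequence constrains nothing along a normal $X$. What Schnorr--Stimm actually yields is that a betting automaton \emph{loses} exponentially on normal sequences.

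The missing idea is this second alternative and its mechanism. By concavity of $\ln$, $L^\v(w)=\ln\norm{\v\M_w}$ is a supermartingale whose one-step drift is at most $-\Delta(\v\M_w)$, and $\Delta$ depends only on the direction. The paper fixes a set $F$ in the non-null BSCC of the support automaton, a word $w$ pseudo-mixing $F$, and the attracting direction $\x$. This construction is also how it handles the nonnegativity and periodicity issues you flag. It then splits into two cases.
\begin{itemize}
\item If $\Delta(\x\cdot z)>0$ for some $z$, forgetfulness gives $\Delta(\v\M_{w^kz})>\delta$ for every $\v$ of support $F$. So the cumulative risk grows linearly along random sequences. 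One-sided Azuma applied to $L+\Gamma$, whose increments are bounded above, then makes $\ln\norm{\cdot}$ decrease linearly, and the limit is $0$.
\item Only when the whole orbit of $\x$ is non-betting can a positive limit occur. There the increments are not shown to cancel on average; they are shown to be exponentially small themselves. This uses $|\norm{\v\M_a}-\norm{\v}|\le|\alph|\Live(\v)$ together with exponential decay of $\Live$, which comes from contraction in $\dh$ toward the non-betting orbit.
\end{itemize}
Your $o(n)$ obstacle is removed by this dichotomy, not by any norm-preservation property.

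\textbf{Secondary gap: the transient step.} As written, this step is also flawed. A positive density of blocks of $X$ equal to a contracting word $w$ gives no bound on $S_n$. The remaining blocks have density $1-|\alph|^{-k}$ and may each multiply $S_n$ by up to $|\alph|^k$.

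The averaged form is the usable one, and it is easy: the average matrix $\frac{1}{|\alph|}\sum_a\M_a$ restricted to transient coordinates is substochastic, every index reaches a row-sum deficit, so its powers decay geometrically. But it must be transferred as in the paper's Case 1. By Markov's inequality and linearity of $\u\mapsto\norm{\u\M_z}$ on nonnegative vectors, all but an $\eta$-fraction of words $z$ of large length $N$ contract every transient vector by $e^{-cN}$. The bad blocks expand by at most $|\alph|^N$. Normality then controls the frequency of bad blocks among the aligned length-$N$ blocks of $X$.

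Repaired this way, your treatment of the transient part is arguably more direct than the paper's induction on SCCs, which handles a leaking top component through Proposition~\ref{prop:unfair_implies_loss} and the Case 1 machinery. But it leaves the fair components, where the actual difficulty lies, unresolved.
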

\noindent (where we say that a sequence of reals $(x_n)$ converges exponentially fast towards a limit~$l$ if $|x_n-l| = O(e^{-\alpha n})$ for some $\alpha >0$). The rest of the paper will be devoted to the proof of this theorem. We will first show it with an extra assumption, and later see how to treat the general case. For now, we assume that the following holds: 

\begin{center}
$(\bigstar)$ For any $i,j \in \interval$, there exists $w \in \alph^*$ such $\M_w(i,j)>0$ 
\end{center}

But before proving this theorem, let us see how to apply it to characterize the sequences that probabilistic automata can predict.

\subsection{Probabilistic finite betting automata}

A probabilistic finite betting automaton is given by a tuple $(\S, \alph, \delta, s_0, \betfunc)$ where $\S$ is a set of states, $\alph$ is our alphabet, $\delta: \S \times \alph \to \mathscr{D}(\S)$ is
a stochastic transition function that determines the probability distribution over the next state given the current one and a symbol from the alphabet, $s_0 \in \S$ is the initial state and $\betfunc : S \times \alph \to \R_{\geq 0}$
is a betting function that assigns a bet to every next symbol given the current state. To guarantee that the betting is fair, we impose the condition
\begin{align*}
\sum_{a \in \alph} \betfunc(s, a) = |\alph|
\end{align*}
for every $s \in \S$. From now on, we fix some probabilistic betting automaton for the rest of the section to simplify the notation.

Given some finite word $w \in \alph^*$, the state of the automaton after reading $w$ is a random variable that we will denote as $S_w$. We define these variables inductively 
taking as base case $S_\Lambda = s_0$ and then, for $w \in \alph^*$ and $a\in \alph$, the event $S_{wa} = s$ is given by
\[
\bigcup_{s' \in \S} \left[ S_w = s' \wedge \delta_{|w|+1}(s',a) = s \right]
\]
where $\{\delta_n(s', b) : n\in \N, s' \in \S, b\in \alph\}$ is a family of independent random variables representing the transitions at each step such that the distribution of each $\delta_n(s', b)$ is the same
as $\delta(s', b)$. We will be interested in runs of the automaton against infinite sequences of the form $X \in \alph^\omega$. Then, for ease of notation, we will write $S^X_n$ to refer to the variable $S_{X[1\ldots n]}$.

The capital of the automaton at step $n$ when executed against an infinite sequence $X$ is described by a random variable $C_n^X$ defined inductively as
\begin{align*}
    &C_0^X = 1\\
    &C_{n+1}^X = C_n^X \betfunc(S_n^X, X_{n+1})
\end{align*}
We say that the automaton $\mathbb{E}$-succeeds against an infinite sequence $X$ if 
\[
\limsup_{n\to \infty} \mathbb{E}[C_n^X] = \infty
\]
\begin{example}
    \normalfont See Figure~\ref{fig:automata_example} for an example of a probabilistic betting automaton over alphabet $\alph = \{a, b\}$. Given the infinite sequence $X = b b \ldots$ it holds that $S_0^X = s_0$, $S_1^X$ equals $s_2$ with probability $1/2$ and $s_3$ with the same probability, and finally
    $S_2^X$ is distributed in the same way as $S_1^X$. The expected capital at the second step is $\mathbb{E}[C_2^X] = 1.5$. A simple way to compute this is to observe that by the second step there are two ``paths'' that the automaton could have taken: $s_0 s_2 s_2$ and $s_0 s_3 s_3$. Both have 
    the same probability, and they both achieve a capital equal to $1.5$. 
\end{example}

\begin{figure}[ht]
    \centering
    \resizebox{0.7\textwidth}{!}{%
    \begin{tikzpicture}[>=stealth,thick,scale=0.7]

        \tikzset{
        state/.style={draw,circle,minimum size=1.8cm},
        accepting/.style={draw,double,circle,minimum size=1.8cm}
        }

        \node[accepting, align=center] (q0) at (-3,4) {$s_0$\\$1$};
        \node[state, align=center]     (q1) at (5,4) {$s_1$\\$1.5$};
        \node[state, align=center]     (q2) at (-3,-2) {$s_2$\\$0.5$};
        \node[state, align=center]     (q3) at (5,-2) {$s_3$\\$0.5$};

        \draw[->] (q0) to[bend left=20] node[above]{$a$, 1} (q1);
        
        \draw[->] (q0) -- node[left]{$b$, 1/2} (q2);

        \draw[->] (q2) -- node[below]{$a$, 1/2} (q3);

        \draw[->] (q1) to[bend left=13] node[right]{$b$, 1} (q3);
        \draw[->] (q3) to[bend left=13] node[left]{$a$, 1/2} (q1);

        \draw[->] (q0) to[bend left=6] node[pos=0.2,right,xshift=2mm]{$b$, 1/2} (q3);
        \draw[->] (q3) to[bend left=6] node[pos=0.2, left, xshift=-2mm]{$a$, 1/2} (q0);

        \draw[->] (q1) edge[loop right] node{$a$, 1} ();
        \draw[->] (q2) edge[loop left] node[align=center]{$b$, 1\\ $a$, 1/2} ();
        \draw[->] (q3) edge[loop right] node{$b$, 1} ();

    \end{tikzpicture}
    }
    \caption{Example of a probabilistic betting automaton over alphabet $\alph = \{a, b\}$ with initial state $s_0$. Each edge $s_i \overset{c, p}{\rightarrow} s_j$ indicates that
     $\delta(s_i, c)(s_j) = p$ (if there is no edge with label $c \in \alph$ between two nodes then the transition probability is 0). Inside each node $s$ we write the betting factor $\betfunc(s, a)$, which is bounded between~$0$ and~$2$ because of the fairness
     condition. The other betting factor $\betfunc(s, b)$ is defined implicitly as $2-\betfunc(s, a)$.}
     \label{fig:automata_example}
\end{figure}
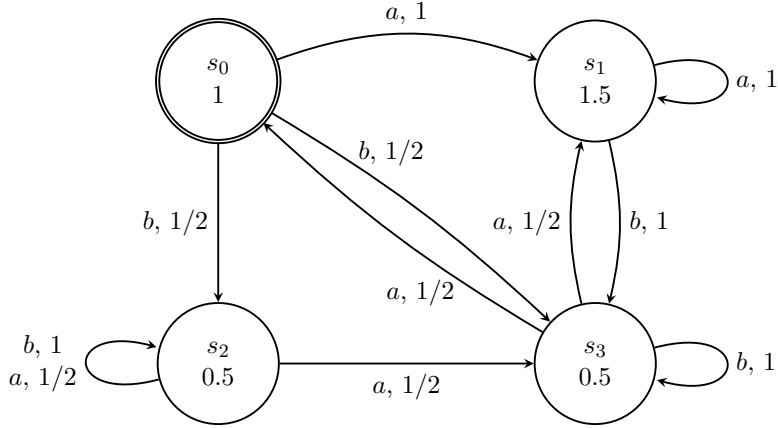

Our result regarding the notion of $\mathbb{E}$-success is the following.

\begin{theorem}\label{teo:prob_automata_fail_against_normal}
    Let $X\in \alph^\omega$ be a normal sequence. Then, it holds that $\lim_{n\to \infty} \mathbb{E}[C_n^X]$ exists, is finite, and the convergence to this value is exponentially fast.
\end{theorem}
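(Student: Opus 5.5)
The plan is to reduce Theorem~\ref{teo:prob_automata_fail_against_normal} to Theorem~\ref{thm:main} by encoding the expected capital as the norm of a vector–matrix product. Index the coordinates by the states of $\S$, so $m=|\S|$. For each $a \in \alph$ define the nonnegative matrix $\M_a$ by
\[
\M_a(s,s') = \betfunc(s,a)\,\delta(s,a)(s'),
\]
and let the initial vector be $\v = \e_{s_0}$. By induction on $n$, using the independence of the transition variables $\delta_n(s',b)$, we will show that for every $s \in \S$
\[
\mathbb{E}\bigl[C_n^X \cdot \mathbf{1}[S_n^X = s]\bigr] = \bigl(\v \M_{X[1\ldots n]}\bigr)(s).
\]
The base case is $C_0^X=1$, $S_0^X=s_0$. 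For the inductive step, condition on $S_n^X = s'$. Then $C_{n+1}^X = C_n^X\betfunc(s',X_{n+1})$, and the next state is drawn from $\delta(s',X_{n+1})$ independently of the history up to step $n$. This independence holds because $\delta_{n+1}$ is independent of $\delta_1,\dots,\delta_n$, which determine both $C_n^X$ and $S_n^X$. Hence
\[
\mathbb{E}\bigl[C_{n+1}^X \mathbf{1}[S_{n+1}^X=s]\bigr] = \sum_{s'} \mathbb{E}\bigl[C_n^X\mathbf{1}[S_n^X=s']\bigr]\,\betfunc(s',X_{n+1})\,\delta(s',X_{n+1})(s).
\]
This is exactly one step of the matrix product. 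Summing over $s$, and noting all entries are nonnegative, gives $\mathbb{E}[C_n^X] = \norm{\v \M_{X[1\ldots n]}}$.

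Next we check that the family $\{\M_a\}_{a\in\alph}$ is fair, hence superfair. Each row of $\M_a$ sums to $\betfunc(s,a)$, since $\delta(s,a)$ is a probability distribution. Therefore, for $\u \in \vect$,
\[
\sum_a \norm{\u\M_a} = \sum_s \u(s)\sum_a \betfunc(s,a) = |\alph|\,\norm{\u},
\]
using the fairness constraint $\sum_a \betfunc(s,a)=|\alph|$.

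Theorem~\ref{thm:main} then applies directly to this superfair family, the normal sequence $X$, and $\v=\e_{s_0}$. It yields that $\mathbb{E}[C_n^X]$ converges to a finite limit exponentially fast.

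The only step needing care is the inductive identity, specifically the independence of the fresh transition randomness at step $n+1$ from $(C_n^X,S_n^X)$. The rest is bookkeeping. We also note that Theorem~\ref{thm:main} is stated without assumption $(\bigstar)$, so no irreducibility of the automaton is needed here. Unreachable or non-communicating states are handled by the general case of that theorem.
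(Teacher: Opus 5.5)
Your proposal is correct and follows essentially the same route as the paper: the same matrices $\M_a(s,s')=\betfunc(s,a)\,\delta(s,a)(s')$ (the paper's $\D_a\T_a$), the same starting vector $\e_{s_0}$, the same fairness computation, and the same inductive identity, followed by an appeal to Theorem~\ref{thm:main}. Your invariant $\mathbb{E}[C_n^X\,\mathbf{1}[S_n^X=s]]$ is the paper's $\mathbb{E}[C_n^X \mid S_n^X=s]\,\P(S_n^X=s)$ written without conditioning, which avoids conditioning on null events and makes the use of the independence of the step-$(n+1)$ transitions more transparent.
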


\begin{proof}
    We will define a vector $\v$ and a fair family of matrices $\{\M_a\}_{a \in \alph}$ such that
    \begin{align}\label{eq:capital_as_product}
        \mathbb{E}[C_n^X] = \norm{\v \M_{X[1\ldots n]}}
    \end{align}
    Then, the result follows from Theorem~\ref{thm:main}.

    Identifying the states $\S = \{s_1,\ldots,s_m\}$ with their indices, consider the vector $\v = \e_{s_0}$ and let
    \begin{align*}
        \M_a = \D_a\T_a
    \end{align*}
    where $\D_a$ is the diagonal matrix given by $\D_a(s,s) = \betfunc(s, a)$ and $\T_a$ is the transition matrix related to symbol $a \in \alph$ given by $\T_a(s_1,s_2) = \delta(s_1, a)(s_2)$. Observe that this family of matrices is fair:
    \begin{align*}
        \sum_{a\in \alph} \norm{\v \M_a} &= \sum_{a \in \alph} \sum_{s,s' \in \S} \v(s) \betfunc(s, a) \delta(s, a)(s')\\
        &= \sum_{a \in \alph} \sum_{s \in \S} \v(s) \betfunc(s, a) \underbrace{\sum_{s' \in \S} \delta(s, a)(s')}_{=1}\\
        &= \sum_{s \in \S} \v(s) \underbrace{\sum_{a \in \alph} \betfunc(s, a)}_{= |\alph|}\\
        &= |\alph| \sum_{s \in \S} \v(s) = |\alph|\, \, \norm{\v}
    \end{align*}
    Let $\xi_n^X = \v \M_{X[1\ldots n]}$. We now prove by induction that
    \begin{align}\label{eq:xi_equals_expected_capital}
        \xi_n^X(s) = \mathbb{E}[C_n | S_n = s] \P\left( S_n = s \right)
    \end{align}
    This implies Eq.~\eqref{eq:capital_as_product} by the law of total probability for expectations.

    The case $n=0$ holds. For the inductive case, we note that
    \begin{align*}
        \xi_{n+1}^X(s) &= \sum_{s' \in \S} \xi_n^X(s') \betfunc(s', X_{n+1}) \delta(s', X_{n+1})(s)  \\
        &= \sum_{s' \in \S} \mathbb{E}[C_n^X | S_n^X = s'] \P\left( S_n^X = s' \right)\betfunc(s', X_{n+1}) \delta(s', X_{n+1})(s) \\
        &= \sum_{s' \in \S} \mathbb{E}[\underbrace{C_n^X\betfunc(S_n^X, X_{n+1})}_{C_{n+1}^X} | S_n^X = s'] \underbrace{\P\left( S_n^X = s' \right) \delta(s', X_{n+1})(s)}_{\P\left( S_n^X = s', S_{n+1}^X = s \right)} \\
        &= \sum_{s' \in \S} \mathbb{E}[C_{n+1}^X | S_n^X = s'] \underbrace{\P\left( S_n^X = s', S_{n+1} = s \right)}_{\P\left( S_{n}^X = s' | S_{n+1}^X = s \right) \P\left( S_{n+1}^X = s \right)} \\
        &= \P\left( S_{n+1}^X = s\right)\sum_{s' \in \S} \mathbb{E}[C_{n+1}^X | S_n^X = s'] \P\left( S_n^X = s' | S_{n+1}^X = s \right) \label{eq:include_s_in_expectancy}\\
        &= \P\left( S_{n+1}^X = s\right) \mathbb{E}[C_{n+1}^X | S_{n+1}^X = s]  
    \end{align*}
    which proves Eq.~\eqref{eq:xi_equals_expected_capital}.
\end{proof}

As an immediate corollary we conclude the following.

\begin{corollary}\label{coro:expected_success_and_normality}
    Given a sequence $X \in \alph^\omega$, the following statements are equivalent:
    \begin{enumerate}
        \item\label{enum:first} $X$ is normal.
        \item\label{enum:second} No probabilistic betting automaton $\mathbb{E}$-succeeds against $X$.
    \end{enumerate}
\end{corollary}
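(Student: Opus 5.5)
The plan is to prove the two implications separately, reusing Theorem~\ref{teo:prob_automata_fail_against_normal} for one direction and the Schnorr--Stimm theorem (Theorem~\ref{thm:schnorr-stimm}) for the other.

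For (\ref{enum:first}) $\Rightarrow$ (\ref{enum:second}), let $X$ be normal and fix any probabilistic betting automaton. By Theorem~\ref{teo:prob_automata_fail_against_normal}, the limit $\lim_{n\to\infty}\mathbb{E}[C_n^X]$ exists and is finite. A convergent sequence of reals is bounded, so $\limsup_n \mathbb{E}[C_n^X]<\infty$. Hence the automaton does not $\mathbb{E}$-succeed against $X$.

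For (\ref{enum:second}) $\Rightarrow$ (\ref{enum:first}), I argue by contraposition. Suppose $X$ is not normal. By Theorem~\ref{thm:schnorr-stimm}, some martingale computable by a finite deterministic automaton succeeds on $X$, meaning its capital is unbounded along $X$. I then view this deterministic betting automaton as a probabilistic betting automaton in the sense above.
\begin{itemize}
\item Each $\delta(s,a)$ is taken to be the Dirac distribution on the deterministic successor state.
\item The betting function $\betfunc$ is the same as in the deterministic automaton.
\end{itemize}
The fairness condition $\sum_a \betfunc(s,a)=|\alph|$ is exactly the fairness of the deterministic martingale. The run is then deterministic, so $C_n^X$ is a constant random variable equal to the deterministic capital. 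Therefore $\mathbb{E}[C_n^X]$ equals the deterministic capital, and its $\limsup$ is infinite. So this probabilistic automaton $\mathbb{E}$-succeeds on $X$, contradicting (\ref{enum:second}).

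The only real obstacle is a formalism mismatch: Schnorr--Stimm's automaton martingales must match the betting automata defined here. Their betting model may use bets that are fractions of capital placed on a letter rather than multiplicative factors. It may also attach bets to states differently, for example on the state after reading the symbol rather than before. I would check that any such martingale can be rewritten with multiplicative factors $\betfunc(s,a)\ge 0$ summing to $|\alph|$, which is just a normalization of a fair bet. A timing difference is handled by reindexing states. If needed, I would instead cite the standard form of Schnorr--Stimm in which the automaton outputs, at each state, a distribution of bets over the next letter; that form is exactly this $\betfunc$ up to scaling by $|\alph|$.
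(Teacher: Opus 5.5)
Your proposal is correct and follows the paper's proof. The forward direction comes from Theorem~\ref{teo:prob_automata_fail_against_normal}, since a convergent sequence has finite $\limsup$; the converse comes from Schnorr--Stimm, with deterministic betting automata viewed as probabilistic ones with Dirac transitions (a step the paper leaves implicit and you spell out).
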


\begin{proof}
    The direction \ref{enum:first}) $\implies$ \ref{enum:second}) follows from Theorem~\ref{teo:prob_automata_fail_against_normal}. The other direction is immediate from the fact that 
    if $X$ is not normal then there is some deterministic betting automaton that succeeds against it \cite{schnorr1972endliche}.
\end{proof}

This gives a new characterization of normality in terms of a family of automata that generalizes deterministic betting automata.

\subsection{Overview of the proof}

We end this section with a high-level description of our proof strategy for Theorem~\ref{thm:main}. Let us first review the core ideas of the Schnorr-Stimm theorem. Imagine we have a very simple martingale, described by a 1-state automaton. This state has a betting function~$\betfunc$ such that $\sum_{a \in \alph} \betfunc(a) = |\alph|$. If this automaton bets against an infinite sequence $X$ the capital of the martingale at stage~$n$ is equal to 
\[
\prod_{a \in \alph} \betfunc(a)^{\mathrm{occ}(X,a,n)}
\]
where $\mathrm{occ}(X,a,n)$ is the number of occurrences of the letter~$a$ in the first~$n$ values of $X$. When $X$ is normal, we have in particular that all letters are equally distributed, that is $\mathrm{occ}(X,a,n) = n/|\alph| + o(n)$, which thus means a capital of 
\[
\prod_{a \in \alph} \betfunc(a)^{n/|\alph|+o(n)}
\]
for the martingale. The natural way to study this quantity is to take its logarithm, which is 
\[
(n+o(1)) \cdot \frac{1}{|\alph|} \sum_{a \in \alph} \log(\betfunc(a))
\]
and use the strict concavity of the $\log$ to apply Jensen's inequality, from which we obtain that 
\[
\frac{1}{|\alph|} \sum_{a \in \alph} \log(\betfunc(a)) \leq \log \left( \frac{\sum_{a \in \alph} \betfunc(a)}{|\alph|} \right) = 0,
\]
the inequality being strict when not all $\betfunc(a)$ are equal. Hence, either all $\betfunc(a)$ are equal -- meaning that the martingale never bets anything -- and the capital then remains constant or not all $\betfunc(a)$ are equal and the capital decreases exponentially fast. 

The full Schnorr-Stimm theorem, when there are more than one state in the automaton describing the martingale, is obtained by a fairly simple extension of this analysis. Indeed, one can argue that if~$X$ is a normal sequence and some martingale playing against~$X$ can be computed by a finite automaton, then the only states visited infinitely often are in a strongly connected component of the automaton and must each be visited with some positive frequency. If all of these states are non-betting, the capital remains constant, but if at least one of them is betting, one can use the above analysis to show that the betting states incur an exponential loss. \\

In our proof of Theorem~\ref{thm:main}, what serves as the analogue of capital at stage~$n$ is the norm of the current vector $\v \M_{X[1..n]}$. And like in the Schnorr-Stimm theorem, it will be advantageous to consider its logarithm to gauge at each step how much risk is taken in proportion of the capital. This risk will depend on the \emph{direction} $(\v \M_{X[1\ldots n]} / \norm{\v \M_{X[1\ldots n]}})$ which, unlike in the Schnorr-Stimm theorem, ranges over an infinite number of possible values. 

The key idea is that this direction has some form of ``forgetfulness'': if we know a sufficiently long suffix of $X[1\ldots n]$, then we can infer the direction $(\v \M_{X[1\ldots n]} / \norm{\v \M_{X[1\ldots n]}})$ with good precision. In the particular case where all matrices $\M_a$ are positive, one would obtain this fact using Hilbert's pseudo-distance between vectors -- which quantifies how far from collinear they are -- and Birkhoff's theorem which asserts that linear transformation arising from positive matrices contract the Hilbert pseudo-distance. However, our matrices are only assumed to be non-negative, so the vectors $\v \M_{X[1\ldots n]}$ may not have full support and the Hilbert pseudo-distance between two vectors can only be defined if they share the exact same support. This is why we first need to study, even before the trajectory of the direction $(\v \M_{X[1\ldots n]} / \norm{\v \M_{X[1\ldots n]}})$, the trajectory of the support $\supp(\v \M_{X[1\ldots n]})$. Luckily, the support is a finite object and this trajectory can be computed by a finite automaton. A fine study of this automaton allows us to define, and show the existence of, words $w$ which guarantee a Birkhoff-like behavior of the matrix $\M_w$, which we call \emph{pseudo-mixing} words. 

Since all finite words appear in a normal sequence with some positive frequency, the existence of pseudo-mixing words will guarantee the desired forgetfulness of the trajectory $(\v \M_{X[1\ldots n]} / \norm{\v \M_{X[1\ldots n]}})$ when~$X$ is a normal sequence. It is at this point that we'll encounter the central case disjunction of the proof:
\begin{itemize}
\item Case 0 will be the situation where the matrices $\M_a$ are arranged in a way that $\v \M_{X[1\ldots n]}$ is guaranteed to become the null vector at some stage.
\item Case 1 will be the situation where some risk-taking direction $\x$ is an accumulation point of $(\v \M_{X[1\ldots n]} / \norm{\v \M_{X[1\ldots n]}})$. In this case, thanks to the forgetfulness, it will be possible to guarantee that the martingale is taking risks and use a concentration inequality, namely Azuma's inequality, to prove the exponential loss of our martingale. 
\item Finally, Case 2 will correspond to the situation where there exists a non-betting trajectory. By forgetfulness, we will then argue that every trajectory will get exponentially close to the non-betting one and thus its ``capital"  \norm{\v \M_{X[1\ldots n]}} will stabilize exponentially fast (or in certain cases will hit the $\zero$ vector).  
\end{itemize} 

In all of these cases, our main theorem will follow.

\section{Some elements of probability theory and linear algebra}

Let us now gather the mathematical tools we will need to prove our main theorem.

\subsection{The Hilbert projective metric and Birkhoff's theorem}

To understand the behaviour of $(\v \M_{X[1\ldots n]})_{n \in \mathbb{N}}$ we will identify recurring \textit{directions} of these vectors, which correspond to accumulation points of
$(\v \M_{X[1\ldots n]} / \norm{\v \M_{X[1\ldots n]}})_{n \in \mathbb{N}}$ (which is well-defined whenever the original sequence is not ultimately 0). We will use the Hilbert metric as a tool to meassure distance between
these directions.  

\begin{definition}
The Hilbert projective metric $\dh$ is defined on pairs of non-negative vectors having the same non-empty support. Let $\u, \v \in \vect$ be two non-negative vectors who have the same non-empty support~$E$. We set 
\begin{align*}
\dh(\u,\v) = \ln \left( \frac{\max_{i \in E} (\u(i)/\v(i))}{\min_{i \in E} (\u(i)/\v(i))} \right) = \ln \left( \max_{i \in E} (\u(i)/\v(i)) \right) + \ln \left( \max_{i\in E} (\v(i)/\u(i)) \right)
\end{align*}

\end{definition}

One can readily verify that $\dh$ is a pseudo-metric: If $\u, \u', \u'' \in \pvect$ have the same support, we have $\dh(\u,\u)=0$, $\dh(\u,\u')=\dh(\u',\u)$ and $\dh(\u,\u'') \leq \dh(\u,\u')+\dh(\u',\u'')$. However, $\dh(\u,\u')=0$ does not imply $\u=\u'$ but only implies that $\u$ and $\u'$ are collinear (and indeed $\dh$ is invariant under scalar multiplication: if $\lambda >0$, $\dh(\lambda \u,\v) = \dh(\u, \v)$). As a result, $\dh$ induces, for every non-empty~$E$, an actual metric on the set $\U_m^+(E)$ of non-negative vectors of norm~$1$ and support~$E$. This metric is not equivalent to the metric induced by the norm (for example, for $\varepsilon$ small, if we take $\u=(\varepsilon,1-\varepsilon-\varepsilon^2,\varepsilon^2)$ and $\v=(\varepsilon^2,1-\varepsilon-\varepsilon^2,\varepsilon)$,  $\norm{\u-\v}=2(\varepsilon-\varepsilon^2)$ is small and $\dh(\u,\v)=2\ln(1/\varepsilon)$ is large) but it induces the same topology on $\U_m^+(E)$, and we do have at least a one-sided inequality.

\begin{lemma}\label{lem:dh-vs-norm}
If $\u,\v \in \U_m$ have the same support, then $\norm{\u-\v} \leq \dh(\u,\v)$
\end{lemma}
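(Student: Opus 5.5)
Both $\u$ and $\v$ have norm $1$ and the same support $E$, so they have the same coordinate sum. My plan is to reduce the statement to a one-line estimate on the ratios $r_i = \u(i)/\v(i)$ for $i \in E$. Set $M = \max_{i\in E} r_i$ and $\mu = \min_{i \in E} r_i$, so that $\dh(\u,\v) = \ln M - \ln \mu$. Since $\sum_{i\in E} \u(i) = \sum_{i \in E}\v(i) = 1$, we have $\sum_{i\in E} \v(i)(r_i - 1) = 0$. The $r_i$ are therefore a $\v$-weighted family averaging to $1$, which forces $\mu \leq 1 \leq M$.

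Next I write the norm in terms of these ratios: $\norm{\u-\v} = \sum_{i\in E} \v(i)\,|r_i - 1|$. I bound each term by $|r_i-1| \leq M - \mu$, which holds because both $r_i$ and $1$ lie in $[\mu, M]$. Summing against the weights $\v(i)$, which total $1$, gives $\norm{\u-\v} \leq M-\mu$.

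It then remains to show $M - \mu \leq \ln M - \ln \mu$ whenever $0<\mu\leq 1\leq M$. I split this as $M - 1 \leq \ln M$ and $1-\mu \leq -\ln\mu$. The second inequality is the standard $\ln x \leq x-1$ applied at $x = \mu$.

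The first inequality is the main obstacle, and it fails: for $M>1$ we have $\ln M < M-1$. So the crude bound $M-\mu$ is too lossy on the upper side, and I need a sharper use of the averaging constraint. Instead I use the signed split $\norm{\u-\v} = 2\sum_{i:\,r_i<1} \v(i)(1-r_i)$, which is valid because the positive and negative parts of $\u - \v$ have equal mass. This gives $\norm{\u-\v}\leq 2(1-\mu)$, and by the symmetric computation with $\u(i)(1/r_i - 1)$ on the indices where $r_i>1$, also $\norm{\u-\v}\leq 2(1-1/M)$. Adding the two and using $1-x \leq -\ln x$ at $x=\mu$ and at $x=1/M$ yields
\[
2\norm{\u-\v} \leq 2(1-\mu) + 2(1-1/M) \leq 2(-\ln\mu + \ln M) = 2\,\dh(\u,\v).
\]
Dividing by $2$ gives the claimed inequality $\norm{\u-\v} \leq \dh(\u,\v)$.
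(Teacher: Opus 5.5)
Your final argument is correct and is essentially the paper's proof. With $a = M$ and $b = 1/\mu$ you arrive at the same chain $\norm{\u-\v} \leq (1-1/a)+(1-1/b) \leq \ln a + \ln b = \dh(\u,\v)$. The paper gets the middle bound directly, by splitting the sum over $\{i : \u(i)\geq \v(i)\}$ and $\{i : \v(i) > \u(i)\}$ and bounding the weight mass on each side by $1$. So the equal-mass identity and the abandoned first attempt via $M-\mu$ can simply be dropped.
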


\begin{proof}
Let $E$ be the support of $\u$ and $\v$. Let $a =  \max_{i \in E} (\u(i)/\v(i))$ and $b=  \max_{i \in E} (\v(i)/\u(i)) $ so that $\dh(\u,\v)=\ln a+ \ln b$. Since $\u$ and $\v$ have norm~$1$, both $\max_{i \in E} (\u(i)/\v(i))$ and $\max_{i \in E} (\v(i)/\u(i))$ must be at least~$1$, i.e., $a \geq 1$ and $b \geq 1$. We also have

\begin{align*}
\norm{\u-\v}  & = \sum_{i \in E} |\u(i) -\v(i)| \\
 & = \sum_{i  \in E\, : \, \u(i) \geq \v(i)} \u(i) \big(1-\v(i)/\u(i)\big) + \sum_{i  \in E\, : \, \v(i) > \u(i)} \v(i) \big(1-\u(i)/\v(i)\big)\\
 &  \leq \sum_{i \in E\, : \, \u(i) \geq \v(i)} \u(i) (1-1/a) + \sum_{i \in E\, : \, \v(i) > \u(i)} \v(i) (1-1/b)\\
 & \leq (1-1/a) + (1-1/b)
\end{align*}

Using the fact that $1-1/x \leq \ln(x)$ for all~$x \geq 1$, it follows that 
\[
\norm{\u-\v} \leq \ln(a) + \ln(b) = \dh(\u,\v). \qedhere
\]
\end{proof}

We also show that the Hilbert distance is not increased by non-negative linear operators.

\begin{lemma}\label{lem:dh-perservation}
Let $\u,\v \in \vect$ having common nonempty support and $\M \in \matm$. Then, $\u \M$ and $\v \M$ have the same support, and if this support is non-empty, 
\[
\dh(\u \M, \v \M) \leq \dh(\u, \v).
\]
\end{lemma}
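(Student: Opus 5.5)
The plan is to compare $\u$ and $\v$ coordinatewise through two scalar constants, and then push these inequalities through $\M$ using only non-negativity of the entries.

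\emph{Constants.} Let $E$ be the common support and set
\[
\alpha = \min_{i \in E} \u(i)/\v(i), \qquad \beta = \max_{i \in E} \u(i)/\v(i),
\]
so that $0<\alpha\leq\beta<\infty$ and $\dh(\u,\v)=\ln(\beta/\alpha)$. Since both vectors vanish outside $E$, we have the coordinatewise inequalities $\alpha \v \leq \u \leq \beta \v$ on all of $\interval$.

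\emph{Equal supports.} Because $\M$ has non-negative entries, the map $\x \mapsto \x\M$ is monotone for the coordinatewise order on $\vect$. Hence
\[
\alpha\, \v\M \leq \u\M \leq \beta\, \v\M .
\]
Fix an index $j$. If $(\v\M)(j)=0$, the upper bound forces $(\u\M)(j)=0$. If $(\u\M)(j)=0$, the lower bound with $\alpha>0$ forces $(\v\M)(j)=0$. So $\supp(\u\M)=\supp(\v\M)$. One could also argue directly: $(\u\M)(j)>0$ iff there is $i\in E$ with $\M(i,j)>0$, and this condition depends only on $E$.

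\emph{Distance bound.} Suppose the common support $F$ of $\u\M$ and $\v\M$ is non-empty. For every $j\in F$, dividing the sandwich by $(\v\M)(j)>0$ gives
\[
\alpha \leq (\u\M)(j)/(\v\M)(j) \leq \beta .
\]
Hence the max over $F$ of these ratios is at most $\beta$ and the min is at least $\alpha$. It follows that $\dh(\u\M,\v\M)\leq \ln(\beta/\alpha)=\dh(\u,\v)$.

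There is no real obstacle here. The only point needing care is the support equality, and the two-sided sandwich with $\alpha>0$ handles it; that is why the lower bound has to be tracked and not just the upper one.
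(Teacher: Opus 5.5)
Your proof is correct and is essentially the paper's argument. The paper sets $a=\max_{i\in E}\u(i)/\v(i)$ and $b=\max_{i\in E}\v(i)/\u(i)$ (your $\beta$ and $1/\alpha$), uses non-negativity of $\M$ to get $(\u\M)(j)\geq (\v\M)(j)/b$ and $(\v\M)(j)\geq(\u\M)(j)/a$ for every $j$, and from these obtains both the equality of supports and the bound on the extreme ratios, as you do via the sandwich $\alpha\v\leq\u\leq\beta\v$ and monotonicity of $\x\mapsto\x\M$.
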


\begin{proof}
Let $E$ be the support of $\u$ and $\v$ and again let $a =  \max_{i \in E} (\u(i)/\v(i))$ and $b=  \max_{i \in E} (\v(i)/\u(i))$ so that $\dh(\u,\v)=\ln a+ \ln b$. Let $\u'= \u \M$ and $\v' = \v\M$. For all $j \in \interval$, we have 
\[
\u'(j) = \sum_{i=1}^m \u(i)\M(i,j) \geq \sum_{i=1}^m (1/b)\v(i)\M(i,j) = \v'(j) /b
\]
and symmetrically, $\v'(j) \geq \u'(j)/a$. This both shows that $\u'$ and $\v'$ have the same support and, if the support $E$ of $\u'$ and $\v'$ is non-empty, that $a' =  \max_{i \in E'} (\u'(i)/\v'(i)) \leq a$ and $b'=  \max_{i \in E'} (\v'(i)/\u'(i)) \leq b$. Therefore,
\[
\dh(\u',\v') = \ln(a')+\ln(b') \leq \ln(a)+ \ln(b) = \dh(\u,\v) \qedhere
\]
\end{proof}

Finally, when the linear operator is strictly positive we can guarantee contraction using Birkhoff's Theorem.

\begin{theorem}[Birkhoff's Theorem \cite{birkhoff}]
Let $r,s$ be positive integers. For any $\M \in \mat{r}{s}{\mathbb{R}_{>0}}$, there exists a constant $\tau(\M) \in (0,1)$ such that for any two $\v,\v' \in \mathbb{R}^r_{>0}$, 
\[
\dh(\v \M,\v' \M) \leq \tau(\M)\cdot \dh(\v,\v')
\]
\end{theorem}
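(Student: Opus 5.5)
We prove Birkhoff's theorem with the explicit contraction constant
\[
\tau(\M) = \tanh\!\left(\tfrac{\Delta(\M)}{4}\right), \qquad \Delta(\M) = \max_{i,k,j,l} \ln\frac{\M(i,j)\M(k,l)}{\M(i,l)\M(k,j)},
\]
where $\Delta(\M)$ is the projective diameter of the image cone. Since $\M$ is strictly positive, $\Delta(\M)<\infty$, so $\tau(\M)\in[0,1)$. If $\tau(\M)=0$, the image cone is a single ray and any constant in $(0,1)$ works.

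\textbf{Step 1: reduce to a two-row problem.} Fix $\v,\v'\in\R^r_{>0}$ and set $\alpha=\min_i \v(i)/\v'(i)$ and $\beta=\max_i \v(i)/\v'(i)$, so that $\dh(\v,\v')=\ln(\beta/\alpha)$. If $\alpha=\beta$ the vectors are collinear and there is nothing to prove. Otherwise write
\[
\v = \alpha \v' + \x, \qquad \beta \v' - \v = \y,
\]
with $\x,\y\geq 0$ and $\x+\y=(\beta-\alpha)\v'$. For a column $j$, put $p_j=(\x\M)(j)$, $q_j=(\y\M)(j)$ and $s_j=(\v'\M)(j)$. Then $s_j=(p_j+q_j)/(\beta-\alpha)$ and
\[
\frac{(\v\M)(j)}{(\v'\M)(j)} = \frac{\alpha q_j+\beta p_j}{p_j+q_j}.
\]
So the ratio at column $j$ is a weighted average of $\alpha$ and $\beta$ whose weight is governed by $t_j=p_j/q_j\in[0,\infty]$.

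\textbf{Step 2: bound the spread of the $t_j$.} This is the key step. Because $\x$ and $\y$ are nonnegative row vectors, $\x\M$ and $\y\M$ lie in the cone generated by the rows of $\M$. Hence $\ln(t_j/t_l)\leq\Delta(\M)$ for all $j,l$, by a cross-ratio computation: $\dh$ of any two vectors in the cone is at most its diameter, since $\dh$ is a norm-like convex function on each segment and is attained at extreme rays. A degenerate case, where $\x$ or $\y$ is zero, arises when all ratios coincide at extreme coordinates. It is handled by perturbing, or by replacing $\alpha,\beta$ by $\alpha-\epsilon,\beta+\epsilon$ and letting $\epsilon\to0$. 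The expected main obstacle is keeping this step tidy; the perturbation makes $\x,\y>0$, hence all $t_j\in(0,\infty)$.

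\textbf{Step 3: optimize.} We get
\[
\dh(\v\M,\v'\M)=\ln\frac{(\alpha+\beta t_j)(1+t_l)}{(1+t_j)(\alpha+\beta t_l)}
\]
for the worst $j,l$. Put $c=\beta/\alpha=e^{D}$ with $D=\dh(\v,\v')$, and maximize
\[
\phi(t,s)=\ln\frac{(1+ct)(1+s)}{(1+t)(1+cs)}
\]
subject to $\ln(t/s)\leq\Delta$. This is a one-variable calculus exercise: substitute $t=e^{u}$, and by symmetry the optimum is at $t=\sqrt{K/c}$, $s=1/\sqrt{Kc}$ with $K=e^{\Delta}$. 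The maximum equals
\[
2\ln\frac{1+\sqrt{cK}}{\sqrt c+\sqrt K}.
\]
Finally, a standard estimate via the mean value theorem in $\ln c$ shows that this quantity is at most $\tanh(\Delta/4)\ln c$. This gives $\dh(\v\M,\v'\M)\leq\tau(\M)\,\dh(\v,\v')$ with $\tau(\M)<1$.

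Alternatively, if only some $\tau<1$ is needed, a softer argument suffices. Suppose no constant below $1$ works. Then there are pairs $(\v_n,\v_n')$ with ratios of $\dh$ tending to $1$. Normalize, and use that the image cone of $\M$ has finite $\dh$-diameter, so images lie in a compact subset of the open simplex. Since $\dh$ is preserved by scaling and Lemma~\ref{lem:dh-perservation} already gives $\leq$, a compactness argument contradicts strict inequality on each fixed pair. That strict inequality holds because $\M>0$ makes every output ratio a strict average of $\alpha$ and $\beta$. Small distances, where this compactness argument degenerates, are handled by the same local computation as in Step 3.
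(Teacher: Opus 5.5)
The paper does not prove this statement: it imports Birkhoff's theorem from \cite{birkhoff} as a black box, and only ever uses the existence of some $\tau(\M)<1$ (in Proposition~\ref{prop:birkhoff-pseudo-mixing}, applied to $\M_w^{[E'\times E]}$). Your proof is therefore a genuinely different, self-contained route. It is a correct version of the classical argument: you decompose with $\x=\v-\alpha\v'$ and $\y=\beta\v'-\v$, bound the spread of $t_j=p_j/q_j$ by the projective diameter, and then solve a one-variable optimization. It also gives more than the paper needs, namely the sharp and explicitly computable constant $\tau(\M)=\tanh(\Delta(\M)/4)$. That constant would make $\tau(F,w)$ in Proposition~\ref{prop:our-perron-frobenius}, and hence the exponential rates derived from it, effective.

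I checked the computations. The critical point $s=1/\sqrt{cK}$, $t=Ks$ is correct, and the maximum is
\[
2\ln\frac{1+\sqrt{cK}}{\sqrt c+\sqrt K}=2\ln\frac{\cosh((D+\Delta)/4)}{\cosh((D-\Delta)/4)} .
\]
Its derivative in $D=\ln c$ is $\frac{\sinh(\Delta/2)}{\cosh(D/2)+\cosh(\Delta/2)}\le\frac{\sinh(\Delta/2)}{1+\cosh(\Delta/2)}=\tanh(\Delta/4)$. This is exactly the estimate your last step relies on, so you could state it explicitly.

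Three minor points.

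First, the degenerate case in Step~2 never arises, so the perturbation is unnecessary. If $\alpha<\beta$, then $\x$ is nonzero at a coordinate where $\v/\v'$ is maximal, and $\y$ is nonzero where it is minimal. Since $\M$ is strictly positive, $\x\M$ and $\y\M$ are strictly positive, so every $t_j$ lies in $(0,\infty)$.

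Second, ``$\dh$ is convex on segments'' is not the right justification for Step~2. At best $\dh(\u,\cdot)$ is quasiconvex: its sublevel sets are convex cones. The cleanest justification is the mediant inequality
\[
\frac{(\x\M)(j)\,(\y\M)(l)}{(\x\M)(l)\,(\y\M)(j)}=\frac{\sum_{i,k}\x(i)\y(k)\M(i,j)\M(k,l)}{\sum_{i,k}\x(i)\y(k)\M(i,l)\M(k,j)}\le e^{\Delta(\M)} .
\]

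Third, your ``softer'' compactness alternative does not work as stated. The ratio you want to bound also depends on $\dh(\v_n,\v_n')$, and the inputs range over the non-compact open simplex. They can drift to the boundary, where $\dh$ is not continuous, or their distance can shrink to $0$. Compactness of the image cone does not control either. The argument only becomes rigorous after your Step~1 reduction to the data $(t_{\max},t_{\min},c)$, so it is not an independent proof.
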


\subsection{Martingales}

Martingales are a central concept of probability theory inspired by the concept of betting strategy but whose definition is very general. In this paper, we follow the traditional terminology of algorithmic randomness and call \emph{martingale} a function
\[
D : \alph^* \rightarrow \mathbb{R}
\]
such that for every $w \in \alph^*$, 
\[
\frac{1}{|\alph|} \sum_{a \in \alph} D(wa) = D(w).
\]
When we only have 
\[
\frac{1}{|\alph|} \sum_{a \in \alph} D(wa) \leq D(w).
\]
we say that $D$ is a supermartingale

This terminology is only a slight abuse: If $D$ is a martingale (resp. supermartingale) in our sense and $X \in \alph^\omega$ is an infinite sequence chosen uniformly at random, the sequence of random variables $(D(X[1 \ldots n]))_{n \in \N}$ is a martingale (resp. supermartingale) in the probability-theoretic sense. 

One should think of~$D$ as the capital of a player who bets on the values of the sequence and gets rewarded fairly in the case of martingales (this is the condition  $\sum_{a \in \alph} D(wa) / |\alph| = D(w)$). In the case of supermartingales, as in real life, the odds may be stacked against the gambler, hence the inequality $(1/|\alph|) \sum_{a \in \alph} D(wa) \leq D(w)$. 

For technical reasons, we will allow supermartingales to take the value $-\infty$.\\

The main two theorems we will need from the theory of martingales are two famous inequalities, respectively known as Kolmogorov's inequality and Azuma's inequality. 

\begin{theorem}[Kolmogorov's inequality] \label{thm:kolmogorov-inequality}
Let $D$ be a non-negative supermartingale. For every constant~$C>0$, if~$X \in \alph^\omega$ is chosen uniformly at random, 
\[
\P \big(\exists n\,  D(X[1 \ldots n]) \geq C \big) \leq 1/C
\]
In particular, for every~$n$, 
\[
\P \big(D(X[1 \ldots n]) \geq C \big) \leq 1/C
\]
\end{theorem}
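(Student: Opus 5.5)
We prove the stronger first statement; the ``in particular'' clause follows at once, because the event $\{D(X[1\ldots n]) \geq C\}$ is contained in $\{\exists n\, D(X[1\ldots n]) \geq C\}$. The plan is the standard stopping-time argument for Doob's maximal inequality, adapted to the combinatorial setting of supermartingales on $\alph^*$. We may assume $D(\emptystr) \leq 1$. Otherwise the bound has to be read as $D(\emptystr)/C$. With the normalization used in the paper (initial capital~$1$), the statement reads exactly as written, and we will use $D(\emptystr)=1$, or more generally $D(\emptystr)\le 1$.

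Fix $N$. It suffices to prove $\P(\exists n \leq N\, D(X[1\ldots n]) \geq C) \leq D(\emptystr)/C$ for each $N$. The events increase in $N$, so continuity of measure then gives the bound for $\exists n$. Let $S_N$ be the set of words $w$ of length at most $N$ such that $D(w) \geq C$ and $D(w') < C$ for every strict prefix $w' \sqsubset w$. This is a prefix-free set, and the event in question is the disjoint union of the cylinders $[w]$ for $w \in S_N$, each of measure $|\alph|^{-|w|}$.

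The key step is the inequality
\[
\sum_{w \in S} |\alph|^{-|w|} D(w) \leq D(\emptystr),
\]
valid for any finite prefix-free set $S$ and any non-negative supermartingale $D$. We prove it by induction on the maximal length $L$ of words in $S$. If $L = 0$, then $S \subseteq \{\emptystr\}$ and the inequality is trivial, using non-negativity of $D$ when $S=\emptyset$. For the inductive step, group the words of length $L$ by their parent $u$ of length $L-1$. By prefix-freeness, $u \notin S$, and no prefix of $u$ is in $S$ either, since otherwise it would be a prefix of the children. The supermartingale inequality gives $\sum_{a} |\alph|^{-L} D(ua) \leq |\alph|^{-(L-1)} D(u)$. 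Non-negativity lets us add the missing siblings $ua \notin S$ before applying it. Replacing each such group by its parent $u$ therefore yields a prefix-free set of maximal length $L-1$ whose weighted sum is at least as large, and we conclude by induction.

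Applying this to $S_N$ and using $D(w) \geq C$ on $S_N$ gives
\[
C \cdot \P\big(\exists n \leq N\, D(X[1\ldots n]) \geq C\big) \leq D(\emptystr) \leq 1,
\]
which is the desired bound. The main subtlety is in the inductive step: replacing a group by its parent must preserve prefix-freeness. This holds because the parent and all its ancestors are absent from $S$, as argued above. One also has to be careful with the value $-\infty$, but the statement assumes $D$ is non-negative, so that case does not arise.
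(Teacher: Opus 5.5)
Your proof is correct and follows the standard route: prove $\sum_{w\in S}|\alph|^{-|w|}D(w)\le D(\emptystr)$ for finite prefix-free $S$ by collapsing siblings onto their parent, apply it to the first-hitting set $S_N$, and pass to the limit by continuity of measure; the paper gives no proof of its own and defers to \cite[Theorem 6.3.3]{DowneyH2010}, where the argument is essentially this one. You are also right that the bound $1/C$ tacitly assumes $D(\emptystr)\le 1$ (in general it is $D(\emptystr)/C$), which is harmless here because every application in the paper is to a supermartingale normalized to $D(\emptystr)=1$, such as $w\mapsto\norm{\e_i\M_w}$.
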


See for example~\cite[Theorem 6.3.3]{DowneyH2010} for a proof. 

\begin{theorem}[Azuma's inequality] \label{thm:azuma-inequality}
Let $D: \alph^* \rightarrow [-\infty,\infty)$ be a supermartingale and $(c_k)_{k \in \N}$ be a sequence of positive reals such that $|D(wa) - D(w)| \leq c_{|w|+1}$ for all $w \in \alph^*$ and $a \in \alph$. Then, for every~$n$ if $w$ is a string of length~$n$ picked uniformly at random,
\[
\P_{|w|=n}(D(w) - D(\emptystr) > \theta  ) \leq \exp \left( \frac{-\theta^2}{2 \sum_{k=1}^n c_k^2} \right) 
\]
\end{theorem}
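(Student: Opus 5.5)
We derive Azuma's inequality for supermartingales in our sense from the standard exponential-moment (Chernoff) argument applied to the bounded increments. Write $\Delta_k(w) = D(w[1\ldots k]) - D(w[1\ldots k-1])$ for a uniformly random $w$ of length $n$, so that $D(w)-D(\emptystr) = \sum_{k=1}^n \Delta_k$. The hypothesis $|D(wa)-D(w)| \leq c_{|w|+1}$ gives $|\Delta_k| \leq c_k$. First we dispose of the value $-\infty$. If $D(u)=-\infty$ for some $u$, then the hypothesis forces $D(ua)=-\infty$ for all extensions, under the convention that the difference is read as $0$ or otherwise controlled. In any case, on such branches the event $D(w)-D(\emptystr)>\theta$ fails (assuming $D(\emptystr)$ finite; if $D(\emptystr)=-\infty$ the statement is trivial). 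So we may replace $-\infty$ by a sufficiently negative finite value without enlarging the event. The cleanest route is to note that bounded increments from a finite root force finiteness everywhere, and to treat the $-\infty$ convention only as a degenerate case.

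The supermartingale condition says that, conditioned on the first $k-1$ letters, $\mathbb{E}[\Delta_k \mid w[1\ldots k-1]] = \frac{1}{|\alph|}\sum_a D(ua)-D(u) \leq 0$. For $\lambda>0$ we use Hoeffding's lemma: a random variable $Y$ with $|Y|\leq c$ and $\mathbb{E}Y\leq 0$ satisfies $\mathbb{E}e^{\lambda Y} \leq e^{\lambda^2c^2/2}$. To see this, use convexity of $e^{\lambda y}$ on $[-c,c]$ to bound $e^{\lambda Y} \leq \frac{c-Y}{2c}e^{-\lambda c} + \frac{c+Y}{2c}e^{\lambda c}$. Taking expectations gives at most $\cosh(\lambda c) + \frac{\mathbb{E}Y}{c}\sinh(\lambda c) \leq \cosh(\lambda c) \leq e^{\lambda^2c^2/2}$, where the middle step uses $\mathbb{E}Y\leq 0$ and $\lambda>0$.

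Next we iterate by conditioning on prefixes, peeling off the last letter each time:
\[
\mathbb{E}\Bigl[e^{\lambda\sum_{k\leq n}\Delta_k}\Bigr] = \mathbb{E}\Bigl[e^{\lambda\sum_{k<n}\Delta_k}\,\mathbb{E}\bigl[e^{\lambda\Delta_n}\mid w[1\ldots n-1]\bigr]\Bigr] \leq e^{\lambda^2 c_n^2/2}\,\mathbb{E}\Bigl[e^{\lambda\sum_{k<n}\Delta_k}\Bigr].
\]
By induction this gives $\mathbb{E}[e^{\lambda(D(w)-D(\emptystr))}] \leq \exp(\frac{\lambda^2}{2}\sum_{k=1}^n c_k^2)$. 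Markov's inequality then yields
\[
\P(D(w)-D(\emptystr)>\theta) \leq \exp\Bigl(-\lambda\theta + \tfrac{\lambda^2}{2}\sum_{k=1}^n c_k^2\Bigr).
\]
Choosing $\lambda = \theta/\sum_{k=1}^n c_k^2$ gives the claimed bound for $\theta>0$. For $\theta\leq 0$ the right-hand side is at least $1$, so the bound is trivial.

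The only genuine step is the conditional Hoeffding lemma for mean $\leq 0$ rather than mean $=0$. Its convexity proof goes through because the coefficient $\sinh(\lambda c)$ of $\mathbb{E}Y$ is nonnegative. The handling of $-\infty$ values is a bookkeeping point rather than an obstacle.
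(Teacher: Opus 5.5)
Your argument is correct and is the standard exponential-moment proof: the convexity bound $\mathbb{E}e^{\lambda Y}\le\cosh(\lambda c)+\frac{\mathbb{E}Y}{c}\sinh(\lambda c)$, then $\cosh x\le e^{x^2/2}$, conditioning letter by letter, and Markov. The paper does not prove this theorem itself but cites Alon--Spencer, whose proof is exactly this argument. Your use of $\sinh(\lambda c)>0$ to absorb $\mathbb{E}Y\le 0$ is the right adaptation to supermartingales, and your closing remark (bounded increments from a finite $D(\emptystr)$ force $D$ to be finite everywhere) is the right way to dispose of $-\infty$.

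One slip: for $\theta<0$ the right-hand side $\exp\bigl(-\theta^2/(2\sum_{k} c_k^2)\bigr)$ is strictly less than $1$, not at least $1$. The inequality genuinely fails there (take $D$ constant), so the statement must be read for $\theta\ge 0$: $\theta=0$ is trivial, and $\theta>0$ is what you proved.
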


See~\cite{AlonS2016} for a proof. \\

While it may not be true for some martingales in the probability-theoretic sense, our martingales have a very simple form which allows us to weaken the hypothesis of bounded variations ($|D(wa) - D(w)| \leq c_{|w|+1}$ in our case) to bounded increases. We state this in the theorem below where we further take the family $(c_k)_{k \in \N}$ to be a constant, which is sufficient for our needs.

\begin{proposition}\label{prop:subexponential_increase_of_A_martingales}
Let $D: \alph^* \rightarrow [-\infty,\infty)$ be a supermartingale such that for some constant~$c$, $D(wa) \leq D(w)+c$ for any $w \in \alph^*$ and $a \in \alph$. Then there is a constant $d>0$ such that, if $w$ is a string of length~$n$ picked uniformly at random in~$\alph^n$,
\[
\P_{|w|=n}(D(w) - D(\emptystr) > \theta  ) \leq \exp \left( \frac{-\theta^2}{dn} \right) 
\]
\end{proposition}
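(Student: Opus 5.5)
The idea is to reduce to Azuma's inequality (Theorem~\ref{thm:azuma-inequality}). We replace $D$ by a new supermartingale $E$ that dominates $D$, starts at the same value, and has increments bounded in absolute value. We get $E$ by clipping the large downward jumps of $D$, which are the only obstruction to applying Azuma directly. We may assume $D(\emptystr)$ is finite, since otherwise the event $D(w)-D(\emptystr)>\theta$ is empty or meaningless. We may also assume $c>0$ and $|\alph|\geq 2$; the case $|\alph|=1$ is trivial.

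Set $L=(|\alph|-1)c$. For $w\in\alph^*$ and $a\in\alph$, write $\Delta_a(w)=D(wa)-D(w)\in[-\infty,c]$; the value $-\infty$ is allowed, and if $D(w)=-\infty$ we set $\Delta_a(w)=-\infty$. Define $\delta_a(w)=\max(\Delta_a(w),-L)$ and
\[
E(\emptystr)=D(\emptystr),\qquad E(wa)=E(w)+\delta_a(w).
\]
By construction $-L\leq \delta_a(w)\leq c$, so $|E(wa)-E(w)|\leq \max(c,L)$. Also $\delta_a\geq \Delta_a$, so an induction on $|w|$ gives $E(w)\geq D(w)$ for all $w$.

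The key step is to check that $E$ is still a supermartingale, i.e.\ that $\sum_{a}\delta_a(w)\leq 0$ for every $w$. We split on how many letters are clipped. Let $k$ be the number of letters $a$ with $\Delta_a(w)<-L$.
\begin{itemize}
\item If $k=0$, then $\delta_a=\Delta_a$ for all $a$, and $\sum_a\delta_a=\sum_a\Delta_a\leq 0$ by the supermartingale property of $D$.
\item If $k\geq 1$, each of the $|\alph|-k$ unclipped letters contributes at most $c$, and each clipped letter contributes exactly $-L$. Hence
\[
\sum_a \delta_a \leq (|\alph|-k)c - k(|\alph|-1)c \leq (|\alph|-1)c-(|\alph|-1)c = 0.
\]
\end{itemize}
So $E$ is a real-valued supermartingale with increments bounded by $c^*=\max(c,(|\alph|-1)c)$. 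This is the only nonroutine point: the bounded-increase hypothesis is exactly what makes clipping at $-(|\alph|-1)c$ harmless.

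Now apply Theorem~\ref{thm:azuma-inequality} to $E$ with the constant sequence $c_k=c^*$. Since $D(w)-D(\emptystr)\leq E(w)-E(\emptystr)$, we get
\[
\P_{|w|=n}\big(D(w)-D(\emptystr)>\theta\big)\leq \P_{|w|=n}\big(E(w)-E(\emptystr)>\theta\big)\leq \exp\!\left(\frac{-\theta^2}{2n(c^*)^2}\right),
\]
which is the claimed bound with $d=2(c^*)^2$.
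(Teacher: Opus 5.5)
Your proposal is correct and follows essentially the same route as the paper. The paper also builds a dominating supermartingale by truncating each downward increment of $D$ from below, checks the supermartingale property with the same case split (no truncated letter versus at least one, bounding the remaining increments by $c$), and then applies Azuma's inequality. The differences are cosmetic: the paper truncates at $-c|\alph|$ rather than your $-(|\alph|-1)c$, giving $d=2c^2|\alph|^2$ instead of your slightly better constant, and your explicit handling of $-\infty$ values and of $c\le 0$ fills in details the paper leaves implicit.
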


\begin{proof}
Let~$D$ be such a supermartingale. To apply Azuma's inequality, we need a lower bound for the variation. Consider the function~$\tilde{D}$ defined by $\tilde{D}(\emptystr)=D(\emptystr)$ and inductively, if $\tilde{D}(w)$ is already defined, set for every~$a \in \alph$:
\[
\tilde{D}(wa) = \tilde{D}(w) + \max \big( D(wa) - D(w) , -c |\alph| \big)
\]
Since $\tilde{D}(wa) - \tilde{D}(w) \geq D(wa) - D(w)$ for all~$w$ and $a$, it follows that $\tilde{D} \geq D$. We further claim that~$\tilde{D}$ is a supermartingale. Indeed, let $w \in \alph$. We have 
\[
\tilde{D}(w) - \frac{1}{|\alph|} \sum_{a \in \alph} \tilde{D}(wa) = - \frac{1}{|\alph|} \sum_{a \in \alph} \max \big( D(wa) - D(w) , -c |\alph| \big)
\]
If $D(wa) - D(w) \geq - c|\alph|$ for every $a \in \alph$, then the right-hand side is equal to 
\[
D(w) - \frac{1}{|\alph|} \sum_{a \in \alph}  D(wa)
\]
 which is nonnegative as~$D$ is a supermartingale. In the opposite case where $D(wb) - D(w) < - c|\alph|$ for some~$b \in \alph$, the right-hand side is equal to 
 \[
 c - \frac{1}{|\alph|} \sum_{a\not=b} \max \big( D(wa) - D(w) , -c |\alph| \big)
 \]
and since $D(wa) - D(w) \leq c$ for all~$a$ by our hypothesis, it follows that the right-hand side is also non-negative in that case. This shows that $\tilde{D}$ is indeed a supermartingale. 

Since our construction ensures that $\left| \tilde{D}(wa) - \tilde{D}(w) \right| < c|\alph|$ for all~$w$, we can apply Azuma's inequality to~$\tilde{D}$ and get, for~$w$ of length~$n$ chosen at random
\[
\P(\tilde{D}(w)-   \tilde{D}(\emptystr) > \theta ) \leq \exp \left( \frac{-\theta^2}{2nc^2|\alph|^2} \right) 
\]
and since $\tilde{D}$ is greater or equal to~$D$ and $ \tilde{D}(\emptystr)=D(\emptystr)$, we have, a fortiori, 
\[
\P(D(w)-   D(\emptystr) > \theta ) \leq \exp \left( \frac{-\theta^2}{2nc^2|\alph|^2} \right) 
\]
hence we have proven the desired result with $d=2c^2|\alph|^2$.
\end{proof}

Martingales and supermartingales are particularly well adapted to the theorem we want to prove. Indeed, it follows directly from the superfairness of our family $\{\M_a\}_{a \in \alph}$ of matrices that given a vector $\v$, the function $w \mapsto \norm{\v \M_w}$ is a supermartingale (and a martingale if the family $\{\M_a\}_{a \in \alph}$ is fair). 

We further consider its logarithm which we denote by $L^\v$:
\[
L^\v(w) = \ln \norm{\v \M_w}
\]
(which can take the value $-\infty$ when $\v \M_w=\zero$). 
By concavity of the logarithm and Jensen's inequality, we have 
\begin{equation}\label{eq:jensen}
\ln \norm{\v \M_w} \geq \ln \left( \frac{1}{| \alph |} \sum_{a \in \alph} \norm{\v \M_{wa}} \right) \geq \frac{1}{| \alph |} \sum_{a \in \alph} \ln \norm{\v \M_{wa}}
\end{equation}
(the first inequality is due to the superfairness of the family $(\M_a)_{a \in \Sigma}$). In other words, we have proven the following important lemma. 

\begin{lemma}
For any $\v$, $L^\v$ is a supermartingale (which can take the value $-\infty$).
\end{lemma}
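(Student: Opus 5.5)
The plan is to verify the supermartingale inequality
\[
L^\v(w) \geq \frac{1}{|\alph|}\sum_{a\in\alph} L^\v(wa)
\]
directly for each fixed $w \in \alph^*$, with the convention $\ln 0 = -\infty$. Sums and averages containing a $-\infty$ term are taken to equal $-\infty$, and $-\infty \geq -\infty$ holds. The argument is exactly the chain of inequalities~\eqref{eq:jensen}; the only work is to check that each step survives the degenerate cases where some vectors vanish.

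First, set $\u = \v\M_w$ and note that $\M_{wa} = \M_w \M_a$, so $\v\M_{wa} = \u\M_a$. If $\u = \zero$, then $\u\M_a = \zero$ for every $a$. Both sides are then $-\infty$ and the inequality holds trivially.

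Assume from now on that $\u \neq \zero$. Since $\u \in \vect$, superfairness of $(\M_a)_{a\in\alph}$ applied to $\u$ gives
\[
\norm{\u} \geq \frac{1}{|\alph|}\sum_{a} \norm{\u\M_a}.
\]
Since $\ln$ is nondecreasing on $[0,\infty)$ (extended by $\ln 0=-\infty$), this yields the first inequality of~\eqref{eq:jensen}:
\[
L^\v(w) \geq \ln\Big(\tfrac{1}{|\alph|}\sum_a \norm{\v\M_{wa}}\Big).
\]
If the average inside the logarithm is $0$, then every $\v\M_{wa}=\zero$. The right-hand side of the desired inequality is then $-\infty$, and we are done.

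For the second inequality, split into two cases.
\begin{itemize}
\item If some $\norm{\v\M_{wa}}=0$, the average $\frac{1}{|\alph|}\sum_a \ln\norm{\v\M_{wa}}$ is $-\infty$ and the inequality is immediate.
\item Otherwise all terms are positive reals. Then Jensen's inequality for the concave function $\ln$ on $(0,\infty)$, equivalently the AM--GM inequality, gives
\[
\ln\Big(\tfrac{1}{|\alph|}\sum_a x_a\Big) \geq \tfrac{1}{|\alph|}\sum_a \ln x_a \quad\text{with } x_a=\norm{\v\M_{wa}}.
\]
\end{itemize}
Chaining the two inequalities gives the supermartingale property.

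There is no real obstacle here. The only point requiring care is the consistent handling of the value $-\infty$ and of the zero vector, as done in the cases above, so that the statement is meaningful for supermartingales valued in $[-\infty,\infty)$, as allowed earlier.
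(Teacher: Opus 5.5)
Your proof is correct and takes the same approach as the paper: superfairness gives the first inequality in the chain \eqref{eq:jensen}, and concavity of $\ln$ (Jensen's inequality) gives the second. The paper leaves the $-\infty$ and zero-vector cases implicit, and your explicit case analysis of them is a welcome bit of extra rigor.
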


\noindent In the inequality~(\ref{eq:jensen}), the difference 
\[
\ln \norm{\v \M_w} - \frac{1}{| \alph |} \sum_{a \in \alph} \ln \norm{\v \M_{wa}}
\]
(which we have just proven to be nonnegative) can be seen as a measure of the risk taken by the martingale $L^\v$, having already seen~$w$, on its bet on the next letter. We would like, in the spirit of Doob's decomposition, to write $L^\v$ as the difference between a martingale and the sum of the risks taken. However, $L^\v$ may take the value $-\infty$, and we want to avoid dealing with quantities such as $(-\infty)+(+\infty)$. This is why we will put an artificial bound~$1$ on this risk measure, while keeping the fact that the sum of $L$ and the cumulative risk is a supermartingale. 

\begin{definition}
For $\u \in \vect \setminus \{\zero\}$ we define a bounded risk measure as
\[
 \Delta(\u) =  \min\left(1, \ln \norm{\u} - \frac{1}{| \alph |} \sum_{a \in \alph} \ln \norm{\u \M_{a}}\right)
\]
where by convention we take $\ln(0) = -\infty$. 
\end{definition}

\noindent A crucial remark is that $\Delta$ is invariant under scalar multiplication and thus depends only on the direction $\frac{\u}{\norm{\u}}$ of the vector $\u$.\\

The cumulative risk~$\Gamma$ is defined as follows:

\begin{definition}\label{def:Gamma}
For $\v \in \vect$ and $w \in \alph^*$, we set:
\[
\Gamma^\v(w) = \sum_{w' \sqsubset w} \Delta(\v \M_{w'})
\]
\end{definition}

As announced, this definition is enough to give us the following. 

\begin{lemma}
For any $\v \in \vect$, $L^\v + \Gamma^\v$ is a supermartingale.
\end{lemma}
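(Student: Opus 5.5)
We verify the supermartingale inequality directly at each word $w$, splitting on whether $\v \M_w$ is the zero vector.

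First we check what the increment of $\Gamma^\v$ is when passing from $w$ to $wa$. By Definition~\ref{def:Gamma}, the prefixes of $wa$ strictly shorter than $wa$ are those of $w$ together with $w$ itself. So
\[
\Gamma^\v(wa) = \Gamma^\v(w) + \Delta(\v\M_w)
\]
for every $a\in\alph$, and this increment does not depend on $a$. (When $\v\M_w=\zero$ the quantity $\Delta$ is undefined; we adopt the natural convention $\Delta(\zero)=0$, or equivalently restrict the sum to nonzero vectors. This convention is harmless, as shown below.) It therefore suffices to prove
\[
L^\v(w) - \frac{1}{|\alph|}\sum_{a\in\alph} L^\v(wa) \;\geq\; \Delta(\v\M_w).
\]

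\emph{Case $\v\M_w\neq\zero$.} Set $\u=\v\M_w$. The left-hand side equals $\ln\norm{\u} - \frac{1}{|\alph|}\sum_a \ln\norm{\u\M_a}$, because $\v\M_{wa}=\u\M_a$. This quantity lies in $[0,+\infty]$ by inequality~(\ref{eq:jensen}). The right-hand side is the minimum of $1$ and this same quantity, so it is bounded above by the quantity. The inequality holds, including when the left-hand side is $+\infty$, which happens when some $\u\M_a=\zero$. Since $L^\v(w)$ is finite here, adding $\Gamma^\v(w)$ to both sides creates no $\infty-\infty$ ambiguity, and we obtain $\frac{1}{|\alph|}\sum_a (L^\v+\Gamma^\v)(wa) \leq (L^\v+\Gamma^\v)(w)$.

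\emph{Case $\v\M_w=\zero$.} Then $\v\M_{wa}=\zero$ for every $a$, so $L^\v(w)$ and all the $L^\v(wa)$ equal $-\infty$. The cumulative risk $\Gamma^\v$ is a finite sum of values in $[0,1]$, so it is finite. Hence $L^\v+\Gamma^\v$ equals $-\infty$ at $w$ and at every $wa$, and the supermartingale inequality $-\infty\leq-\infty$ holds trivially. Supermartingales were explicitly allowed to take the value $-\infty$, so this case is acceptable.

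The only delicate points are bookkeeping with infinite values: the $\min(1,\cdot)$ truncation guarantees that $\Gamma^\v$ stays finite, so no expression of the form $(+\infty)+(-\infty)$ ever arises.
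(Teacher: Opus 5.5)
Your proof is correct and follows the paper's argument: both use $\Gamma^\v(wa)=\Gamma^\v(w)+\Delta(\v\M_w)$ together with $\Delta(\u)\leq \ln\norm{\u}-\frac{1}{|\alph|}\sum_{a}\ln\norm{\u\M_a}$, and treat the $-\infty$ cases separately. The only differences are that the paper splits on whether some $\v\M_{wa}$ vanishes while you split on whether $\v\M_w$ vanishes, and that your explicit convention $\Delta(\zero)=0$ covers the case where the trajectory hits $\zero$, so that $\Gamma^\v$ is always defined; the paper leaves this point implicit.
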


\begin{proof}
For $w \in \alph^*$, using the fact that, by definition, $\Gamma^\v(wa)=\Gamma^\v(w)+\Delta(\v \M_w)$, we get
\begin{equation}\label{eq:l-gamma}
\frac{1}{|\alph|} \sum_{a \in \alph} (L^\v(wa) + \Gamma^\v(wa))    =  \frac{1}{|\alph|} \sum_{a \in \alph} \ln \norm{\v \M_{wa}} +  \Big(\Gamma^\v(w) + \Delta(\v \M_w) \Big)\\
\end{equation}
-- If $\norm{\v \M_{wa}} >0$ for all~$a$, then using the definition of~$\Delta$, we get:
\begin{eqnarray*}
\frac{1}{|\alph|} \sum_{a \in \alph} (L^\v(wa) + \Gamma^\v(wa))  &  \leq & \frac{1}{|\alph|} \sum_{a \in \alph} \ln \norm{\v \M_{wa}} +  \Gamma^\v(w) +  \ln \norm{\v\M_w} - \frac{1}{| \alph |} \sum_{a \in \alph} \ln \norm{\v \M_{wa}} \\ 
 & = & L^\v(w) + \Gamma^\v(w)
\end{eqnarray*}
as desired. \\

\noindent -- And if $\norm{\v \M_{wa}} =0$ for some~$a$, the right-hand side of~(\ref{eq:l-gamma}) becomes $-\infty$ and therefore 
\[
\frac{1}{|\alph|} \sum_{a \in \alph} (L^\v(wa) + \Gamma^\v(wa)) \leq L^\v(w) + \Gamma^\v(w)
\]
holds vacuously.



\end{proof}

%
%
%

\subsection{Markov chains and deterministic finite automata}

\newcommand{\aMarkovChain}{\mathcal{Z}}

We will need a basic result regarding irreducible Markov chains: namely, the fact that for any starting distribution it holds that with probability one each state will be visited infinitely often with some fixed frequency depending only on the chain itself.

\begin{lemma}[Theorem 4.16 from \cite{levin2017markov}]\label{lemma:markov_ergodic_theorem}
	Let $(\aMarkovChain_n)_{n \in \mathbb{N}}$ be an irreducible Markov chain over states $S$ with stationary distribution $\pi$. Then, for any starting distribution $\mu$ and $s\in \S$,
	\begin{align*}
		\P_{\mu}\left( \lim_{n \to \infty} \frac{|\{1 \leq i \leq n: \aMarkovChain_i = s\}|}{n} = \pi(s)\right) = 1
	\end{align*}
\end{lemma}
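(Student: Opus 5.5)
The statement is the classical ergodic theorem for finite irreducible Markov chains, and I would prove it by a regeneration (renewal) argument around returns to the fixed state~$s$. Throughout, $S$ is finite, which is the only situation the paper needs.

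\textbf{Step 1: hitting times have finite mean.} By irreducibility and finiteness there are $r \in \N$ and $\varepsilon>0$ such that, from every state, the chain hits $s$ within $r$ steps with probability at least $\varepsilon$. Iterating over blocks of $r$ steps gives a geometric tail for the hitting time $\tau_s$ of $s$, uniformly in the initial distribution. Hence $\tau_s<\infty$ almost surely under any $\mu$. Likewise the return time $\tau_s^+ = \min\{n\geq 1 : \aMarkovChain_n = s\}$ has finite mean $\mathbb{E}_s[\tau_s^+]$ when the chain starts at $s$.

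\textbf{Step 2: excursions and the strong law.} Let $T_0=\tau_s$ be the first visit to $s$ and $T_k$ the $k$-th subsequent return. By the strong Markov property, the excursion lengths $T_k-T_{k-1}$ for $k\geq 1$ are i.i.d., each distributed as $\tau_s^+$ under $\P_s$, and they are independent of $T_0$. The strong law of large numbers gives $T_k/k \to \mathbb{E}_s[\tau_s^+]$ almost surely, since $T_0/k\to 0$. Write $N_n = |\{1\leq i\leq n : \aMarkovChain_i=s\}|$. Then $T_{N_n-1}\leq n < T_{N_n}$ up to an index shift, so sandwiching gives
\[
\frac{N_n}{n} \longrightarrow \frac{1}{\mathbb{E}_s[\tau_s^+]} \quad \text{almost surely.}
\]

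\textbf{Step 3: identify the limit with $\pi(s)$.} This is Kac's formula $\pi(s)=1/\mathbb{E}_s[\tau_s^+]$. Define $\tilde\pi(y)$ as the expected number of visits to $y$ during one excursion from $s$, i.e.\ at times $0\le t<\tau_s^+$. A one-step computation with the Markov property shows that $\tilde\pi$ is invariant, $\tilde\pi P=\tilde\pi$. Its total mass is $\mathbb{E}_s[\tau_s^+]$, and $\tilde\pi(s)=1$. By uniqueness of the stationary distribution of an irreducible chain, $\tilde\pi/\mathbb{E}_s[\tau_s^+]=\pi$, which gives the formula.

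I expect Step 3 to be the main obstacle: the invariance check and the uniqueness of the stationary distribution. I would prove uniqueness by noting that a harmonic function on a finite irreducible chain is constant, by the maximum principle, so the kernel of $P-I$ is one-dimensional, and hence so is the left kernel. Steps 1 and 2 are routine once the strong Markov property is available.
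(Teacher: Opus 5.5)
Your proposal is correct and follows the proof behind the citation: the paper gives no argument of its own and cites Levin--Peres, whose proof of Theorem 4.16 is this same regeneration scheme (i.i.d. excursions between visits to a fixed state, the strong law of large numbers applied to the excursion lengths, and identification of the limit via the excursion occupation measure $\tilde\pi$ plus uniqueness of the stationary distribution, which comes from harmonic functions being constant). Your differences are cosmetic: you regenerate at $s$ itself and handle only the indicator of $s$, so you need just Kac's formula $\pi(s)=1/\mathbb{E}_s[\tau_s^+]$ rather than the general-$f$ version, and your restriction to finite $S$ matches that book's standing assumption and the paper's use of the lemma for finite automata.
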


In particular, we will use the following proposition which relates Markov chains to deterministic finite automata:

\begin{proposition}\label{prop:automata_behaves_as_markov_chain}
	Let $\A = (\S, \alph, \delta, s_0)$ be a strongly connected deterministic finite automaton over alphabet $\alph$ with states $\S$, initial state $s_0$ and transition function $\delta:\S\times \alph \to \S$. Assuming that we sample a sequence $Z \in \alph^\omega$ uniformly at random, consider the random variables $\aMarkovChain_n^l: \alph^\omega \to \S \times \alph^l$ given by
	\begin{align*}
		\aMarkovChain_n^l(X) = (\delta^*(s_0, Z[1\ldots n]), Z[n+1\ldots n+l])
	\end{align*}
	where $\delta^*$ is the extended transition function for $\A$. Then, $(\aMarkovChain_n^l)_{n \in \mathbb{N}}$ is a Markov chain for every $l$ satisfying
	\begin{align*}
		\P_{s_0}\left( \lim_{n\to \infty} \frac{|\{1 \leq i \leq n: \aMarkovChain_i^l = (s, z)\}|}{n}  = \pi(s)|\alph|^{-l} \right) = 1
	\end{align*}
	for every $s \in \S$ and $z \in \alph^l$ where $\pi$ is the stationary distribution of $(\aMarkovChain_n^0)_{n \in \N}$.
\end{proposition}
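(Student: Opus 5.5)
We first check that $(\aMarkovChain_n^l)_{n\in\N}$ is a Markov chain and describe its transitions, then show it is irreducible, compute its stationary distribution, and finally invoke Lemma~\ref{lemma:markov_ergodic_theorem}.

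\emph{Markov property.} Write $s_n=\delta^*(s_0,Z[1\ldots n])$. Then
\[
\aMarkovChain_{n+1}^l=\big(\delta(s_n,Z[n+1]),\,Z[n+2\ldots n+l+1]\big).
\]
If $l\geq 1$, the letter $Z[n+1]$ is the first letter of the word component of $\aMarkovChain_n^l$. So $\aMarkovChain_{n+1}^l$ is a deterministic function of $\aMarkovChain_n^l$ and of the fresh letter $Z[n+l+1]$. This letter is uniform and independent of $Z[1\ldots n+l]$, hence independent of the whole past $\aMarkovChain_0^l,\ldots,\aMarkovChain_n^l$. This gives the Markov property with transition probabilities
\[
P\big((s,az),(\delta(s,a),zb)\big)=|\alph|^{-1}\qquad (a,b\in\alph,\ z\in\alph^{l-1}).
\]
For $l=0$ the chain simply moves from $s$ to $\delta(s,a)$ with probability $|\alph|^{-1}$ for each $a\in\alph$.

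\emph{Irreducibility.} To go from $(s,z)$ to $(s',z')$, first read $z$ from $s$, reaching some state $t$. By strong connectivity of $\A$ there is a word $u$ with $\delta^*(t,u)=s'$. Then read $z'$. Since every new letter is appended with positive probability, the chain can realise the letter sequence $z\,u\,z'$. After $|z|+|u|$ steps the state component is $s'$ and the window is the next $l$ letters, namely $z'$. So every state is reachable from every other, and the chain is irreducible; in particular it has a unique stationary distribution. The case $l=0$ is just strong connectivity of $\A$.

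\emph{Stationary distribution.} We claim that $\mu(s,z)=\pi(s)|\alph|^{-l}$ is stationary, where $\pi$ is stationary for $(\aMarkovChain_n^0)$. Fix a target $(s',zb)$. Its predecessors are the pairs $(s,az)$ with $\delta(s,a)=s'$, each reaching it with probability $|\alph|^{-1}$. The total incoming mass is therefore
\[
\sum_{(s,a):\,\delta(s,a)=s'}\pi(s)|\alph|^{-l}|\alph|^{-1}=|\alph|^{-l}\sum_{s}\pi(s)\frac{|\{a:\delta(s,a)=s'\}|}{|\alph|}=|\alph|^{-l}\pi(s'),
\]
using stationarity of $\pi$ for the $l=0$ chain, whose transition probability from $s$ to $s'$ is $|\{a:\delta(s,a)=s'\}|/|\alph|$. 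By uniqueness, $\mu$ is the stationary distribution.

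\emph{Conclusion.} Lemma~\ref{lemma:markov_ergodic_theorem} now applies to this irreducible chain, started from the distribution of $\aMarkovChain_0^l$, namely $s_0$ together with a uniform word of length $l$. It gives almost-sure limiting frequency $\pi(s)|\alph|^{-l}$ for each state $(s,z)$.

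The only real care point is the Markov-property check: one must verify that the appended letter is independent of the entire past of the chain. This holds because that past depends only on $Z[1\ldots n+l]$. Everything else is routine.
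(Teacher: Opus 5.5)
Your proof is correct. It takes a somewhat different route from the paper's, which is only a sketch that defers to Schnorr--Stimm.

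\textbf{The paper's route.} It applies Lemma~\ref{lemma:markov_ergodic_theorem} only to the base chain $(\mathcal{Z}_n^0)$. It then passes to $l>0$ by noting that the next $l$ letters are independent of the current state.

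\textbf{Your route.} You apply the lemma directly to the augmented chain on $\S\times\alph^l$. To do so, you verify three things:
\begin{itemize}
\item the Markov property;
\item irreducibility, via the word $z\,u\,z'$;
\item stationarity of the product measure $\pi(s)|\alph|^{-l}$.
\end{itemize}
In your argument, the paper's independence observation appears exactly as the reason the product measure is stationary.

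\textbf{What each buys.} Your version is self-contained and gives the almost-sure frequency directly. Independence of $Z[n+1\ldots n+l]$ from the state at each fixed time $n$ only yields the correct expected frequency. Turning the paper's sketch into an almost-sure statement needs an extra step: either your augmented-chain argument, or, for instance, a martingale strong law applied along each residue class of $n$ mod $l$ combined with the ergodic theorem for the base chain. The paper's version buys brevity by relying on the cited reference.
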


\begin{proof}
	This result is proven in \cite[Theorem 4.1]{schnorr1972endliche}. In a nutshell, it is shown as follows. Lemma~\ref{lemma:markov_ergodic_theorem} implies the result for the Markov chain $(\aMarkovChain_n^0)_{n \in \mathbb{N}}$. For any other chain $(\aMarkovChain_n^l)_{n \in \mathbb{N}}$ the result holds by observing that for any index~$n$ the value of the next~$l$ characters from~$Z$ is independent of the state of the automaton at step $n$.
\end{proof}

This proposition states that, when facing a random sequence, each state of the automaton will be visited with some positive frequency, and, moreover, the distribution of the next~$l$ characters of the sequence is completely independent of the current state. \\

As is standard in graph theory, we say that a subset $\S'$ of the set states is a \emph{strongly connected component (SCC)} if for any two $s,s' \in \S'$ there is a path from $s$ to~$s'$. We say that an SCC is a \emph{bottom} one (or BSCC) if it has no outgoing edge to other SCC's. 

We will need a standard fact regarding the existence of ``BSCC-synchronizing words'' that we will use in later proofs without referring to it explicitly.

\begin{lemma}\label{lemma:BSCC_sync}
	Let $\A = (\S, \alph, \delta, s_0)$ be a deterministic finite automaton. Let $\B_1,\ldots, \B_k$ be all the BSCCs of $\A$. Then, there is a word (which we call a ``BSCC-synchronizing" word) $w_{sync}$ satisfying $\delta^*(s, w_{sync}) \in \bigcup_{i=1}^k \B_i$ for every $s \in \S$.
\end{lemma}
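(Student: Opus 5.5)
The plan is to build $w_{sync}$ greedily, by concatenating words that each move one more state into the bottom SCCs while never moving a state back out. Write $\mathcal{B} = \bigcup_{i=1}^k \B_i$.

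Two preliminary facts come first.

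(i) From every state $s \in \S$ some word leads into $\mathcal{B}$. Consider the finite directed graph of the automaton, whose vertices are states and whose edges are the transitions $s \to \delta(s,a)$. Its condensation into SCCs is a finite DAG, so from the SCC of $s$ one can follow edges in the DAG until reaching a sink. A sink of the condensation is by definition a BSCC. Reading the labels along such a path gives a word $u_s$ with $\delta^*(s,u_s) \in \mathcal{B}$.

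(ii) $\mathcal{B}$ is closed under every word. If $s \in \B_i$ and $a \in \alph$, then $\delta(s,a)$ is reachable from $s$. Since $\B_i$ has no outgoing edges to other SCCs, $\delta(s,a) \in \B_i$. By induction on length, $\delta^*(s,u) \in \B_i$ for every $u \in \alph^*$.

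Now enumerate the states as $s_1,\ldots,s_N$ and define words $w_0, w_1, \ldots, w_N$ inductively. Set $w_0 = \emptystr$. Given $w_{j-1}$, let $t = \delta^*(s_j, w_{j-1})$ and put $w_j = w_{j-1} u_t$, where $u_t$ is the word from (i). By construction $\delta^*(s_j, w_j) = \delta^*(t, u_t) \in \mathcal{B}$. For $i < j$, the induction hypothesis gives $\delta^*(s_i, w_{j-1}) \in \mathcal{B}$, and then (ii) gives $\delta^*(s_i, w_j) \in \mathcal{B}$. Hence $w_{sync} = w_N$ sends every state into $\mathcal{B}$.

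None of the steps is hard. The only point needing a little care is (ii), the closure of BSCCs under transitions, because it is what guarantees that states already synchronized are not lost when later words are appended.
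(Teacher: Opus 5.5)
Your proof is correct and uses the same iterative construction as the paper: at step $j$ you append a word that sends the current image of $s_j$ into the union of the BSCCs. Your facts (i) and (ii) make explicit the two points the paper leaves implicit, namely that some BSCC is reachable from every state and that BSCCs are closed under transitions, which is why states already sent into a BSCC stay there.
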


\begin{proof}
	Let $\S = \{s_1 \ldots s_n\}$ and $\B = \bigcup_{i=1}^k \B_i$. We build the word $w_{sync}$ iteratively in $n$ steps, where at step $i$ we obtain a word $w_i$ such that $\delta^*(s_j, w_i) \in \B$ for every $1 \leq j \leq i$. We let $w_0 = \Lambda$, and to define
	the $(i+1)$th word we observe that by definition there must exists some word $z$ such that $\delta^*(s_{i+1}, w_{i}z) \in \B$. Therefore, we can take $w_{i+1} = w_{i}z$ and preserve the invariant. The result is obtained by taking $w_{sync} = w_n$.
\end{proof}

\section{Main result under the $(\bigstar)$ hypothesis}

We will think of the characters from $a \in \alph$ as acting on vectors through the corresponding matrix $\M_a$. More precisely, we will distinguish two different spaces on which these symbols act: first, on the set of 
unitary vectors $\v \in \U_m$ as an evolution $\v \cdot a = \frac{\v \M_{a}}{\norm{\v\M_{a}}}$, and second, over sets $E$ from $2^{\interval}$ by just considering the support of vectors. These actions will be naturally extended to $\alph^*$.

\subsection{Action of $\alph^*$ on $\U_m \cup \{\zero\}$}

The semigroup $\alph^*$ acts on $\U_m \cup \{\zero\}$ via the following operation:
\[
\v \cdot w = \left\{ \begin{array}{ll} \zero & ~ \text{if} ~ \v \M_w = \zero \\ \frac{\v \M_w}{\norm{\v \M_w}}  &  ~\text{otherwise}\end{array} \right.
\]

Since the pseudo-distance $\dh$ is invariant under scalar multiplication of any of its components, it follows that for $\v, \v' \in \U^+_m$, if $\v \M_w$ and $\v' \M_w$ are non-zero, then
\[
\dh(\v \cdot w ,\v' \cdot w) = \dh(\v \M_w , \v' \M_w)
\]

The action is continuous when restricted appropriately.

\begin{lemma}\label{lemma:continuity_of_action}
	Let $E \subseteq [1\ldots m]$ and assume that for every $\u \in \U_m^+(E)$ and $w \in \alph^*$ it holds that $\u \cdot w \neq \zero$. Then, the functions $\Delta:\R^m_{\geq 0}(E) \to \mathbb{R}$ and $\cdot \,w:\U_m^+(E) \to \U_m$ defined previously are continuous on either $\norm{\cdot}$ or $d_H$ for every $w \in \alph^*$. 
\end{lemma}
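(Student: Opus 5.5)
The key point is that everything here is a finite composition of continuous operations, provided no denominator vanishes and no logarithm is taken at $0$; the hypothesis is exactly what rules this out. I will prove continuity with respect to $\norm{\cdot}$ first. Since $\dh$ and $\norm{\cdot}$ induce the same topology on $\U_m^+(E)$, continuity with respect to $\dh$ follows at once. One can also get it directly: Lemma~\ref{lem:dh-vs-norm} shows that $\dh$-convergence implies norm convergence on the domain, and Lemma~\ref{lem:dh-perservation} shows that $\cdot\, w$ is $1$-Lipschitz for $\dh$ wherever its values share a common support.

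\emph{Continuity of $\cdot\, w$.} Fix $w$. The map $\u \mapsto \u\M_w$ is linear, hence continuous in norm. By hypothesis $\u \M_w \neq \zero$ for every $\u \in \U_m^+(E)$, so on this domain $\u \cdot w = \u\M_w/\norm{\u\M_w}$. This is a quotient of continuous functions with a nonvanishing denominator, so it is continuous.

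For the $\dh$ version I would also check that all $\u \cdot w$ with $\u \in \U_m^+(E)$ share one support. This is exactly the first part of Lemma~\ref{lem:dh-perservation}: vectors with a common support are mapped by $\M_w$ to vectors with a common support. Hence $\dh(\u\cdot w,\u'\cdot w) = \dh(\u\M_w,\u'\M_w) \le \dh(\u,\u')$, so the map is even $1$-Lipschitz for $\dh$.

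\emph{Continuity of $\Delta$.} Take $\u$ with support $E$. Applying the hypothesis with $w=a$ for each letter $a \in \alph$ gives $\norm{\u\M_a}>0$. The same holds for unnormalized vectors of support $E$, since $\u\M_a = \norm{\u}\,(\u/\norm{\u})\M_a$. Therefore each $\ln\norm{\u\M_a}$ is finite and continuous, being a composition of a linear map, the norm, and $\ln$ on $(0,\infty)$. Likewise $\ln\norm{\u}$ is continuous because $\u\neq\zero$. The quantity inside the $\min$ is then a finite sum of continuous real-valued functions, and $\min(1,\cdot)$ is continuous, so $\Delta$ is continuous in norm. Continuity in $\dh$ follows from the equivalence of topologies, after restricting to unit vectors using the scale invariance of $\Delta$.

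\emph{Main obstacle.} The only step needing care is the one where the hypothesis is used: ruling out $\u\M_w=\zero$ (resp.\ $\u\M_a=\zero$), which would make the normalization undefined or $\ln$ equal to $-\infty$. The rest is routine. The one real subtlety is the claim that $\dh$ and $\norm{\cdot}$ give the same topology on $\U_m^+(E)$. The norm-to-$\dh$ direction holds because all coordinates in $E$ stay bounded away from $0$ near a fixed point of $\U_m^+(E)$, so the coordinate ratios vary continuously. The $\dh$-to-norm direction is Lemma~\ref{lem:dh-vs-norm}.
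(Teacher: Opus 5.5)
Your proof is correct and follows the same route as the paper: the hypothesis guarantees $\u\M_a \neq \zero$ and $\u\M_w \neq \zero$ on the relevant domain, so $\Delta$ and $\cdot\, w$ are compositions of continuous maps in which $\ln$ is only evaluated on $(0,\infty)$ and the normalizing denominators never vanish. You also handle the $\dh$ case explicitly, using the equivalence of topologies on $\U_m^+(E)$, the common-support part of Lemma~\ref{lem:dh-perservation} (which makes $\cdot\, w$ $1$-Lipschitz for $\dh$), and the scale invariance of $\Delta$; the paper's two-line proof leaves these details implicit.
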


\begin{proof}
	Under the hypothesis, if $\u \in \R^m_{\geq 0}(E)$ then $\u \M_a \neq \zero$ for all $a \in \alph$, which implies the continuity of $\Delta$ thanks to the continuity of $\ln$ in $(0,\infty)$. The continuity of $\cdot \,w$ is immediate from $\u \M_w \neq \zero$.
\end{proof}

\subsection{Action of $\alph^*$ on $2^\interval$ and pseudo-mixing words}

The semigroup $\alph^*$ also acts naturally on the power set $\ps$ via the following operation (also denoted $\cdot$ for simplicity):

\[
E \cdot w = \supp(\v_E \M_w)
\]
where $\v_E$ is any vector of support $E$ (one can readily verify that the definition is independent of the particular choice of $\v_E$). Note that for any $\v \in \U_m$ and $w \in \alph^*$ we have $\supp(\v \cdot w) = \supp(\v) \cdot w$. Note also that this action is compatible with the union of sets: $(E \cup E') \cdot w = (E \cdot w) \cup (E' \cdot w)$. As a corollary, it is also compatible with inclusion: if $E \subseteq E'$, then $E \cdot w \subseteq E' \cdot w$. \\

Let $E \subseteq \interval$. We say that $w \in \alph^*$ \emph{stabilizes} $E$ if $E \cdot w = E$. Furthermore, we say that $w$ \emph{mixes}~$E$ if for all $i \in E$, $\{i\} \cdot w = E$ (note that this in particular implies that $w$ stabilizes~$E$). This is equivalent to saying that the submatrix $\M_w^{[E \times E]}$ is positive. 
Lastly, we say that $w$ \emph{pseudo-mixes}~$E$ if $w$ stabilizes~$E$ and for every~$i \in E$, either $\{i\} \cdot w = E$ or $\{i\} \cdot w = \emptyset$. 
This is equivalent to saying that the submatrix $\M_w^{[E\times E]}$ is equal to $\mathbf{P}^{E'}\M_{w}^{[E'\times E]}$ where $E'\subseteq E$ is the set of~$i$ such that $\{i\} \cdot w = E$, $\M_{w}^{[E'\times E]}$ is positive, and $\mathbf{P}^{E'}$ is the projection over the indices from $E'$ (i.e., $\mathbf{P}^{E'}$ is the map that ``erases'' the coordinates $i \notin E'$). 
\\

Now, consider the deterministic finite automaton $\A$ (without initial state) with set of states $2^\interval$ where we put a transition $E \overset{a}{\rightarrow} E'$ if $E \cdot a = E'$. The empty set $\emptyset$ is stable under dot product: $\emptyset \cdot w = \emptyset$ for all words~$w$. Hence, $\{\emptyset\}$ is a BSCC of $\A$, which we call the null BSCC. If it is the only BSCC of $\A$, things are very simple, as we will see in Proposition~\ref{prop:case_0}. If not, under the assumption $(\bigstar)$ we show that there is a~BSCC $\B$ of $\A$ which has very useful properties for our purposes. 

\begin{lemma} \label{lem:unique-bscc}
Under the assumptions $(\bigstar)$, if there exists at least one non-null BSCC in $\A$, then there is a BSCC~$\B$ of $\A$ with the following properties.  

\begin{itemize}
\item[(a)] For any $i \in \interval$, there exists $E \in \B$ such that $i \in E$. 
\item[(b)] For any $i \in \interval$, $\B$ is below $\{i\}$, and is the unique non-null BSCC below $\{i\}$. 
\item[(c)] For any $E$ in $\B$, there is a $w \in \alph^*$ which pseudo-mixes $E$. 
\end{itemize}

\end{lemma}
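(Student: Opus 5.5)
The plan is to pick $\B$ by an extremal choice and to derive all three properties from two facts. First, the action on $\ps$ commutes with unions and is monotone. Second, hypothesis $(\bigstar)$ says every index reaches every other index. Since some non-null BSCC exists, let $\B$ be a non-null BSCC that contains a set $G$ of \emph{maximal cardinality} among all sets lying in non-null BSCCs of $\A$.

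For (a), this holds for any non-null BSCC. Fix $G\in\B$ and $g\in G$. By $(\bigstar)$ there is $v$ with $\M_v(g,i)>0$, so $i\in\{g\}\cdot v\subseteq G\cdot v$. Since $\B$ is a BSCC, $G\cdot v\in\B$.

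For (b) and (c), the main tool is the product automaton on pairs $(F,H)$, with the map $(F,H)\mapsto F\cup H$. This map is a morphism of automata because $(F\cup H)\cdot w=F\cdot w\cup H\cdot w$. So the image of a BSCC of the product automaton lies inside a single BSCC of $\A$. Now take $F\in\B$ and any nonempty $H$, and push the pair $(F,H)$ into a BSCC of the product by a synchronizing word, as in Lemma~\ref{lemma:BSCC_sync} applied to the product. The union coordinate then lies in a non-null BSCC and contains a member of $\B$. Choosing $F$ of maximal size within $\B$'s orbit, maximality forces $|F'\cup H'|=|F'|$ in that product BSCC, that is $H'\subseteq F'$. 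It should also force the union's BSCC to be $\B$ itself: the union contains a set of $\B$ and has maximal size, so it can only be $\B$. I would check this carefully using the fact that the product BSCC is strongly connected, so we can return to $F'$ exactly.

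To get (b), apply this to $H=\{i\}\cdot u$ lying in some non-null BSCC $\B'$. This yields $H'\in\B'$ and $F'\in\B$ with $H'\subseteq F'$ along a common recurrent orbit. I then want to upgrade the inclusion to $\B'=\B$. The idea is to use $(\bigstar)$ inside $H'$: some element of $H'$ reaches, by some word, an element that the maximal-size set genuinely needs. Then $H'\cdot x$ must eventually contain a set of the same maximal cardinality, and hence be a member of $\B$. That $\B$ is below every $\{i\}$ follows from (a) together with a similar argument.

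For (c), fix $E\in\B$, which we may first assume of maximal size. We build the pseudo-mixing word by treating the indices of $E$ one at a time. We maintain a word $w$ with $E\cdot w=E$ such that every already-treated index $i$ satisfies $\{i\}\cdot w\in\{\emptyset,E\}$. Both values are preserved when we append any further word that stabilizes $E$, because $\emptyset$ is absorbing and $E\cdot u=E$. For the next index $j$, put $F=\{j\}\cdot w\subseteq E$. We need a word $u$ with $E\cdot u=E$ and $F\cdot u\in\{\emptyset,E\}$. To find it, run the pair $(F,E)$ into a product BSCC. If $F$ can be killed, we get $\emptyset$, and we return $E$ to itself inside $\B$. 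Otherwise, using $(\bigstar)$ and the maximality argument above, $F\cdot u$ grows until it equals the image of $E$. We then append a return word mapping that image back to $E$, which is possible since $\B$ is strongly connected.

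I expect the main obstacle to be the ``growth'' step in (c), and correspondingly the upgrade from inclusion to equality in (b). Both amount to showing that a nonempty, never-dying subset of a maximal-size set in $\B$ can be driven to coincide with the full set while the full set is kept in $\B$. This is where $(\bigstar)$ must be combined with maximal cardinality through the product-automaton morphism, and it is the step I would verify most carefully.
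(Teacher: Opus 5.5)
Your part (a) is fine, since it holds for every non-null BSCC. The extremal choice is the wrong one, however. For the $\B$ it selects, (b) and (c) are actually false, so the ``growth'' step you flagged cannot be carried out.

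Take $\alph=\{a,b\}$, $m=2$ and $\M_a=\M_b=\left(\begin{smallmatrix}0&1\\1&0\end{smallmatrix}\right)$. This family is fair and satisfies $(\bigstar)$. The non-null BSCCs of $\A$ are $\{\{1\},\{2\}\}$ and $\{\{1,2\}\}$, and your maximal-cardinality rule selects $\B=\{\{1,2\}\}$. Yet $\{i\}\cdot w$ is a singleton for every $i$ and every $w$. So this $\B$ is below no singleton, contradicting (b), and no word pseudo-mixes $\{1,2\}$, contradicting (c).

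In your product argument you do obtain $H'=\{1\}\subseteq F'=\{1,2\}$, but here $\B'=\{\{1\},\{2\}\}\neq\B$. A never-dying subset of a maximal-cardinality set need not grow: $(\bigstar)$ only says that each index reaches each other index by \emph{some} word, not that images of singletons become large. In general $\A$ can have non-null BSCCs below no singleton, so any argument proving (b) for a BSCC chosen by cardinality must fail.

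The fix is to reverse the extremal principle, as the paper does. Let $F$ be \emph{inclusion-minimal} among members of non-null BSCCs, and let $\B$ be its BSCC. Minimality gives exactly what maximality cannot: a nonempty subset of $F$ lying in a non-null BSCC must equal $F$. The argument then runs as follows.
\begin{itemize}
\item[(i)] Extend a BSCC-synchronizing word by a return word inside $\B$ to get a word $y$, still synchronizing, with $F\cdot y=F$.
\item[(ii)] Since $F=\bigcup_{i\in F}\{i\}\cdot y$ and each $\{i\}\cdot y\subseteq F$ lies in a BSCC, each is $\emptyset$ or $F$. Hence $y$ pseudo-mixes $F$, and $\{j\}\cdot y=F$ for some $j\in F$.
\item[(iii)] Uniqueness in (b): suppose $\{i\}\cdot w$ lies in a non-null BSCC $\B'$. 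Pick $E\in\B$ with $i\in E$ by (a), and $w'$ with $(E\cdot w)\cdot w'=F$. Then $\{i\}\cdot ww'\subseteq F$ lies in $\B'$, hence equals $F$, so $\B'=\B$.
\item[(iv)] Existence in (b): by $(\bigstar)$ choose $w$ with $j\in\{i\}\cdot w$. Then $\{i\}\cdot wy\supseteq F$ is nonempty and lies in a BSCC, hence in $\B$.
\item[(v)] For (c) and any other $E\in\B$, take $u,v$ with $E\cdot u=F$ and $F\cdot v=E$; the word $uyv$ pseudo-mixes $E$.
\end{itemize}
Neither the product automaton nor the index-by-index construction is needed.
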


\begin{proof}

Assuming there are non-null BSCC's, let us fix some non-empty $F \in \ps$ which is minimal with respect to inclusion among all members of all non-null BSCC's of $\A$. Let~$\B$ be the BSCC containing~$F$. We claim that $\B$ has the desired properties. \\

$(a)$ Let $i \in \interval$. Let $j $ be some element of $F$. By $(\bigstar)$, there is a $w \in \alph^*$ such that $i \in \{j\} \cdot w$ and hence $i \in F \cdot w$ (we saw that the dot-product preserves inclusion). But $F$ is in $\B$ which is a BSCC, hence $F \cdot w$ is in $\B$ as well. \\

$(b)$ Let $i \in \interval$. Let us first show that $\B$ is the only possible non-null BSCC below $\{i\}$. Let $\B'$ be any non-null BSCC below $\{i\}$, witnessed by a word~$w$ such that $\{i\} \cdot w \in \B'$. By item $(a)$, there exists some $E \in \B$ such that $i \in E$. Since $E$ is in the SCC of $F$, so is $E'=E \cdot w$, and thus let $w'$ be such that $E' \cdot w'=F$. Then, on the one hand, we have $\{i\} \cdot ww' = (\{i\} \cdot w) \cdot w'$ so $\{i\} \cdot ww' \in \B'$ because $\{i\} \cdot w$ belongs to $\B'$ which is a BSCC. On the other hand, since $\{i\} \subseteq E$, we have $\{i\} \cdot ww' \subseteq E \cdot ww' = E' \cdot w' = F$. Hence, $\{i\} \cdot ww'$ is in a non-null BSCC (that is, $\B'$) but is also contained in~$F$. By minimality of~$F$, we must have $\{i\} \cdot ww' = F$ and thus $\B = \B'$.

Next, let us show that $\B$ is below \emph{some} singleton. Let~$x$ be a BSCC-synchronizing word in the sense of Lemma~\ref{lemma:BSCC_sync}. The set $F \cdot x$ is in $\B$ and hence there is some extension $y$ of $x$ such that $F \cdot y = F$. Note that the set of BSCC-synchronizing words is closed under extension, so $y$ is BSCC-synchronizing as well. 

Now, by preservation of union under the dot-product, we have 
\[
F = F \cdot y = \bigcup_{i \in F} \{i\} \cdot y 
\]
Since $F$ is non-empty, for some $j \in F$, $\{j\} \cdot y$ is non-empty. Moreover, since $y$ is a BSCC-synchronizing word, $\{j\} \cdot y$ is in a BSCC, which must thus be non-null. But since $\{j\} \cdot y \subseteq F$, by minimality of $F$, we must have $\{j\} \cdot y = F$. Hence, $\B$ is below $\{j\}$. Note that since $j$ was taken arbitrary among the elements $i \in F$ such that $\{i\} \cdot y \not= \emptyset$, this argument also shows that the word~$y$ pseudo-mixes~$F$. 

Now that we have established that $\{j\} \cdot y = F$ for some $j \in F$, which we fix, let us consider some other $i \in \interval$. By $(\bigstar)$, there is some $w$ such that $\{j\} \subseteq \{i\} \cdot w$. Hence, $F = \{j\} \cdot y \subseteq \{i\} \cdot wy$, so $\{i\} \cdot wy$ is non-empty. Since $y$ is BSCC-synchronizing, $\{i\} \cdot wy$ is in a BSCC. But we have seen that $\B$ is the only possible non-null BSCC below singletons. Hence, $\{i\} \cdot wy \in \B$, as wanted. \\

$(c)$ We have already seen in the proof of $(b)$ that $y$ pseudo-mixes the set $F$. To show that other members of $\B$ have pseudo-mixing words, let $E \in \B$. By strong connectedness, there are $u, v \in \alph^*$ such that $E \cdot u = F$ and $F \cdot v = E$. We can verify that $uyv$ pseudo-mixes~$E$. Indeed, let $i \in E$. Since $\{i \} \subseteq E$, $\{i\} \cdot u \subseteq E \cdot u = F$. Since $y$ pseudo-mixes $F$, $\{i\} \cdot uy$ is either $F$ or is empty, and thus $\{i\} \cdot uyv$ is either $F \cdot v = E$ or is empty. \end{proof}

The interest of pseudo-mixing words resides in the following extension of Birkhoff's theorem.

\begin{proposition}\label{prop:birkhoff-pseudo-mixing}
Let $E \subseteq \interval$. If $w$ pseudo-mixes~$E$, there is a constant $\tau(E,w) \in (0,1)$ such that, if $\u$ and $\v$ have support~$E$, 
\[
\dh(\u \M_w, \v \M_w) \leq \tau(E,w) \dh(\u, \v)
\]
Equivalently, if $\u, \v \in \U_m^+(E)$,
\[
\dh(\u \cdot w, \v  \cdot w) \leq \tau(E,w) \dh(\u, \v)
\]
\end{proposition}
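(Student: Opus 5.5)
The plan is to reduce to Birkhoff's theorem applied to the positive block $\M_w^{[E'\times E]}$, where $E' \subseteq E$ is the set of indices $i \in E$ with $\{i\}\cdot w = E$. First I dispose of the degenerate case. If $E'$ is empty then $E\cdot w = \emptyset$. That contradicts the stabilization of $E$ unless $E=\emptyset$, and in that case there is nothing to prove. So assume $E' \neq \emptyset$.

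Next I compute $\u\M_w$ for $\u$ with support $E$. Rows $i \notin E$ contribute nothing, since $\u(i)=0$. Rows $i \in E\setminus E'$ satisfy $\{i\}\cdot w=\emptyset$, so the whole row $\M_w(i,\cdot)$ vanishes. Rows $i\in E'$ are supported exactly on $E$ and are positive there. Hence
\[
(\u \M_w)^{[E]} = \u^{[E']}\, \M_w^{[E'\times E]},
\]
where $\u^{[E']}$ is the restriction of $\u$ to the coordinates in $E'$, and $\u\M_w$ vanishes outside $E$. Both $\u\M_w$ and $\v\M_w$ therefore have support $E$, and $\dh(\u\M_w,\v\M_w)$ equals the Hilbert distance between $\u^{[E']}\M_w^{[E'\times E]}$ and $\v^{[E']}\M_w^{[E'\times E]}$, viewed as positive vectors in $\R^{|E|}$.

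Now I apply Birkhoff's theorem with $r=|E'|$, $s=|E|$ and the positive matrix $\M_w^{[E'\times E]}$. This gives a constant $\tau \in (0,1)$ depending only on $E$ and $w$ with
\[
\dh(\u\M_w,\v\M_w) \le \tau\, \dh(\u^{[E']},\v^{[E']}).
\]
It remains to compare $\dh(\u^{[E']},\v^{[E']})$ with $\dh(\u,\v)$. Restricting to a subset of coordinates can only lower the max of the ratios $\u(i)/\v(i)$ and raise their min. So $\dh(\u^{[E']},\v^{[E']}) \le \dh(\u,\v)$, and the claim follows with $\tau(E,w)=\tau$.

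The equivalent formulation for $\u,\v\in \U_m^+(E)$ follows from scale invariance of $\dh$, since $\u\M_w\neq\zero$ by the above. The main point needing care is the row analysis justifying the displayed identity, specifically that the rows indexed by $E\setminus E'$ are entirely zero. This comes from the pseudo-mixing definition: $\{i\}\cdot w=\emptyset$ means $\e_i\M_w=\zero$.
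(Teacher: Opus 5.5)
Your proof is correct and follows the paper's own argument. You factor the action of $\M_w$ on vectors of support $E$ as the projection onto the coordinates in $E'$ followed by the positive block $\M_w^{[E'\times E]}$. You apply Birkhoff's theorem to that block and use the fact that restricting to $E'$ does not increase $\dh$, which gives $\tau(E,w)=\tau(\M_w^{[E'\times E]})$. Your explicit handling of the degenerate case $E'=\emptyset$ and of the zero rows indexed by $E\setminus E'$ simply spells out what the paper leaves implicit.
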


\begin{proof}
By definition, if $w$ pseudo-mixes~$E$ tthen for every~$i \in E$, either $\{i\} \cdot w = E$ or $\{i\} \cdot w = \emptyset$, with at least one index~$i$ of the first type. Let $E'$ be the set of indices of the first type and let~$r$ be its cardinality. As we said when we introduced pseudo-mixing words, the action of $\M_w$ on $\U_m^+(E)$ can be seen as the composition of the projection map~$\mathbf{P}^{E'}$ which erases the coordinates whose index is not in $E'$, followed by the linear map $\U_m^+(E') \rightarrow \U_m^+(E)$ induced by $\M_w$. Since the matrix $\M_w^{[E' \times E]}$ only has positive coefficients (this is the meaning of $\{i\} \cdot w = E$ for all~$i \in E'$), we have:
\begin{align*}
\dh(\u \cdot w, \v  \cdot w)& =  \dh \Big((\u \mathbf{P}^{E'})\M_w,(\v \mathbf{P}^{E'})\M_w\Big) \\
 &\leq   \tau(\M_w^{[E' \times E]})\, \dh(\u \mathbf{P}^{E'},\v \mathbf{P}^{E'})\\
 &  \leq   \tau(\M_w^{[E' \times E]})\, \dh(\u,\v)
\end{align*}
(the first inequality is Birkhoff's theorem applied to the matrix $\M_w^{[E' \times E]}$ and the second one is the simple observation that projections do not increase $\dh$ distance: this follows directly from the definition of $\dh$ and the fact that $\max_{i \in E} (\u(i)/\v(i)) \geq \max_{i \in E'} (\u(i)/\v(i))$ and $\max_{i \in E} (\v(i)/\u(i)) \geq \max_{i \in E'} (\v(i)/\u(i))$ when $E' \subseteq E$).
We have thus proven our theorem by taking $\tau(E,w)=\tau(\M_w^{[E' \times E]})$.
\end{proof}

Finally, we show that pseudo-mixing words contract the set $\U^+_m(E)$.

\begin{proposition} \label{prop:our-perron-frobenius}
Let $E \subseteq \interval$ and let~$w \in \alph^*$ be a word which pseudo-mixes~$E$. Then, there exists a vector $\x \in \U_m^+(E)$ and a constant $c>0$ such that for every~$k$ and for every $\v \in \U_m^+(E)$,
\[
\dh(\v \cdot w^k, \x) \leq c\, \tau(w,E)^k
\]
\end{proposition}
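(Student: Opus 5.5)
Since $w$ pseudo-mixes $E$, it stabilizes $E$: $E\cdot w = E$. For every $\v \in \U_m^+(E)$ we therefore have $\supp(\v\cdot w) = E$, so $\v \cdot w \in \U_m^+(E)$ and in particular $\v\cdot w\neq\zero$. Hence the map $f:\v\mapsto \v\cdot w$ sends $\U_m^+(E)$ into itself. By Proposition~\ref{prop:birkhoff-pseudo-mixing} it is a strict contraction for $\dh$ with constant $\tau=\tau(E,w)<1$. The plan is to show that $f$ has a fixed point $\x$ and then run the standard Banach-type estimate. The difficulty is that $(\U_m^+(E),\dh)$ is not obviously complete in a usable way, and we also need a constant $c$ that works uniformly for every $\v\in\U_m^+(E)$.

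\textbf{Step 1: $f$ maps everything into a $\dh$-bounded set.} Let $E'\subseteq E$ be the set of indices with $\{i\}\cdot w = E$. By pseudo-mixing, $\v\M_w = \sum_{i\in E'}\v(i)\,\e_i\M_w$, and each $\e_i\M_w$ with $i \in E'$ has support exactly $E$. So $\v\M_w$ lies in the cone generated by the finitely many vectors $\e_i\M_w$, $i\in E'$, all of which have support $E$. Any two vectors in this cone are at $\dh$-distance at most
\[
\delta := \max_{i,i'\in E'}\dh(\e_i\M_w,\e_{i'}\M_w),
\]
because ratios of convex combinations are bounded by the extreme ratios. This gives $\dh(f(\u),f(\v))\le \delta$ for all $\u,\v\in\U_m^+(E)$, a bound independent of $\u,\v$, even though $\dh(\u,\v)$ itself may be arbitrarily large.

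\textbf{Step 2: a fixed point.} Let $K=f(\U_m^+(E))$. It is contained in $K_0$, the set of normalized vectors of the cone generated by $\{\e_i\M_w\}_{i \in E'}$. This cone is the image of a simplex, so $K_0$ is compact in $\norm{\cdot}$. All its elements have support $E$ with coordinates bounded below by a positive constant, so on $K_0$ the two metrics $\dh$ and $\norm{\cdot}$ are equivalent (uniformly, because of that lower bound). Thus $(K_0,\dh)$ is a complete metric space, and $f$ maps $K_0$ into $K\subseteq K_0$ as a strict $\dh$-contraction. Banach's fixed point theorem then gives a unique $\x\in K_0\subseteq\U_m^+(E)$ with $\x\cdot w=\x$. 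As an alternative to Banach, one could take any $\v_0$ and show that $(\v_0\cdot w^k)$ is $\dh$-Cauchy, hence $\norm{\cdot}$-Cauchy by Lemma~\ref{lem:dh-vs-norm}. Its limit lies in the compact set $K_0$, and continuity of $f$ (Lemma~\ref{lemma:continuity_of_action}, or directly from the contraction) makes that limit a fixed point.

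\textbf{Step 3: the estimate.} For $k\ge1$ and $\v\in\U_m^+(E)$, applying Proposition~\ref{prop:birkhoff-pseudo-mixing} $k-1$ times and then Step 1 gives
\[
\dh(\v\cdot w^k,\x)=\dh\big(f^{k-1}(f(\v)),f^{k-1}(f(\x))\big)\le\tau^{k-1}\dh(f(\v),f(\x))\le \tau^{k-1}\delta .
\]
So the statement holds for $k\ge 1$ with $c=\delta/\tau$ (taking $c$ slightly larger if $\delta=0$, to keep $c>0$). The case $k=0$ cannot hold uniformly, since $\dh(\v,\x)$ is unbounded on $\U_m^+(E)$. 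I will therefore read the statement for $k\ge1$, or for each fixed $\v$ set $c=\max(\delta/\tau,\dh(\v,\x))$ if $k = 0$ must be included. The main obstacle is the uniformity in Step 1: the cone-generation argument is exactly what makes the constant independent of $\v$.
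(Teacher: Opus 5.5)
Your proof is correct and follows the paper's strategy. Both arguments show that the image $\mathbf{I}_1=\{\v\cdot w \mid \v\in\U_m^+(E)\}$ has finite $\dh$-diameter, then iterate Proposition~\ref{prop:birkhoff-pseudo-mixing} to get the bound $\tau^{k-1}\,\mathrm{diam}(\mathbf{I}_1)$, i.e.\ $c=\mathrm{diam}(\mathbf{I}_1)/\tau$. You get the diameter bound from the mediant inequality on the cone spanned by the $\e_i\M_w$, $i\in E'$, while the paper bounds the entries of $\M_w^{[E'\times E]}$ in $[\alpha,\beta]$ and compares with $\one_E$.

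Your Banach fixed-point argument on the compact set $K_0$ produces $\x$ more carefully than the paper's appeal to the nested intersection $\bigcap_k\mathbf{I}_k$. The sets $\mathbf{I}_k$ are not closed, so nonemptiness of the intersection and $\supp(\x)=E$ need exactly your uniform lower bound on coordinates. You are also right that both arguments give the uniform bound only for $k\ge 1$; the case $k=0$ fails whenever $|E|\ge 2$.
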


\begin{proof}
Let us define $\mathbf{I}_k$ to be the image of $\U_m^+(E)$ under $w^k$:
\[
\mathbf{I}_k = \{\v \cdot w^k \mid \v \in \U_m^+(E)\}
\]
Since for every~$\v$, we have $\v \cdot w^{k+1} = (\v \cdot w) \cdot w^k$, we see that the sequence $(\mathbf{I}_k)$ satisfies $\mathbf{I}_0 \supseteq \mathbf{I}_1 \supseteq\mathbf{I}_2 \ldots$. 

We show that $\mathbf{I}_1$ has finite diameter w.r.t. the distance $\dh$. Let $E'$ be as in the previous proof, $\v$ be any vector in $\U_m^+(E)$, and $j \in E'$ be such that $\v(j) = \max_{i \in E'} \v(i)$. The matrix $\M_w^{[E' \times E]}$ only has positive entries, let us say they all belong to the interval $[\alpha,\beta]$ where $0<\alpha \leq \beta$. Then, every coordinate of $\v \M_w$ is at least $\alpha \v(j)$ and at most $m \beta \v(j)$, hence the ratio between the min and max value is at most $\frac{m \beta}{\alpha}$. Another way to say this is that 
\[
\dh(\v \M_w,\one_E) \leq \ln \left(\frac{m \beta}{\alpha} \right)  
\]
where $\one_E$ is the vector whose coordinates are $1$ on $E$ and $0$ outside~$E$. Since $\v$ is arbitrary, by the triangular inequality this shows that $\mathbf{I}_1$ has $\dh$-diameter at most $2\ln \left(\frac{m \beta}{\alpha} \right) $. Now, applying Proposition~\ref{prop:birkhoff-pseudo-mixing} inductively, we get that for every~$k \geq 1$, $\mathbf{I}_k$ has diameter at most $2\ln \left(\frac{m \beta}{\alpha} \right) \tau(w,E)^{k-1}$, which tends to~$0$ as $k$ tends to $+\infty$. Recall that $\dh$ dominates the $\norm{}$-distance on $\U_m^+(E)$, thus the $\norm{}$-diameter of $\mathbf{I}_k$ must also be bounded by $2\ln \left(\frac{m \beta}{\alpha} \right) \tau(w,E)^{k-1}$. Since $(\mathbb{R}^m,\norm{})$ is a Banach space, the decreasing intersection $\bigcap_{k} \mathbf{I}_k$ must be a singleton $\{\x\}$ and for every~$k$, we have
\[
\dh(\v \cdot w^k, \x) \leq 2\ln \left(\frac{m \beta}{\alpha} \right) \tau(w,E)^{k-1}
\]
and the result follows by taking $c= 2\ln \left(\frac{m \beta}{\alpha} \right) \tau(w,E)^{-1}$. 
\end{proof}

\subsection{A finer statement under the $(\bigstar)$ hypothesis}

We will now state a finer version of Theorem~\ref{thm:main} (under assumption~$(\bigstar)$), by distinguishing three possible scenarios, each of which leading to a different behavior of the sequence $(\v \M_{X[1 \ldots n]})_{n \in \N}$. 

Let us call $\hat{\A}$ the sub-automaton of $\A$ restricted to the states reachable from singleton states. The first possible (and simplest!) scenario is the following. 

\begin{proposition}[Case 0]\label{prop:case_0}
If $\hat{\A}$ only has one BSCC (the null one) then for every normal~$X$ and for every starting vector~$\v$ it holds that $\v \M_{X[1\ldots n]}$ is ultimately~$\zero$.
\end{proposition}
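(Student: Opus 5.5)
Since the support of $\v \M_{X[1\ldots n]}$ is $\supp(\v)\cdot X[1\ldots n] = \bigcup_{i \in \supp(\v)} \{i\}\cdot X[1\ldots n]$, it suffices to prove that for every singleton $\{i\}$ the trajectory $\{i\}\cdot X[1\ldots n]$ in $\A$ eventually reaches $\emptyset$. Once this holds for all $i$, taking $n$ larger than all the corresponding hitting times gives $\supp(\v \M_{X[1\ldots n]})=\emptyset$. The emptiness persists for larger $n$ because $\emptyset\cdot w=\emptyset$. So the plan is to reduce the statement to a question about the deterministic automaton $\hat{\A}$ and then use normality of $X$ to force a synchronizing word to appear.

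First, I will apply Lemma~\ref{lemma:BSCC_sync} to $\hat{\A}$. Its state set is closed under the action and has finitely many states, so the lemma gives a word $w_{sync}$ such that $E\cdot w_{sync}$ lies in a BSCC of $\hat{\A}$ for every state $E$ of $\hat{\A}$. By hypothesis the only BSCC is $\{\emptyset\}$, so $E\cdot w_{sync}=\emptyset$ for all states $E$ of $\hat{\A}$. One point to check here is that BSCCs of $\hat{\A}$ are BSCCs of $\A$ reachable from singletons. This holds because $\hat{\A}$ is closed under successors, so its bottom components are genuinely bottom. This step is routine.

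Second, I will use normality. Since $X$ is normal, the word $w_{sync}$ occurs in $X$, say $X[k+1\ldots k+|w_{sync}|]=w_{sync}$. For each $i$, the set $E_i=\{i\}\cdot X[1\ldots k]$ is a state of $\hat{\A}$, being reachable from a singleton. Hence $\{i\}\cdot X[1\ldots k+|w_{sync}|]=E_i\cdot w_{sync}=\emptyset$. The same $k$ therefore works simultaneously for all $i$, and so $\v\M_{X[1\ldots n]}=\zero$ for all $n\geq k+|w_{sync}|$.

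I expect no serious obstacle. The only subtle point is that Lemma~\ref{lemma:BSCC_sync} must be applied to $\hat{\A}$ rather than to $\A$, since $\A$ may have non-null BSCCs that are not reachable from singletons. In the same vein, I need to justify that the union argument only involves states of $\hat{\A}$. It does: we apply the word to each singleton separately and never to $\supp(\v)$ as a whole. Note also that only one occurrence of $w_{sync}$ is needed, so the full strength of normality is not used. In particular, the conclusion holds for every sequence containing $w_{sync}$ as a factor.
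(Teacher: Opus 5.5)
Your proof is correct and is essentially the paper's argument: a BSCC-synchronizing word for $\hat{\A}$ sends every state of $\hat{\A}$ to $\emptyset$, and normality forces that word to occur in $X$. Your reduction to singletons via $E\cdot w=\bigcup_{i\in E}\{i\}\cdot w$ makes explicit why only states of $\hat{\A}$ are involved, a point the paper's one-line proof leaves implicit.
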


\begin{proof}
This follows immediately from the existence of BSCC-synchronizing words. Namely, by normality the BSCC-synchronizing appears with positive frequency on $X$, and thus $\v \M_{X[1\ldots n]}=0$ for some $n$. 
\end{proof}

If we are not in Case 0, Lemma~\ref{lem:unique-bscc}(b) tells us that $\hat{\A}$ has exactly one non-null BSCC, which we called $\B$ (and $\hat{\A}$ may or may not also have a null BSCC). To present the last two scenarios, we fix some member~$F$ of~$\B$ (for example, the minimal~$F$ we considered in Proposition~\ref{lem:unique-bscc}). We further fix a word~$w$ which pseudo-mixes~$F$ (which exists by Lemma~\ref{lem:unique-bscc}(c)) and call~$\x \in \U_m^+(F)$ the vector that comes out of Proposition~\ref{prop:our-perron-frobenius} when $E=F$. 

\begin{proposition}[Case 1]\label{prop:case_0_statement}
If for some word~$z \in \alph^*$ we have that $\Delta(\x \cdot z) > 0$, then for every normal~$X$ and every starting vector~$\v$, either $\v \M_{X[1 \ldots n]}$ is ultimately~$\zero$, or  $\norm{\v \M_{X[1 \ldots n]}}$ converges exponentially fast to~$0$ while remaining positive (and there are vectors~$\v$ for which the latter happens). 
\end{proposition}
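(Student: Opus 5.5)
We would first reduce to starting vectors of the form $\e_i$. Since all vectors involved are non-negative, $\norm{\v \M_{X[1\ldots n]}} = \sum_i \v(i)\, \norm{\e_i \M_{X[1\ldots n]}}$, so it suffices to treat each $\e_i$ separately. By Lemma~\ref{lem:unique-bscc}(b), the support trajectory $\{i\}\cdot X[1\ldots n]$ runs in $\hat{\A}$. A BSCC-synchronizing word occurs in the normal sequence $X$, so after some finite stage the support is either $\emptyset$, in which case $\e_i \M_{X[1\ldots n]}$ is ultimately $\zero$, or it lies in $\B$ forever. We assume the second situation and, shifting $X$, that the trajectory starts inside $\B$. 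For the existence part, we would take $\v=\x$ itself, or any vector with support $F$: under $(\bigstar)$ its support stays in $\B$ and never becomes empty along every word, and the decay argument below then shows $\norm{\v\M_{X[1\ldots n]}}\to 0$ while remaining positive.

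The second step shows that the cumulative risk $\Gamma$ grows linearly along $X$. Fix $E\in\B$ and a word $u_E$ with $E\cdot u_E = F$. By Proposition~\ref{prop:our-perron-frobenius}, for any direction $\u$ with support $E$, the direction $(\u\cdot u_E)\cdot w^k$ lies within $\dh$-distance $c\,\tau(F,w)^k$ of $\x$. The map $\y\mapsto \Delta(\y\cdot z)$ is continuous on $\U_m^+(F)$: every such $\y$ sends $z$ and $za$ to the same supports, so the pattern of vanishing norms is constant and Lemma~\ref{lemma:continuity_of_action} applies on that piece. Hence there are $k$ and $\delta>0$ such that $\Delta(\u\cdot u_E w^k z)\geq \delta$ for every $E\in\B$ and every $\u$ with support $E$. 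We then run the deterministic automaton $\B$ on $X$ and use the normal-sequence version of Proposition~\ref{prop:automata_behaves_as_markov_chain}, as in Schnorr--Stimm: each state $E\in\B$ is visited with positive frequency, and the block $u_E w^k z$ follows it with frequency $|\alph|^{-|u_Ew^kz|}$ relative to those visits. This gives $\rho>0$ with $\Gamma^{\e_i}(X[1\ldots n])\geq \rho n$ for all large $n$.

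The third step, which is the main obstacle, is to turn the supermartingale $L+\Gamma$ into an upper bound along the \emph{fixed} sequence $X$, namely $L(X[1\ldots n]) \le -\rho n/2$ for large $n$. This would give $\norm{\e_i\M_{X[1\ldots n]}}\le e^{-\rho n/2}$ with the norm positive. Proposition~\ref{prop:subexponential_increase_of_A_martingales} applies, since $L$ increases by at most $\ln\max_a\norm{\M_a}$ and $\Gamma$ by at most $1$. It shows that, for a \emph{fixed} starting direction $\u$ and a uniformly random block $b$ of length $\ell$, the event $(L^\u+\Gamma^\u)(b) > (L^\u+\Gamma^\u)(\emptystr)+\rho\ell/4$ has probability at most $e^{-c\ell}$.

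To pass to $X$, we would cut $X$ into aligned blocks of length $\ell$; normality makes the frequencies of these blocks uniform. The difficulty is that the direction at the start of each block varies over a continuum. We plan to use forgetfulness here. Write each block as $b=pq$, where $p$ is a fixed-length prefix long enough that, except on a set of blocks of exponentially small frequency, $p$ contains an occurrence of $u_E w^{k'}$ placed so that the direction at the start of $q$ is $\dh$-close to a direction determined by $p$ and the current support alone. This reduces the number of relevant starting directions to a finite family that does not depend on $\ell$. A union bound over this family with the Azuma estimate then shows that all but a fraction $e^{-c'\ell}$ of the blocks satisfy $(L+\Gamma)$-increment $\le \rho\ell/4$. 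Lemma~\ref{lem:dh-perservation} and Lemma~\ref{lem:dh-vs-norm} control the error made by replacing the true direction with its representative. The bad blocks contribute at most $\ell\ln\max_a\norm{\M_a}$ each. Averaging over blocks, and combining with $\Gamma\ge\rho n$, yields $L(X[1\ldots n])\leq -\rho n/2$ for $\ell$ large, which is the claimed exponential convergence to $0$.
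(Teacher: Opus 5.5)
Your proof is correct, but it takes a different route from the paper. The obstacle you flag in your third step (a continuum of block-start directions) never arises in the paper's argument.

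\textbf{The paper's route.} All the probabilistic work is done on \emph{random} words started from the $m$ basis vectors. The ergodic theorem for the support chain from $\{i\}$ gives, with high probability, $\e_i\M_z=\zero$ or $\Gamma^{\e_i}(z)\ge\alpha N$. The one-sided Azuma bound on $L^{\e_i}+\Gamma^{\e_i}$ then gives $\norm{\e_i\M_z}\le e^{-(\alpha/2)N}$ with high probability. A union bound over $i$, together with $\norm{\v\M_z}=\sum_i\v(i)\norm{\e_i\M_z}$, yields a set of blocks of length $N$, of measure close to $1$, on which $\norm{\v\M_z}\le e^{-(\alpha/2)N}\norm{\v}$ for \emph{every} $\v$. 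Normality enters only at the end, through aligned-block frequencies. The direction at the start of each block does not matter, because the per-block contraction is uniform in $\v$.

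\textbf{Your route.} You work pathwise. You get $\Gamma\ge\rho n$ along the fixed $X$ from the normal-sequence version of the Schnorr--Stimm state-frequency lemma with look-ahead; the paper states this lemma only for random sequences. Because $L$ is a logarithm, you cannot use linearity, so you need forgetfulness to replace the block-start directions by finitely many representatives. This is closer in spirit to the deterministic Schnorr--Stimm proof, but it costs an extra layer of parameters ($k'$, then $|p|$, then $\ell$). The paper's argument is shorter because it turns the $L+\Gamma$ information into a \emph{linear} statement (norm contraction) before touching $X$.

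\textbf{Details to repair in your version.}
\begin{itemize}
\item With $|p|$ fixed, the fraction of blocks whose prefix lacks the pattern is a small constant. It is exponentially small in $|p|$, not in $\ell$. This suffices, since you only need the total bad fraction $\beta$ to satisfy $\beta(\ln|\alph|+1)$ small compared with $\rho$.
\item A bad block can raise $L+\Gamma$ by up to $\ell(\ln|\alph|+1)$, not just $\ell\ln\max_a\norm{\M_a}$, because $\Gamma$ also grows by up to $1$ per step.
\item Lemma~\ref{lem:dh-vs-norm} is not the right tool for replacing a direction $\y$ by a representative $\y'$ at $\dh$-distance $\varepsilon$. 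Use the coordinatewise bounds $e^{-\varepsilon}\y'\le\y\le e^{\varepsilon}\y'$ for unit vectors. These are preserved by every $\M_{q'}$, so each $\ln\norm{\cdot\,\M_{q'}}$ moves by at most $\varepsilon$ and each $\Delta$ term by at most $2\varepsilon$.
\item The error over $q$ is therefore $O(\varepsilon\ell)$, not $O(\varepsilon)$. So $k'$ must be chosen so that $\varepsilon=c\,\tau(F,w)^{k'}$ is small relative to $\rho$, and only then are $|p|$ and $\ell$ fixed.
\end{itemize}
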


In the opposite case, we have a very different behavior. 

\begin{proposition}[Case 2]\label{prop:case_1_statement}
If for every word~$z \in \alph^*$ we have that $\Delta(\x \cdot z) = 0$, then for every normal~$X$ and for every starting vector~$\v$, either $\v \M_{X[1 \ldots n]}$ is ultimately~$\zero$, or  $\norm{\v \M_{X[1 \ldots n]}}$ converges exponentially fast to a positive constant (and there are vectors~$\v$ for which the latter happens). 
\end{proposition}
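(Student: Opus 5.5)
The plan is to exploit the hypothesis $\Delta(\x\cdot z)=0$ to produce an exactly norm-preserving reference trajectory, and then use pseudo-mixing words and normality to show that every nonzero trajectory is attracted to it exponentially fast in $\dh$.

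\textbf{Step 1: the orbit of $\x$ preserves norms.} Let $\u=\x\cdot z$. The equality $\Delta(\u)=0$ says that both inequalities in~(\ref{eq:jensen}) are equalities. Equality in the superfairness step gives $\frac{1}{|\alph|}\sum_a\norm{\u\M_a}=\norm{\u}$. Equality in Jensen for the strictly concave $\ln$ then forces all the $\norm{\u\M_a}$ to be equal, and in particular all positive. Hence $\norm{\u\M_a}=1$ for every $a$ and every $\u$ in the orbit $\{\x\cdot z\}$, and by induction $\norm{\x\M_z}=1$ for all $z$. Taking $\v=\x$ already yields a starting vector for which the norm converges (it is constant) to the positive value $1$, which gives the parenthetical existence claim.

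\textbf{Step 2: reduction to supports in $\B$ and choice of a comparison trajectory.} Fix a normal $X$ and $\v\neq\zero$, and write $\v_n=\v\cdot X[1\ldots n]$. Since $E\cdot w=\bigcup_{i\in E}\{i\}\cdot w$, the support $\supp(\v)\cdot w$ is a union of states of $\hat\A$. A BSCC-synchronizing word for $\hat\A$ occurs in $X$ by normality. After such an occurrence, say at time $n_0$, each $\{i\}\cdot X[1\ldots n_0]$ is $\emptyset$ or lies in $\B$, and then so does every later image, by Lemma~\ref{lem:unique-bscc}(b). Hence either $\v_n$ is ultimately $\zero$, or $G_n:=\supp(\v_n)$ is a nonempty union of members of $\B$ for $n\ge n_0$. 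The cleanest way to stay inside $\B$ is to note that $\supp(\v)\cdot X[1\ldots n_0]$ is itself a state reachable from singletons' images, and to apply the same synchronizing argument to the automaton on arbitrary subsets. I expect this bookkeeping to be a minor technical point.

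Since $\B$ is strongly connected and contains $F$, pick $z$ with $F\cdot z=G_{n_0}$, and set $\y_n=\x\cdot zX[n_0+1\ldots n]$. Then $\y_n$ and $\v_n$ have the same support for all $n\ge n_0$, so $\dh(\v_n,\y_n)<\infty$. By Lemma~\ref{lem:dh-perservation} this distance is nonincreasing in $n$.

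\textbf{Step 3: uniform contraction via a universal word (the main obstacle).} The hard point is to show that $\dh(\v_n,\y_n)$ actually decays exponentially along the fixed normal $X$, rather than just for random sequences. I would avoid Markov-chain arguments by building one word $P$ that contracts from every state of $\B$. Enumerate $\B=\{H_1,\ldots,H_r\}$ and let $p_H$ be a pseudo-mixing word for $H$ (Lemma~\ref{lem:unique-bscc}(c)). Set $P_1=p_{H_1}$ and $P_{k+1}=P_k\,p_{H_{k+1}\cdot P_k}$, and put $P=P_r$. Starting from any $H_k\in\B$, reading $P$ passes, at some point, through a segment that pseudo-mixes the current support.

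By Proposition~\ref{prop:birkhoff-pseudo-mixing} that segment contracts $\dh$ by a factor at most $\tau:=\max_H\tau(H,p_H)<1$, and Lemma~\ref{lem:dh-perservation} ensures that the rest of $P$ does not expand $\dh$. Normality gives at least $\beta n$ disjoint occurrences of $P$ in $X[1\ldots n]$ for some $\beta>0$ and all large $n$. Therefore
\[
\dh(\v_n,\y_n)\le \dh(\v_{n_0},\y_{n_0})\,\tau^{\beta n-O(1)},
\]
and by Lemma~\ref{lem:dh-vs-norm} we get $\norm{\v_n-\y_n}\le Ce^{-\alpha n}$.

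\textbf{Step 4: convergence of the norm.} We have
\[
\frac{\norm{\v\M_{X[1\ldots n+1]}}}{\norm{\v\M_{X[1\ldots n]}}}=\norm{\v_n\M_{X[n+1]}},
\]
while $\norm{\y_n\M_a}=1$ by Step 1. Hence
\[
\bigl|\norm{\v_n\M_a}-1\bigr|\le \max_a\norm{\M_a}\cdot\norm{\v_n-\y_n}=O(e^{-\alpha n}).
\]
It follows that $\norm{\v\M_{X[1\ldots n]}}$ equals $\norm{\v\M_{X[1\ldots n_0]}}\prod_{k=n_0}^{n-1}(1+\epsilon_k)$ with $|\epsilon_k|=O(e^{-\alpha k})$. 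Such a product converges to a positive finite limit, and its tail differs from the limit by $O(e^{-\alpha n})$. This is exactly the claimed exponential convergence to a positive constant.
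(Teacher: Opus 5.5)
Your Step 2 fails as written for starting vectors whose support is not a singleton. The set $G_{n_0}=\supp(\v)\cdot X[1\ldots n_0]$ is only a \emph{union} of members of $\B$, and it need not belong to $\B$. In that case no $z$ with $F\cdot z=G_{n_0}$ exists. Your suggested remedy, synchronizing in the automaton on all subsets, can land in a different non-null BSCC of $\A$. Here is an example. Take $m=2$, $\alph=\{a,b\}$, and $\M_a=\M_b$ equal to the permutation matrix exchanging the two coordinates. This family is fair, satisfies $(\bigstar)$, and is in Case 2 with $\B=\{\{1\},\{2\}\}$, $F=\{1\}$ and $\x=\e_1$. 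However, $\{\{1,2\}\}$ is a BSCC on its own. Since every $\M_w$ is a permutation matrix, no word contracts $\dh$ on support $\{1,2\}$, so Step 3 cannot be run there with any reference trajectory.

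The repair is immediate. For $\v\in\vect$ one has $\norm{\v\M_w}=\sum_i\v(i)\norm{\e_i\M_w}$, and for $\v=\e_i$ your Step 2 is correct, because $\{i\}\cdot X[1\ldots n_0]\in\B\cup\{\emptyset\}$. Run Steps 2--4 for each $\e_i$ with $\v(i)>0$ and sum. Then $\norm{\v\M_{X[1\ldots n]}}$ is either ultimately $0$ (when every term is) or converges exponentially fast to a positive limit. The paper also only analyses the trajectories of the $\e_i$ and passes to general $\v$ by sublinearity.

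With that repair, your proof is correct and genuinely different from the paper's.

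\textbf{The paper's route.} It first works with a uniformly random $Z$. Markov-chain frequencies of visits to $F$ followed by $w$ give almost-sure exponential $\dh$-convergence of $\e_i\cdot Z[1\ldots n]$ to the non-betting orbit of $\x$. Via the live capital $\Live$, Kolmogorov's inequality, a union bound and sublinearity of $\Live$, this becomes a statement uniform in $\v$: most blocks of length $N$ shrink $\Live$ by $e^{-\alpha N}$ (Claim~\ref{claim:live5}). Normality then enters only through block frequencies, with bad blocks costing at most $|\alph|^N$. Finally, Proposition~\ref{prop:live}(e) turns summable $\Live$ into convergence of the norm.

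\textbf{Your route.} You stay on the fixed normal $X$ and use normality only for the positive frequency of one universal word $P$. This word is built by chaining pseudo-mixing words for all members of $\B$. Each occurrence deterministically contracts $\dh$ towards a norm-preserving reference orbit, and a multiplicative telescoping finishes.

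\textbf{What each buys.} Your argument is shorter and avoids all probabilistic tools and the $\Live$ function. It also proves that the limit is positive, and shows that $\v=\x$ witnesses the existence clause. The paper's proof leaves both points implicit; it concludes only convergence to some limit. The paper's route, in exchange, produces the uniform-in-$\v$ block estimate, which is exactly the form of the inductive hypothesis in Claim~\ref{claim:convergence_for_s_SCCs} used to remove $(\bigstar)$. Your constants depend on $X$ and $\v$ through $n_0$ and $\dh(\v_{n_0},\y_{n_0})$, so your argument does not feed into that induction directly.
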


In the following we prove both Propositions~\ref{prop:case_0_statement} and~\ref{prop:case_1_statement}.

\subsection{Case 1 and fast convergence to~$\zero$}

Assume that we are in the above Case 1. From here on, we fix a word~$z$ and $\delta>0$ such that $\Delta(\x \cdot z) > 2\delta$. By Lemma~\ref{lemma:continuity_of_action} we know that $\Delta$ and the dot-product by a fixed word are continuous over $\U_m^+(F)$. Hence, there is some~$\varepsilon>0$ such that $\dh(\x,\x')< \varepsilon$ implies $\Delta(\x' \cdot z)>\delta$. Consequently, by Proposition~\ref{prop:our-perron-frobenius}, there exists a~$k$, which we now fix, such that for all~$\v \in \U_m^+(F)$, 
\[
\Delta(\v \cdot w^k z) > \delta
\]
which, by invariance of $\Delta$ under positive scalar multiplication, is equivalent to saying:
\[
\text{for all~$\v$ of support~$F$},\  \Delta(\v \M_{w^k z}) > \delta
\]

Equipped with this fact, let us analyze what happens if we choose a starting vector $\e_i$ in the basis and look at the trajectory $(\e_i \M_{Z[1 \ldots n]})_{n \in \N}$ when~$Z \in \alph^\omega$ is an infinite sequence taken uniformly at random. 

Let us first consider the trajectory of the support $\supp(\e_i \M_{Z[1 \ldots n]})$, which is equal to {$\{i\} \cdot Z[1 \ldots n]$}. This amounts to treating $\hat{\A}$ as a Markov chain with starting state $\{i\}$ and where each transition $E \overset{a}{\rightarrow} E'$ has probability $1/|\alph|$. 
Due to the existence of BSCC-synchronizing words we know that with probability 1 a trajectory of a Markov chain eventually reaches a BSCC, and once this happens it holds that, due to Proposition~\ref{prop:automata_behaves_as_markov_chain}, with probability~$1$ each state~$E$ of the BSCC occurs in the trajectory with asymptotic frequency $\pi(E)>0$.
In our case $\hat{\A}$ has at most two BSCC's: $\B$ and possibly the null one. This means that with probability~$1$ over~$Z$, either $\{i\} \cdot Z[1 \ldots n]$ is $\emptyset$ for almost all~$n$, or $\{i\} \cdot Z[1 \ldots n]$ is equal to~$F$ with asymptotic frequency $\pi(F)>0$. 
If the latter happens, by Proposition~\ref{prop:automata_behaves_as_markov_chain}, setting $l=|w^k z|$ we see that, with (conditional) probability 1, with asymptotic frequency $\pi(F)|\alph|^{-l}$ of indices~$n$, we have $\{i\} \cdot Z[1 \ldots n]=F$ \emph{and} $Z[n+1,n+l]=w^k z$. Observe that by definition of $w$ and $k$, for any such~$n$, we must have $\Delta(\e_i \cdot Z[1 \ldots n+l]) > \delta$.

Then, recalling the definition of $\Gamma$ (Definition~\ref{def:Gamma}), we have that with probability~$1$ over~$Z$: 
\[
\text{either}~ \e_i \M_{Z[1 \ldots n]} = \zero~ \text{for almost all~$n$, or}~ \Gamma^{\e_i} (Z[1 \ldots n]) \geq \alpha n~  \text{for almost all~$n$} 
\]
where $\alpha$ is any positive constant smaller than $\pi(F)|\alph|^{-l} \delta$, which we now fix.  From this, the following ``finite version''  immediately follows. 

\begin{claim}\label{claim:gamma_grows_fast}
For any~$\eta>0$, for all~$N$ large enough,
\[
\Pr_{|z|=N} \left[\e_i \M_z = \zero ~\text{or}~ \Gamma^{\e_i} (z) \geq \alpha N \right] \geq 1-\eta
\]
\end{claim}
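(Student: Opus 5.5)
This is a routine passage from an almost-sure statement to a statement about probability at a fixed finite time $N$. The natural tool is continuity of the uniform measure from below (Fatou / monotone convergence of events).

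For each $N$, define the set of infinite sequences
\[
G_N = \left\{ Z \in \alph^\omega \,:\, \e_i \M_{Z[1 \ldots n]} = \zero \text{ or } \Gamma^{\e_i}(Z[1\ldots n]) \geq \alpha n \text{ for all } n \geq N \right\}.
\]
These sets are increasing in $N$. By the almost-sure dichotomy established just before the claim, their union has probability~$1$. One point needs care here. That statement says: either $\e_i \M_{Z[1\ldots n]}=\zero$ for almost all $n$, or $\Gamma^{\e_i}(Z[1\ldots n]) \geq \alpha n$ for almost all $n$. In both alternatives the disjunction ``$\e_i\M_{Z[1\ldots n]}=\zero$ or $\Gamma \geq \alpha n$'' holds for all sufficiently large $n$. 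So every sequence satisfying the dichotomy belongs to some $G_N$. Continuity of measure then gives $\P(G_N) \to 1$, so for every $\eta>0$ there is $N_0$ with $\P(G_N) \geq 1-\eta$ for all $N \geq N_0$.

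It remains to transfer this to words of length exactly $N$. The event ``$\e_i \M_{Z[1\ldots N]}=\zero$ or $\Gamma^{\e_i}(Z[1\ldots N]) \geq \alpha N$'' depends only on the prefix $Z[1\ldots N]$. Moreover, $\Gamma^{\e_i}(z)$ is defined from prefixes of $z$, so it is a function of $z$ alone. The event contains $G_N$, because membership in $G_N$ requires the condition at $n=N$ in particular. Since the uniform measure on $\alph^\omega$ pushes forward to the uniform measure on $\alph^N$ under $Z \mapsto Z[1\ldots N]$, we get
\[
\Pr_{|z|=N}\left[\e_i \M_z = \zero \text{ or } \Gamma^{\e_i}(z) \geq \alpha N\right] \geq \P(G_N) \geq 1-\eta
\]
for all $N \geq N_0$, which is the claim.

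The only real subtlety is the first step: the limit statement has to be read pointwise in $Z$, so that it becomes a statement of the form ``for all large $n$''. This is exactly why the dichotomy was phrased with ``almost all $n$'' and with a constant $\alpha$ strictly smaller than $\pi(F)|\alph|^{-l}\delta$. The frequency result of Proposition~\ref{prop:automata_behaves_as_markov_chain} gives $\liminf_n \Gamma^{\e_i}(Z[1\ldots n])/n \geq \pi(F)|\alph|^{-l}\delta$ on the non-null branch. The strict gap then gives $\Gamma^{\e_i}(Z[1\ldots n]) \geq \alpha n$ for all but finitely many $n$.
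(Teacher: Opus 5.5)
Your proposal is correct and follows the same route as the paper. The paper only says that this finite-$N$ version ``immediately follows'' from the almost-sure dichotomy, and your increasing events $G_N$, continuity of the uniform measure from below, and the inclusion of $G_N$ in the length-$N$ event spell out exactly the standard argument it leaves implicit.
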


Now, recall that $L^{\e_i} + \Gamma^{\e_i}$ is a supermartingale. Moreover, its increase at each step is bounded since $L^{\e_i}(za) \leq L^{\e_i}(z)+\log(|A|)$ (because of superfairness) and $\Gamma(za)-\Gamma(z) =\Delta(z) \leq 1$\footnote{Here is where the artificial bound on the risk function becomes relevant}.
Thus, we can apply our one-sided version of Azuma's inequality (Proposition~\ref{prop:subexponential_increase_of_A_martingales}) and obtain the following claim.

\begin{claim}\label{claim:gamma_and_L_grow_subexponential}
For any~$\eta>0$, for all~$N$ large enough,
\[
\Pr_{|z|=N} \left[L^{\e_i}(z) + \Gamma^{\e_i} (z) \leq(\alpha/2) N \right] \geq 1-\eta
\]
\end{claim}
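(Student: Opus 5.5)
Claim~\ref{claim:gamma_and_L_grow_subexponential} follows from Proposition~\ref{prop:subexponential_increase_of_A_martingales} applied to the supermartingale $D = L^{\e_i} + \Gamma^{\e_i}$. The plan is to check the hypotheses of that proposition, compute the starting value $D(\emptystr)$, and then choose $\theta = (\alpha/2)N$.

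\textbf{Hypotheses.} The preceding lemma already gives that $D$ is a supermartingale with values in $[-\infty,\infty)$, so only the one-sided increment bound needs checking.

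\begin{itemize}
\item For the $L$ part: superfairness gives $\norm{\v\M_{za}} \leq \sum_{b}\norm{\v\M_{zb}} \leq |\alph|\,\norm{\v\M_z}$ for every $a$. Taking logarithms, $L^{\e_i}(za) \leq L^{\e_i}(z) + \ln|\alph|$. This remains true, with the convention $\ln 0=-\infty$, when $\v\M_z=\zero$, since then $\v\M_{za}=\zero$ as well.
\item For the $\Gamma$ part: $\Gamma^{\e_i}(za)-\Gamma^{\e_i}(z) = \Delta(\e_i\M_z) \leq 1$ by the truncation in the definition of $\Delta$.
\end{itemize}

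One point needs care: $\Delta$ is only defined for nonzero vectors. If $\e_i\M_z=\zero$, then $L^{\e_i}(z)=-\infty$, and the increment bound holds trivially whatever convention is adopted for $\Delta(\zero)$ (for instance $\Delta(\zero)=0$). So the constant $c = \ln|\alph| + 1$ works.

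Since $-\infty$ values are allowed, I will quickly check that the truncated process $\tilde D$ built in the proof of Proposition~\ref{prop:subexponential_increase_of_A_martingales} still makes sense. Once $D$ hits $-\infty$, the truncation $\max(\cdot,-c|\alph|)$ keeps $\tilde D$ finite, and the inequality $\tilde D\geq D$ persists. That proposition therefore yields a constant $d>0$ with
\[
\Pr_{|z|=N}\left[D(z)-D(\emptystr) > \theta\right] \leq \exp(-\theta^2/(dN)).
\]

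\textbf{Conclusion.} At the root, $D(\emptystr) = \ln\norm{\e_i} + 0 = 0$. Taking $\theta = (\alpha/2)N$ gives
\[
\Pr_{|z|=N}\left[D(z) > (\alpha/2)N\right] \leq \exp(-\alpha^2 N/(4d)).
\]
This tends to $0$ as $N\to\infty$, so it is below any given $\eta>0$ for all $N$ large enough, which is the claim. The only mildly delicate step is the handling of the $-\infty$ values, and that is already covered by the earlier proposition.
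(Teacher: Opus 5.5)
Your proposal is correct and follows the paper's argument: the paper also applies the one-sided Azuma bound (Proposition~\ref{prop:subexponential_increase_of_A_martingales}) to the supermartingale $L^{\e_i}+\Gamma^{\e_i}$. It uses the per-step increase bound $\ln|\alph|$ from superfairness plus $1$ from the truncation of $\Delta$, with $L^{\e_i}(\emptystr)+\Gamma^{\e_i}(\emptystr)=0$ and $\theta=(\alpha/2)N$. Your remarks on $\Delta(\zero)$ and on $-\infty$ values go beyond what the paper spells out. Strictly, once $D(w)=-\infty$, the increment $D(wa)-D(w)$ in the construction of $\tilde D$ needs a convention, such as setting it to $0$, but this is harmless.
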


\noindent And combining both Claims~\ref{claim:gamma_grows_fast} and~\ref{claim:gamma_and_L_grow_subexponential} we conclude that

\begin{claim}
For any~$\eta>0$, for all~$N$ large enough,
\[
\Pr_{|z|=N} \left[L^{\e_i}(z) \leq -(\alpha/2) N  \right] \geq 1-2\eta
\]
In other words, 
\[
\Pr_{|z|=N} \left[\norm{\e_i \M_z} \leq e^{-(\alpha/2)N}  \right] \geq 1-2\eta
\]
\end{claim}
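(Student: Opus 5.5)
The claim follows by intersecting the events of Claims~\ref{claim:gamma_grows_fast} and~\ref{claim:gamma_and_L_grow_subexponential}. Fix $\eta>0$ and choose $N$ large enough that both hold. The event of Claim~\ref{claim:gamma_grows_fast} is $G_N = \{\e_i \M_z = \zero \text{ or } \Gamma^{\e_i}(z) \geq \alpha N\}$, and the event of Claim~\ref{claim:gamma_and_L_grow_subexponential} is $H_N=\{L^{\e_i}(z)+\Gamma^{\e_i}(z) \leq (\alpha/2)N\}$. By the union bound, $\Pr_{|z|=N}[G_N\cap H_N] \geq 1-2\eta$. It therefore suffices to check that every $z$ of length $N$ in $G_N \cap H_N$ satisfies $L^{\e_i}(z) \leq -(\alpha/2)N$.

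Take $z \in G_N\cap H_N$. We distinguish the two alternatives of $G_N$.

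\textbf{First alternative: $\e_i \M_z = \zero$.} Then $L^{\e_i}(z) = \ln 0 = -\infty$ by convention, and the inequality holds trivially. Correspondingly, $\norm{\e_i\M_z}=0\leq e^{-(\alpha/2)N}$.

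\textbf{Second alternative: $\Gamma^{\e_i}(z)\geq \alpha N$.} Here $\Gamma^{\e_i}(z)$ is a finite real number, since it is a sum of $N$ terms each at most $1$. We may therefore subtract it in $H_N$ without worrying about expressions like $\infty-\infty$:
\[
L^{\e_i}(z) \leq (\alpha/2)N - \Gamma^{\e_i}(z) \leq (\alpha/2)N - \alpha N = -(\alpha/2)N.
\]
If $L^{\e_i}(z)=-\infty$, the inequality is immediate anyway.

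Exponentiating gives the second formulation, $\norm{\e_i \M_z} = e^{L^{\e_i}(z)} \leq e^{-(\alpha/2)N}$. Finally, $2\eta$ is arbitrary because $\eta$ is.

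The only point needing care is the possible value $-\infty$ of $L$. This is handled by the case split above, using that $\Gamma$ is always finite thanks to the truncation of $\Delta$ at $1$.
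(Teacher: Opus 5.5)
Your proposal is correct and follows the paper's argument: the paper also just intersects the events of Claims~\ref{claim:gamma_grows_fast} and~\ref{claim:gamma_and_L_grow_subexponential} via a union bound. Your case split on whether $\e_i \M_z = \zero$, together with the finiteness of $\Gamma^{\e_i}(z)$ in the second case, makes explicit the handling of the value $-\infty$, which the paper leaves implicit.
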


\noindent Finally, by applying this last Claim to every canonical vector and using the union bound we obtain

\begin{claim}\label{claim:allmost_all_sequences_lose_exponentially_fast}
For any~$\eta>0$, for all~$N$ large enough,
\begin{align}
	\Pr_{|z|=N} \left[\forall \v\, \norm{\v \M_z} \leq e^{-(\alpha/2)N} \norm{\v}  \right] \geq 1-2m \eta
\end{align}
\end{claim}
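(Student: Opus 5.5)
The claim follows from the previous one by linearity and the union bound over the finitely many basis vectors. Fix $\eta>0$, and for each $i\in\interval$ let $G_i^N$ be the set of words $z$ of length $N$ such that $\norm{\e_i \M_z} \leq e^{-(\alpha/2)N}$. By the preceding claim applied to $\e_i$, there is $N_i$ such that for all $N\geq N_i$ we have $\Pr_{|z|=N}[z\notin G_i^N]\leq 2\eta$. Take $N\geq \max_i N_i$; this is a finite maximum since there are only $m$ indices. The union bound then gives
\[
\Pr_{|z|=N}\Big[z \in \textstyle\bigcap_{i} G_i^N\Big] \geq 1-2m\eta .
\]

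It remains to show that every $z\in\bigcap_i G_i^N$ satisfies the inequality for all $\v\in\vect$ simultaneously. Write $\v=\sum_i \v(i)\e_i$ with $\v(i)\geq 0$. Then $\v\M_z=\sum_i \v(i)\,\e_i\M_z$ is a sum of nonnegative vectors, because $\M_z$ is a nonnegative matrix. As observed in the notation section, the norm is additive on such sums. Hence
\[
\norm{\v\M_z}=\sum_i \v(i)\norm{\e_i\M_z}\leq e^{-(\alpha/2)N}\sum_i \v(i)= e^{-(\alpha/2)N}\norm{\v}.
\]
This is exactly the event in the statement, which completes the proof.

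There is no real obstacle here. The only point to check is that the quantifier over $\v$ can be placed inside the probability. This works because the bound for a general $\v$ follows deterministically from the $m$ bounds for the basis vectors, and the set of basis vectors is finite.
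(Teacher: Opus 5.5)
Your proof is correct and follows the paper's own argument: apply the preceding claim to each basis vector $\e_i$, take a union bound over the $m$ indices, and pass to an arbitrary nonnegative $\v$ using additivity of the norm on nonnegative vectors. The paper leaves that last linearity step implicit, and you make it explicit; using one common exponent for all $i$ is justified because $\alpha$ depends only on the unique non-null BSCC $\B$ and not on the starting index.
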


Using this last result we prove that, if $X$ is a normal sequence, then $\norm{\v \M_{X[1\ldots n]}}$ tends to~$0$ exponentially fast as $n\to \infty$.

Because of normality, if we divide $X$ into blocks of length $N$ as $X[1\ldots N]$, $X[N+1\ldots 2N]$,$\ldots$, $X[N(k-1)+1\ldots Nk]$ it holds that
the fraction of blocks that do not satisfy the constraint in Eq.~\eqref{claim:allmost_all_sequences_lose_exponentially_fast} tends to a value bounded by $1-2m\eta$ as we increase $k$, and so for almost all values of $k$ this
fraction is bounded by $1-3m\eta$. For these ``bad'' blocks the increase in norm is bounded by $|\alph|^N$, while for the ``good'' ones we can bound it by $e^{-(\alpha/2)N}$. Thus, for all $k$ big enough, and assuming for simplicity that $\norm{\v}=1$, we conclude that
\begin{align*}
	\norm{\v \M_{X[1\ldots Nk]}} \leq |\alph|^{Nk(3m\eta)} e^{-(\alpha/2)Nk(1-3m\eta)}
\end{align*}
Taking the logarithm:
\begin{align*}
	\log \norm{\v \M_{X[1\ldots Nk]}} \leq Nk\left(\log |\alph| (3m\eta) -(\alpha / 2)(1-3m\eta)\right)
\end{align*}
Thus, for any $\beta < \alpha/2$ we can choose $\eta$ small enough so that for almost all $k$ it holds that
\begin{align*}
	\norm{\v \M_{X[1\ldots Nk]}} \leq e^{-\beta Nk}
\end{align*}
Finally, to consider positions in between blocks we recall that $\norm{\v \M_a} \leq |\alph|\,\norm{\v}$, and therefore the increase in norm is at most a constant (because $N$ is already fixed). Then, we conclude that
\begin{align*}
	\norm{\v \M_{X[1\ldots n]}} = O(e^{-\beta n})
\end{align*}
for all $n$. 

\subsection{Case 2 and fast convergence to a constant}

We now address Case 2, that is, when for every word~$z \in \alph^*$ we have that $\Delta(\x \cdot z) = 0$, or equivalently $\norm{\x \M_z}=\norm{\x}$ for all~$z \in \alph^*$.

Let us denote by $\S$ the cone of ``non-betting'' vectors (to which~$\x \cdot z$ belongs for all~$z$):
\[
\S = \left\{\v \in \vect \mid \forall z \in \alph^* \ \norm{\v \M_z}=\norm{\v} \right\}
\]

Note that $\S$ is topologically closed (w.r.t. the $\norm{}$-distance), closed under nonnegative scalar multiplication and addition and closed under multiplication by any matrix $\M_z$. 

Suppose a vector $\v$ can be written as a sum $\s + \u$, where $\s \in \S$. Then for any $w$, we have by definition of~$\S$, $\norm{\v \M_w} = \norm{\s} + \norm{\u\M_w}$. If we think of $\norm{\v}$ as capital, then we can view $\s$ as being a part of $\v$ that has been put aside and will never be used in future ``bets''. Only $\u$ can contain ``live'' capital which can be used to place bets. This leads us to the next definition. 

\begin{definition}
Let $\v \in \vect$. We define 
\[
\Live(\v) = \norm{\v} - \max \{\norm{\s} \mid \s \in \S, \, \s \leq \v \}
\]
\end{definition}

Note that this definition is sound because the set $\{\s \in \S, \, \s \leq \v \}$ is a closed set for the \norm{}-distance, hence the maximum $\norm{\s}$ is reached. Let us now prove some important properties of the $\Live$ function. 

\begin{proposition} \label{prop:live}
\begin{itemize}
\item[(a)] For any $\v \in \vect$, $\Live(\v) \leq \norm{\v}$. 
\item[(b)] For any $\v \in \vect$ and $\alpha \in \mathbb{R}^{\geq 0}$, $\Live(\alpha \u) = \alpha\, \Live(\u)$
\item[(c)] For any $\u, \v \in \vect$, $\Live(\u + \v) \leq \Live(\u) + \Live(\v)$
\item[(d)] For all $\varepsilon>0$ small enough, if $\norm{\u} = \norm{\v}$ and $\dh(\u,\v) \leq \varepsilon$, then $|\Live(\u)-\Live(\v)| \leq 3\varepsilon \, \norm{\u}$.
\item[(e)] For all~$\v$, for all $a \in \alph$, $\Big| \norm{\v \M_a} - \norm{\v} \Big| \leq |\alph| \, \Live(\v)$.
\end{itemize}
\end{proposition}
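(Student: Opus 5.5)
The five items follow from one decomposition. For $\v \in \vect$, fix a maximizer $\s_\v \in \S$ with $\s_\v \leq \v$ and $\norm{\s_\v} = \max\{\norm{\s} \mid \s \in \S,\ \s \leq \v\}$; it exists by the closedness remark made after the definition. Put $\mathbf{r}_\v = \v - \s_\v$. This vector is non-negative, and since norms add on non-negative vectors, $\norm{\mathbf{r}_\v} = \Live(\v)$. Throughout, I use only that $\S$ is a closed convex cone which is stable under every $\M_z$, together with additivity of $\norm{\cdot}$ on $\vect$.

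Items (a), (b) and (c) are immediate from the cone structure of $\S$.
\begin{itemize}
\item[(a)] The vector $\zero$ lies in $\S$ and satisfies $\zero \leq \v$, so the maximum is non-negative. Hence $\Live(\v) \leq \norm{\v}$.
\item[(b)] For $\alpha > 0$, the map $\s \mapsto \alpha \s$ is a bijection between $\{\s \in \S \mid \s \leq \u\}$ and $\{\s \in \S \mid \s \leq \alpha\u\}$, and it scales norms by $\alpha$. The case $\alpha = 0$ is trivial.
\item[(c)] The vector $\s_\u + \s_\v$ lies in $\S$ and is at most $\u + \v$. So the maximum defining $\Live(\u+\v)$ is at least $\norm{\s_\u} + \norm{\s_\v}$. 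Subtracting this from $\norm{\u+\v} = \norm{\u} + \norm{\v}$ gives subadditivity.
\end{itemize}

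For (d), I first turn the Hilbert bound into coordinatewise domination. Since $\norm{\u} = \norm{\v}$, we have $\max_i \u(i)/\v(i) \geq 1$ and $\max_i \v(i)/\u(i) \geq 1$. Their logarithms sum to $\dh(\u,\v) \leq \varepsilon$, so each maximum is at most $e^{\varepsilon}$. Hence $e^{-\varepsilon}\v \leq \u$ and $e^{-\varepsilon}\u \leq \v$.

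Next, $e^{-\varepsilon}\s_\v \in \S$ and $e^{-\varepsilon}\s_\v \leq e^{-\varepsilon}\v \leq \u$. This gives
\[
\Live(\u) \leq \norm{\u} - e^{-\varepsilon}\norm{\s_\v} = \Live(\v) + (1-e^{-\varepsilon})\norm{\s_\v}.
\]
We have $\norm{\s_\v} \leq \norm{\v} = \norm{\u}$, and $1 - e^{-\varepsilon} \leq \varepsilon$. So $\Live(\u) - \Live(\v) \leq \varepsilon\norm{\u}$. Exchanging the roles of $\u$ and $\v$ gives the symmetric bound, so the constant $3$ is more than enough.

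Item (e) carries the actual content. Write $\v = \s_\v + \mathbf{r}_\v$. Because $\s_\v \in \S$, we have $\norm{\s_\v \M_a} = \norm{\s_\v}$. All vectors involved are non-negative, so
\[
\norm{\v \M_a} - \norm{\v} = \norm{\mathbf{r}_\v \M_a} - \norm{\mathbf{r}_\v}.
\]
Superfairness applied to $\mathbf{r}_\v$ gives $\sum_{b \in \alph} \norm{\mathbf{r}_\v \M_b} \leq |\alph| \, \norm{\mathbf{r}_\v}$. Every summand is non-negative, so $0 \leq \norm{\mathbf{r}_\v\M_a} \leq |\alph|\,\norm{\mathbf{r}_\v}$. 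Therefore the difference lies in $[-\norm{\mathbf{r}_\v},\ (|\alph|-1)\norm{\mathbf{r}_\v}]$, and its absolute value is at most $|\alph|\,\Live(\v)$.

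I expect no real obstacle. The one point needing care is in (d): the bound on each separate ratio comes from combining the equal-norm normalization with the definition of $\dh$, and that is the step to verify carefully.
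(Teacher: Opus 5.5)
Your proof is correct and follows the paper's argument essentially step by step. It uses the same maximal decomposition $\v = \s + (\v - \s)$ with $\s \in \S$, the same cone-structure arguments for (a)--(c), the same $e^{-\varepsilon}$-domination of the maximal non-betting part for (d), and the same superfairness bound applied to the live part for (e). Your bookkeeping in (d) is slightly tighter than the paper's: you get $|\Live(\u)-\Live(\v)| \leq (1-e^{-\varepsilon})\norm{\u} \leq \varepsilon\norm{\u}$ for every $\varepsilon>0$, whereas the paper passes through the cruder bound $2(1-e^{-\varepsilon})$. You also make explicit a step the paper leaves implicit, namely that equal norms force each of the two ratio maxima to be at least $1$.
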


\begin{proof}
Item $(a)$ is immediate from the definition. Item $(b)$ follows from the invariance of~$\S$ under non-negative scalar multiplication and the fact that $\alpha \s \leq \alpha \v$ if and only if $\s \leq \v$. Likewise, $(c)$ follows from the invariance of $\S$ under addition and the fact that $\s \leq \u$ and $\s' \leq \v$ implies $\s + \s' \leq \u + \v$. For $(d)$, because of $(a)$, it suffices to treat the case where $\norm{\u}=\norm{\v}=1$. Let $E$ be the support of $\u$ and $\v$. By definition of $\dh$, having $\dh(\u,\v) \leq \varepsilon$ means in particular that $\max_{i \in E} \u(i)/\v(i) \leq e^\varepsilon$, hence, $e^{-\varepsilon} \u \leq  \v$. Now, let $\s \in \S$ be $\leq \u$ and of maximal norm, so that $\Live(\u) = 1-\norm{\s}$. Since we have $e^{-\varepsilon} \s \leq e^{-\varepsilon} \u \leq \v$, it follows that
\begin{align*}
\Live(\v) &  \leq 1 - e^{-\varepsilon} \norm{\s}\\
 & = 1 - e^{-\varepsilon} (1-\Live(\u)) \\
 & = 1-e^{-\varepsilon} + \Live(\u) - (1-e^{-\varepsilon}) \Live(\u)\\
 & \leq 2(1-e^{-\varepsilon}) + \Live(\u) 
\end{align*}
(the last inequality is due to $\Live(\u) \leq \norm{\u} = 1$). For $\varepsilon$ small enough, we have $2(1-e^{-\varepsilon}) \leq 3\varepsilon$, and the result follows by symmetry. 
For item $(e)$, recall that by superfairness, $|\alph|$ dominates the matrix norms of all~$\M_a$, i.e., $\norm{\u \M_a} \leq |\alph| \norm{\u}$ for all vectors~$\u$. Let $\v$ be a vector and $\s \in \S$ be $\leq \v$ and of maximal norm, so that $\Live(\v) = \norm{\v} -\norm{\s}$. Write $\v = \v' + \s$. For all~$a \in \alph$, we have
\[
\norm{\v \M_a} = \norm{\v' \M_a} + \norm{\s \M_a } =  \norm{\v' \M_a} + \norm{\s} = \norm{\v' \M_a} + \norm{\v} - \norm{\v'}
\]
This gives 
$\Big| \norm{\v \M_a} - \norm{\v} \Big| \leq |\alph| \, \norm{\v'} = |\alph| \, \Live(\v)$, as wanted.
\end{proof}

Recall that~$\x \in \U_m^+(F)$ is the vector that comes out of Proposition~\ref{prop:our-perron-frobenius} when $E=F$, and~$w$ is a fixed word which pseudo-mixes~$F$. Our assumption (Case 2) is that $\Delta(\x \cdot z) = 0$ for all~$z$. 

Like in Case 1, we use the fact that if~$Z$ is chosen at random, for any given~$i$, with probability~$1$ over~$Z$, either $\e_i  \cdot Z[1 \ldots n]=\zero$ for almost all~$n$, or $\e_i \cdot Z[1 \ldots n]$ is equal to~$F$ with asymptotic frequency $\pi(F)>0$. If the latter happens, then with probability~1 (conditionally to being in this second case), with asymptotic frequency $\pi(F)|\alph|^{-|w|}$ of indices~$n$, we have $\{i\} \cdot Z[1 \ldots n]=F$ \emph{and} $Z[n+1,n+|w|]=w$. The first time $F$ is hit in the trajectory of $\{i\}\cdot Z[1 \ldots n]$, at index $n_0$, $\e_i  \cdot Z[1 \ldots n_0]$ is at a certain unknown $\dh$-distance of~$\x$. But if from this point on we compare the trajectories of $\e_i  \cdot Z[1 \ldots n]$ and $\x \cdot Z[n_0+1 \ldots n]$, we know from Lemma~\ref{lem:dh-perservation} that the distance $\dh(\e_i  \cdot Z[1 \ldots n],\x \cdot Z[n_0+1 \ldots n])$ is non-decreasing in~$n$ and, moreover, every time the index~$n$ is such that $\{i\}\cdot Z[1 \ldots n]=F$ and $Z[n+1,n+|w|]=w$, by Lemma~\ref{prop:birkhoff-pseudo-mixing}, the distance $\dh(\e_i  \cdot Z[1 \ldots n],\x \cdot Z[n_0+1 \ldots n])$ decreases by a factor $\tau(F,w) \in (0,1)$ (more specifically: $\dh(\e_i  \cdot Z[1 \ldots n+|w|],\x \cdot Z[n_0+1 \ldots n+|w|]) \leq \tau(F,w) \dh(\e_i  \cdot Z[1 \ldots n],\x \cdot Z[n_0+1 \ldots n])$). 

To sum up, with (conditional) probability~$1$ with positive frequency over~$n$, the distance $\dh(\e_i  \cdot Z[1 \ldots n],\x \cdot Z[n_0+1 \ldots n])$ sees a decrease of a factor $\tau(F,w) \in (0,1)$ between index~$n$ and $n+|w|$ and, again, since $\dh(\e_i  \cdot Z[1 \ldots n],\x \cdot Z[n_0+1 \ldots n])$ is non-decreasing over~$n$, we have that $\dh(\e_i  \cdot Z[1 \ldots n],\x \cdot Z[n_0+1 \ldots n])$ tends to~$0$ exponentially fast (again, with conditional probability 1). Summing it all up, with probability~$1$ over~$Z$:
\begin{equation} \label{eq:shrinking-distance}
\text{for almost all~$n$, either}~ \e_i \M_{Z[1 \ldots n]} = \zero~ \text{or}~ \dh(\e_i  \cdot Z[1 \ldots n],\x \cdot Z[n_0+1 \ldots n]) \leq e^{-\alpha n}
\end{equation}
where $\alpha>0$ is some fixed constant, independent of~$Z$. 

By the definition of $\x$ we have that $\x \cdot Z[n_0+1 \ldots n]$ belongs to $\S$ for all~$n$. Or equivalently, $\Live(\x \cdot Z[n_0+1 \ldots n])=0$ for all~$n$. Using Proposition~\ref{prop:live}(d), we can rewrite~(\ref{eq:shrinking-distance}) and thus get that with probability~$1$ over~$Z$:
 \begin{equation} \label{eq:shrinking-live}
\text{for almost all~$n$, either}~ \e_i \M_{Z[1 \ldots n]} = \zero~ \text{or}~ \Live(\e_i  \cdot Z[1 \ldots n]) \leq 3e^{-\alpha n}
\end{equation}
Finally, observe that $\e_i \M_{Z[1 \ldots n]} = \zero$ implies that $\Live(\e_i \M_{Z[1 \ldots n]}) = 0$ (Proposition~\ref{prop:live}(a)) and $\Live(\e_i  \cdot Z[1 \ldots n]) \leq 3e^{-\alpha n}$ is equivalent to $\Live(\e_i  \M_{Z[1 \ldots n]}) \leq 3e^{-\alpha n} \norm{\e_i  \M_{Z[1 \ldots n]}}$ (Proposition~\ref{prop:live}(b))

Like for Case 1, we use this long-run analysis to immediately derive the following finite version: 

\begin{claim}\label{claim:live1}
For any~$\eta>0$, for all~$N$ large enough,
\[
\Pr_{|z|=N} \left[\Live(\e_i \M_z)\leq 4e^{-\alpha N} \norm{\e_i  \M_{z}}\right] \geq 1-\eta
\]
\end{claim}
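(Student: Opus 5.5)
The claim is the finite-horizon version of the almost-sure statement~(\ref{eq:shrinking-live}), so I will derive it the same way Claim~\ref{claim:gamma_grows_fast} was derived from its long-run counterpart in Case~1. The plan is to turn "for almost all $n$" into a probability bound at a single large length $N$, using that the events involved depend only on $Z[1\ldots n]$.

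First, I fix $\eta>0$ and consider, for each $n$, the set
\[
G_n = \{Z \in \alph^\omega \mid \Live(\e_i \M_{Z[1\ldots n]}) \leq 3e^{-\alpha n} \norm{\e_i \M_{Z[1\ldots n]}}\}.
\]
Statement~(\ref{eq:shrinking-live}), together with the two observations that follow it (the zero-vector case gives $\Live=0$, and $\Live$ is homogeneous by Proposition~\ref{prop:live}(b)), says that with probability~$1$ over $Z$, the sequence $Z$ lies in $G_n$ for almost all $n$. In other words, $\P(\liminf_n G_n)=1$. Writing $H_N=\bigcap_{n\geq N} G_n$, the sets $H_N$ are increasing with union of full measure. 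By continuity of measure, $\P(H_N)\geq 1-\eta$ for all $N$ large enough, and in particular $\P(G_N)\geq 1-\eta$ for such $N$.

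Second, I note that membership in $G_N$ depends only on the prefix $Z[1\ldots N]$. The uniform measure on $\alph^\omega$ pushes forward to the uniform distribution on words of length $N$, so $\P(G_N)=\Pr_{|z|=N}[\Live(\e_i\M_z)\leq 3e^{-\alpha N}\norm{\e_i\M_z}]$. Since $3\leq 4$, this gives the claim, with room to spare in the constant.

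The only delicate point is the validity of~(\ref{eq:shrinking-live}) itself, and this is where I expect the main obstacle. Proposition~\ref{prop:live}(d) requires $\dh\leq\varepsilon$ for $\varepsilon$ small enough, and it requires the two vectors to have equal norm and the same support. Both $\e_i\cdot Z[1\ldots n]$ and $\x\cdot Z[n_0+1\ldots n]$ are unit vectors with support $\{i\}\cdot Z[1\ldots n]$, so the norm and support conditions hold. For $n$ large, $e^{-\alpha n}$ is below the threshold, so the bound $\Live\leq 3e^{-\alpha n}$ holds for almost all $n$ and is already absorbed in the "almost all $n$" quantifier. The other thing to check is that $\alpha$ does not depend on $Z$. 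A random starting time $n_0$ and a random initial distance $\dh(\e_i\cdot Z[1\ldots n_0],\x)$ only affect how large $n$ must be, not the rate: one can shrink $\alpha$ slightly below $\pi(F)|\alph|^{-|w|}\ln(1/\tau(F,w))/|w|$ to absorb them. The slack between $3$ and $4$ in the constant can also be used for this.
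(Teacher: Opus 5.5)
Your proposal is correct and follows the paper's route. The paper only says that Claim~\ref{claim:live1} is derived ``immediately'' from the almost-sure statement~(\ref{eq:shrinking-live}). Your argument supplies exactly that omitted step: continuity of measure gives $\P(\liminf_n G_n)=1 \Rightarrow \P(G_N)\to 1$, and since $G_N$ depends only on $Z[1\ldots N]$, this is the probability over words of length $N$. Your constant $3$ is even slightly sharper than the paper's $4$. Your remarks on why $\alpha$ can be taken independent of $Z$ (including dividing by $|w|$ to account for overlapping occurrences of $w$) correctly justify the input~(\ref{eq:shrinking-live}).
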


But we also know by Kolmogorov's inequality (Theorem~\ref{thm:kolmogorov-inequality}) that $\Pr_{|z|=N} [\norm{\e_i  \M_{z}} \geq 1/\eta] \leq \eta$. The above claim then yields.

\begin{claim}\label{claim:live2}
For any~$\eta>0$, for all~$N$ large enough,
\[
\Pr_{|z|=N} \left[\Live(\e_i \M_z)\leq (4/\eta) e^{-\alpha N}\right] \geq 1-2\eta
\]
\end{claim}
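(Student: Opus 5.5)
The claim follows by intersecting the event of Claim~\ref{claim:live1} with a Kolmogorov-type event and applying a union bound. Fix $\eta>0$ and let $N$ be large enough that Claim~\ref{claim:live1} holds, so that with probability at least $1-\eta$ over $z$ of length $N$ we have
\[
\Live(\e_i \M_z)\leq 4e^{-\alpha N}\norm{\e_i \M_z}.
\]

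Next, recall that $w\mapsto \norm{\e_i \M_w}$ is a non-negative supermartingale, because the family $\{\M_a\}_{a\in\alph}$ is superfair. Its initial value is $\norm{\e_i}=1$. Apply Kolmogorov's inequality (Theorem~\ref{thm:kolmogorov-inequality}) with $C=1/\eta$; since the initial value is $1$, no rescaling is needed. This gives, for every $N$,
\[
\Pr_{|z|=N}\left[\norm{\e_i\M_z}\geq 1/\eta\right]\leq \eta .
\]
If one prefers to avoid the boundary case, take the complementary event to be $\norm{\e_i\M_z}<1/\eta$, or equivalently $\leq 1/\eta$; either version is enough here.

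By the union bound, with probability at least $1-2\eta$ both good events occur simultaneously. On that intersection,
\[
\Live(\e_i\M_z)\leq 4e^{-\alpha N}\norm{\e_i\M_z}\leq (4/\eta)e^{-\alpha N},
\]
which is the claim.

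There is no real obstacle in this argument. The only point to check is that Kolmogorov's inequality applies to the finite-length event for every $N$, and this is exactly its second ("in particular") form. Since that bound is uniform in $N$, the threshold "$N$ large enough" is inherited unchanged from Claim~\ref{claim:live1}.
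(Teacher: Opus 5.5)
Your proof is correct and is the same argument as the paper's. It combines Claim~\ref{claim:live1} with Kolmogorov's inequality applied to the non-negative supermartingale $w\mapsto\norm{\e_i\M_w}$, whose initial value is $1$, at threshold $1/\eta$, and then takes a union bound.
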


This being true for any single~$i$, we can apply the union bound and get 

\begin{claim}\label{claim:live3}
For any~$\eta>0$, for all~$N$ large enough,
\[
\Pr_{|z|=N} \left[\forall 1 \leq i \leq m,~ \Live(\e_i \M_z)\leq (4/\eta) e^{-\alpha N}\right] \geq 1-2\eta m
\]
\end{claim}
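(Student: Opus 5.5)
This follows directly from Claim~\ref{claim:live2} by a union bound over the $m$ canonical vectors; the only point needing care is to make the threshold on $N$ uniform in $i$.

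Fix $\eta>0$. For each $i \in \interval$, Claim~\ref{claim:live2}, applied to the starting vector $\e_i$, gives an integer $N_i$ such that for all $N \geq N_i$,
\[
\Pr_{|z|=N} \left[\Live(\e_i \M_z) > (4/\eta) e^{-\alpha N}\right] \leq 2\eta .
\]
The constant $\alpha$ in that claim comes from~(\ref{eq:shrinking-distance}), where it is stated to be a fixed constant independent of~$Z$. I will read it as independent of $i$ as well, which is harmless: if each index $i$ came with its own rate $\alpha_i$, I would replace them all by $\min_i \alpha_i > 0$, since the bound only weakens. Likewise the factor $4/\eta$ does not depend on $i$. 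So all $m$ bad events have the same form, with the same threshold function of $N$.

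Set $N_0 = \max_{1 \leq i \leq m} N_i$. This is finite because there are only $m$ indices. For $N \geq N_0$, the event in the claim fails exactly when at least one of the $m$ bad events above occurs. By the union bound its probability is at most $\sum_{i=1}^m 2\eta = 2\eta m$. Passing to the complement gives
\[
\Pr_{|z|=N} \left[\forall 1 \leq i \leq m,~ \Live(\e_i \M_z)\leq (4/\eta) e^{-\alpha N}\right] \geq 1-2\eta m
\]
for all $N \geq N_0$, which is the claim.
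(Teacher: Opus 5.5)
Your proof is correct and is the paper's own argument: apply Claim~\ref{claim:live2} to each canonical vector $\e_i$ and take a union bound over the $m$ indices. Taking $N_0=\max_i N_i$, and $\min_i \alpha_i$ in case the rate depended on $i$, spells out details the paper leaves implicit.
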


By sub-linearity of the $\Live$ function (Proposition~\ref{prop:live}(b, c)), this can further be rewritten as 
\begin{claim}\label{claim:live4}
For any~$\eta>0$, for all~$N$ large enough,
\[
\Pr_{|z|=N} \left[\forall \v~ \Live(\v \M_z)\leq (4/\eta) e^{-\alpha N} \norm{\v} \right] \geq 1-2\eta m
\]
\end{claim}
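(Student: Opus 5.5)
We need to prove Claim~\ref{claim:live4} from Claim~\ref{claim:live3}, using sublinearity of $\Live$.

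The plan is to fix $\eta>0$, take $N$ large enough that Claim~\ref{claim:live3} holds, and show that every word $z$ of length $N$ in the event of Claim~\ref{claim:live3} also lies in the event of Claim~\ref{claim:live4}. The probability bound $1-2\eta m$ then transfers directly, with no new probabilistic estimate needed. So fix such a $z$, for which $\Live(\e_i\M_z)\leq (4/\eta)e^{-\alpha N}$ holds for every $1\leq i\leq m$.

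Now take an arbitrary $\v\in\vect$ and write it as $\v=\sum_{i=1}^m \v(i)\,\e_i$ with nonnegative coefficients $\v(i)$. By linearity, $\v\M_z=\sum_i \v(i)\,(\e_i\M_z)$, and each summand is a nonnegative vector. Applying Proposition~\ref{prop:live}(c) repeatedly (by induction on the number of summands) and then Proposition~\ref{prop:live}(b) to each term, we get
\[
\Live(\v\M_z)\leq \sum_{i=1}^m \v(i)\,\Live(\e_i\M_z)\leq (4/\eta)e^{-\alpha N}\sum_{i=1}^m \v(i) = (4/\eta)e^{-\alpha N}\norm{\v},
\]
where the last equality uses that $\v$ is nonnegative. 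Since this holds for all $\v$ at once, the quantifier over $\v$ can sit inside the probability, and the event of Claim~\ref{claim:live4} contains the event of Claim~\ref{claim:live3}.

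There is no real obstacle here. The only points to check are that the decomposition uses nonnegative coefficients, so that Proposition~\ref{prop:live}(b) applies (including the trivial case of a zero coefficient, where $\Live(\zero)=0$), and that the uniform-in-$\v$ statement is obtained deterministically from the finitely many basis events.
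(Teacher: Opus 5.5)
Your proof is correct and follows the paper's route: the paper gets this claim from Claim~\ref{claim:live3} in one line by the sublinearity of $\Live$ (Proposition~\ref{prop:live}(b),(c)). That is exactly your decomposition $\v=\sum_i \v(i)\,\e_i$ combined with linearity of $\M_z$, and the event inclusion carries over the bound $1-2\eta m$ unchanged.
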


Let $B(N,\eta)$ be the set
\[
B(N,\eta) = \{z \in \alph^N  \mid \forall \v~ \Live(\v \M_z)\leq (4/\eta) e^{-\alpha N} \norm{\v} \}
\]
(which by the above claim has cardinality $\geq 2^N(1-2\eta m))$. We claim that another expression for~$B(N,\eta)$ is 
\[
B(N,\eta) = \{z \in \alph^N \mid \forall \v~ \Live(\v \M_z)\leq (4/\eta) e^{-\alpha N} \Live(\v) \}
\]

Indeed, let $z \in B(N,\eta)$ and $\v$ a vector. By definition of $\S$ and $\Live$, we can write $\v$ as $\v=\s+\v'$ with $\s\leq \v$,  $\v' \leq \v$, $\s \in \S$ (equivalently, $\Live(s)=0$) and $\Live(\v)=\Live(\v')=\norm{\v'}$. We then have:
\begin{itemize}
\item $\Live(\s \M_z)=0$ (by closure of $\S$ under multiplication by matrices of type $\M_{y}$, $y \in \alph^*$)
\item $\Live(\v' \M_z) \leq (4/\eta) e^{-\alpha N} \norm{\v'} = (4/\eta) e^{-\alpha N} \Live(\v)$ (the inequality by definition of $B(N,\eta)$, the equality by the choice of $\v'$)
\end{itemize}
Therefore:
\begin{align}\label{eq:from_norm_to_live}
	\Live(\v \M_z) = \Live(\s\M_z+\v'\M_z) \leq \Live(\s\M_z) + \Live(\v'\M_z) \leq (4/\eta) e^{-\alpha N} \Live(\v)
\end{align}

This proves our claim about the alternative expression of $B(N,\eta)$ and thus Claim~\ref{claim:live4} can be reformulated one last time. 

\begin{claim}\label{claim:live5}
For any~$\eta>0$, for all~$N$ large enough,
\[
\Pr_{|z|=N} \left[\forall \v~ \Live(\v \M_z)\leq (4/\eta) e^{-\alpha N} \Live(\v) \right] \geq 1-2\eta m
\]
\end{claim}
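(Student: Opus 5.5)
Claim~\ref{claim:live5} follows from Claim~\ref{claim:live4} once we show that the two descriptions of $B(N,\eta)$ define the same set of words $z \in \alph^N$. The probability bound $1-2\eta m$ then carries over unchanged. So the plan is to prove both inclusions between
\[
B_1=\{z \mid \forall \v~ \Live(\v \M_z)\leq (4/\eta) e^{-\alpha N} \norm{\v}\}
\quad\text{and}\quad
B_2=\{z \mid \forall \v~ \Live(\v \M_z)\leq (4/\eta) e^{-\alpha N} \Live(\v)\}.
\]

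The inclusion $B_2 \subseteq B_1$ is immediate from Proposition~\ref{prop:live}(a): since $\Live(\v) \leq \norm{\v}$, the bound in terms of $\Live(\v)$ is the stronger one.

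For $B_1 \subseteq B_2$, fix $z \in B_1$ and a vector $\v$. Take $\s \in \S$ with $\s \leq \v$ and $\norm{\s}$ maximal; this maximum is attained by the closedness remark after the definition of $\Live$. Set $\v' = \v - \s \geq \zero$, so that $\Live(\v) = \norm{\v}-\norm{\s} = \norm{\v'}$. Then:
\begin{itemize}
\item By closure of $\S$ under multiplication by the $\M_y$, we have $\s\M_z \in \S$, hence $\Live(\s\M_z)=0$.
\item Applying the defining inequality of $B_1$ to $\v'$ gives $\Live(\v'\M_z) \leq (4/\eta)e^{-\alpha N}\norm{\v'} = (4/\eta)e^{-\alpha N}\Live(\v)$.
\item Subadditivity (Proposition~\ref{prop:live}(c)) applied to $\v\M_z = \s\M_z + \v'\M_z$ combines these two bounds into the inequality defining $B_2$. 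This is exactly the computation~\eqref{eq:from_norm_to_live}.
\end{itemize}
Note that we never need $\Live(\v') = \norm{\v'}$ as a separate fact (it does hold by maximality of $\s$, since $\s+\s'\in\S$ for any $\s' \in \S$ with $\s'\leq\v'$). We only use $\norm{\v'} = \Live(\v)$, which holds by the choice of $\s$.

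There is no real obstacle here. The only point needing care is the existence of a maximizer $\s$, which rests on the closedness of $\S$ and the compactness of $\{\s \mid \zero\leq \s\leq\v\}$. Once the sets agree, Claim~\ref{claim:live4} yields Claim~\ref{claim:live5} directly, with the same $N$ threshold and the same probability bound.
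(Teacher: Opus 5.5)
Your proposal is correct and follows the paper's argument essentially verbatim. You decompose $\v=\s+\v'$ with $\s\in\S$ of maximal norm below $\v$, note that $\Live(\s\M_z)=0$, apply the bound of Claim~\ref{claim:live4} to $\v'$ using $\norm{\v'}=\Live(\v)$, and conclude by subadditivity exactly as in~\eqref{eq:from_norm_to_live}. (Only the inclusion $B_1\subseteq B_2$ is needed for the probability bound, so the reverse inclusion via Proposition~\ref{prop:live}(a) is a harmless extra.)
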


This last claim is the analogue of Claim~\ref{claim:allmost_all_sequences_lose_exponentially_fast} in Case 1, where we argued that most blocks of length~$N$ caused a linear decrease of the norm, leading to the norm of $\v \M_{X[1 \ldots n]}$ to be exponentially decreasing in~$n$ for any normal~$X$. Here we have the same situation, but with the $\Live$ function instead of the norm, so we'll argue instead that $\Live(\v \M_{X[1 \ldots n]})$ decreases exponentially in~$n$, in a similar way as before. \\

Let~$X$ be a normal sequence and $\v$ a starting vector. If we divide~$X$ into blocks of length~$N$, asymptotically, the fraction of blocks among $X[1\ldots N]$, $X[N+1,\ldots 2N]$, $\ldots$, $X[N(k-1)+1,\ldots Nk]$ that do not satisfy the constraint in Claim~\ref{claim:live5} will tend to $|B(N,\eta)|/2^N$, which we know is at least  $1-2\eta m$. So for almost all~$k$, this fraction will be at least $1-3\eta m$. For the blocks~$x$ that are not in $B(N,\eta)$, the corresponding matrix $\M_x$ can increase the norm by a factor at most $|\alph|^N$ (as in Case~$1$), hence it can only increase the ``live'' part of a vector by that same factor. Putting all this together, we must have, for any vector $\v$, which we assume to be unitary for simplicity, and almost all~$k$,
\begin{align*}
	\Live(\v \M_{X[1\ldots Nk]}) & \leq |\alph|^{Nk(3m\eta)} ((4/\eta) e^{-\alpha N})^{k(1-3m\eta)} 
\end{align*}

Taking, the logarithm:
\begin{align*}
	\ln \Live(\v \M_{X[1\ldots Nk]}) & \leq Nk \left(\ln |\alph| (3m \eta) + \frac{\ln(4/\eta)}{N} - \alpha(1-3m\eta) \right)
\end{align*}
Thus, for any $\beta < \alpha$, we can choose $\eta$ small enough so that, for almost all~$N$,
\begin{align*}
	\ln \Live(\v \M_{X[1\ldots Nk]}) & \leq -\beta Nk 
\end{align*}
i.e., 
\begin{align*}
	\Live(\v \M_{X[1\ldots Nk]}) & \leq e^{-\beta Nk}
\end{align*}

And since in between positions that are multiples of~$N$, $\Live(\v \M_{X[1\ldots n+1]})$ can only be $|\alph|$ times greater than $\Live(\v \M_{X[1\ldots n]})$, this shows that 
\begin{align*}
	\Live(\v \M_{X[1\ldots n]}) &= O(e^{-\beta n})
\end{align*}

Then, by Proposition~\ref{prop:live}(e), we must have 
\begin{align*}
	\left| \norm{\v \M_{X[1\ldots n+1]}} - \norm{\v \M_{X[1\ldots n]}} \right| \leq |\alph| \Live(\v \M_{X[1\ldots n]}) \leq O(e^{-\beta n})
\end{align*}

A sequence of reals $(x_n)$ satisfying $|x_{n+1}-x_n| = O(e^{-\beta n})$ converges exponentially fast to a limit. Indeed, we can write $x_n = x_0 + \sum_{i=0}^{n-1} (x_{i+1} - x_i)$ and since the series $\sum (x_{i+1} - x_i)$ is summable by our hypothesis, it must converge to a limit, and the distance between $x_n$ to this limit is bounded by $\sum_{i=n}^{\infty} |x_{i+1} - x_i|\ = O(e^{-\beta n})$. So we have finally proven that $\norm{\v \M_{X[1\ldots n]}}$ converges to a limit exponentially fast, and this concludes the proof of the Case 2.

\subsection{Examples of each Case}

We now show examples of probabilistic betting automata over alphabet $\alph = \{0, 1\}$ whose corresponding matrices belong to Case 0, 1 or 2.

For Case 0 we can consider the (trivial and deterministic) automaton on the left of Figure~\ref{fig:automata_case_0_and_1} defined over the alphabet $\alph=\{0,1\}$. 

\begin{figure}[ht]
\centering
\resizebox{0.8\textwidth}{!}{%
\begin{minipage}{0.49\textwidth}
\centering
	\begin{tikzpicture}[>=stealth]
		\node[circle, draw, minimum size=1.5cm,align=center] (q) {$q$\\ $2$};
		\draw[->] (q) edge[loop above] node[align=center] {$0, 1$} ();
	\end{tikzpicture}
\end{minipage}
\hfill
\begin{minipage}{0.49\textwidth}
\centering
	\begin{tikzpicture}[>=stealth, node distance=3cm]
		\node[circle, draw, minimum size=1.5cm, align=center] (q0) {$s_0$\\2};
		\node[circle, draw, right of=q0, minimum size=1.5cm, align=center] (q1) {$s_1$\\0};

		\draw[->] (q0) edge[loop above] node {$p_1$} ();
		\draw[->] (q1) edge[loop above] node {$p_2$} ();
		\draw[->] (q0) edge[bend left=20] node[above] {$1-p_1$} (q1);
		\draw[->] (q1) edge[bend left=20] node[below] {$1-p_2$} (q0);
	\end{tikzpicture}
\end{minipage}
}
\caption{On the left, a (deterministic) automaton that belongs to Case 0. Since there is only one state there is only one possible transition. Intuitively, the automaton always bets all of its capital to the next bit being 0.\\
On the right, an automaton that belongs to Case 1 whenever $p_1\neq 1/2$ or $p_2 \neq 1/2$. The transitions are independent of the read bit, and thus we only write the probabilities on the edges (and thus $p_1,p_2 \in (0,1)$). The node on the left ($s_0$) bets all its capital to the next bit being 0, while the one on the right ($s_1$) bets on the next bit being 1.}
\label{fig:automata_case_0_and_1}
\end{figure}

The matrices corresponding to this automaton can be computed following the strategy from the proof of Theorem~\ref{teo:prob_automata_fail_against_normal}, and they are
\begin{align}
	&\M_0 = \begin{pmatrix}
		2
	\end{pmatrix} &\M_1 = \begin{pmatrix}
		0
	\end{pmatrix} 
\end{align}
The corresponding support graph has two nodes $\{1\}$ and $\emptyset$, with an edge from the former one to the latter one (and self-loops in each node). It is clear that for any normal sequence~$X$ the value $\norm{\v \M_{X[1\ldots n]}}$ will eventually
be 0 for some $n$.

For Case 1, consider the probabilistic automaton on the right of Figure~\ref{fig:automata_case_0_and_1} defined over the same alphabet. The matrices corresponding to this automaton are
\begin{align*}
	&\M_0 = \begin{pmatrix}
		2p_1 & 2(1-p_1)\\
		0 & 0
	\end{pmatrix} &\M_1 = \begin{pmatrix}
		0 & 0\\
		2(1-p_2) & 2p_2
	\end{pmatrix}
\end{align*}
The non-null BSCC of the support graph contains uniquely the node $F=\{1,2\}$ with a self-loop. Our hypothesis is that $p_1 \not=1/2$ or $p_2 \not= 1/2$. By symmetry of the automaton, let us assume without loss of generality that the former holds. For any initial vector $\v$ it is the case that $\norm{\v \M_a}$ is~$0$ or has full support, for any $a \in \alph$. A word that pseudomixes $F$ is~$0$, since $\M_0$ can already be seen as a projection on the first component followed by a linear operator of dimension $1 \times 2$. Moreover, the vector from Proposition~\ref{prop:our-perron-frobenius}
is $\x = (p_1, 1-p_1)$ and one can readily check that $\norm{\x \M_0}=2p_1$ and $\norm{\x \M_1}=2(1-p_1)$. Therefore, when $p_1 \not= 1/2$, we have $\Delta(\x) > 0$. Thus, we know from our analysis of Case 1 that the norm $\norm{\v \M_{X[1\ldots n]}}$ tends to~$0$ but if the starting vector $\v$ has full support, so does every vector $\v \M_{X[1\ldots n]}$ and thus the limit~$0$ of the norm is never reached.

For Case 2, consider the probabilistic automaton from Figure~\ref{fig:case_2_automata}. The corresponding matrices can be computed as
\begin{align*}
	&\M_0 = \begin{pmatrix}
		1/4 & 5/4 & 1/2 & 0\\
		0 & 0 & 0 & 0\\
		1/2 & 0 & 2/3 & 5/6\\
		0 & 0 & 0 & 0
	\end{pmatrix} &\M_1 = \begin{pmatrix}
		0 & 0 & 0 & 0\\
		1/5 & 1 & 4/5 & 0\\
		0 & 0 & 0 & 0\\
		3/5 & 0 & 2/5 & 1
	\end{pmatrix}
\end{align*}
The non-null BSCC of the support graph contains uniquely the node $F =\{1,2,3,4\}$ with a self-loop. Moreover, note that from any initial vector $\v$ the trajectory of the support arrives at $F$ after reading
the word $10$ (unless the vector becomes $\zero$, in which case it arrives to the null BSCC). Thus, for any normal sequence it is clear that $\v\M_{X[1\ldots n]}$ will eventually either have full support or become $\zero$.

The word $10$ pseudomixes $F$, since
\begin{align*}
	\M_{10} = \begin{pmatrix}
		0 & 0 & 0 & 0\\
		9/20 & 1/4 & 19/30 & 2/3\\
		0 & 0 & 0 & 0\\
		7/20 & 3/4 & 17/30 & 1/3
	\end{pmatrix}
\end{align*}
is equivalent to a projection over the second and forth component followed by a positive linear operation of dimension $2 \times 4$. It can be seen that the vector from Proposition~\ref{prop:our-perron-frobenius} is
\begin{align*}
	\x = (1/5, 1/4, 3/10, 1/4)	
\end{align*}
which is non-betting (i.e. $\Delta(\x\cdot w) = 0$ for every word $w$). Moreover, it can be shown that this is the only stable configuration (i.e. the only vector $\y$ for which $\Delta(\y \cdot w) = 0$ for all words~$w$). 
Finally, it can be seen that for almost all initial vectors $\v$ with full support it is the case that $\v \M_w \notin \langle \x \rangle$ for all $w \in \alph^*$. Therefore, there are initial vectors for which the capital never stabilizes but rather converges
to a finite value.


\begin{figure}[ht]
	\centering
	\resizebox{0.7\textwidth}{!}{%
	\begin{tikzpicture}[>=stealth,thick, scale=0.8]

		\tikzset{
		state/.style={draw,circle,minimum size=1.9cm,align=center},
		accepting/.style={draw,double,circle,minimum size=1.9cm,align=center}
		}

		\node[state]     (q0) at (-0.5,2.5) {$s_0$\\ $2$};
		\node[state]     (q2) at (6.5,2.5) {$s_1$\\ $0$};
		\node[state]     (q1) at (-0.5,-2.5) {$s_2$\\ $2$};
		\node[state] (q3) at (6.5,-2.5) {$s_3$\\ $0$};

		\draw[->] (q0) edge[loop left] node {$1/8$} ();
		\draw[->] (q1) edge[loop left] node {$1/3$} ();
		\draw[->] (q2) edge[loop right] node {$1/2$} ();
		\draw[->] (q3) edge[loop right] node {$1/2$} ();

		\draw[->] (q0) to[bend left=12] node[above,pos=0.5] {$5/8$} (q2);
		\draw[->] (q2) to[bend left=12] node[below,pos=0.5] {$1/10$} (q0);

		\draw[->] (q1) to[bend left=12] node[above,pos=0.5] {$5/12$} (q3);
		\draw[->] (q3) to[bend left=12] node[below,pos=0.5] {$1/5$} (q1);

		\draw[->] (q0) to[bend left=18] node[right,pos=0.5] {$1/4$} (q1);
		\draw[->] (q1) to[bend left=18] node[left,pos=0.5] {$1/4$} (q0);

		\draw[->] (q3) -- node[pos=0.3,right,xshift=2mm] {$3/10$} (q0);
		\draw[->] (q2) -- node[pos=0.3,right,xshift=2mm] {$2/5$} (q1);

	\end{tikzpicture}
	}
	\caption{Automaton that belongs to Case 2. The transitions are independent of the read bit, and thus we only write the probabilities on the edges. The nodes on the left ($s_0$ and $s_2$) bet all their capital to the next bit being 0, while those on the right ($s_1$ and $s_3$) bet on the next bit being 1.}
	\label{fig:case_2_automata}
\end{figure}
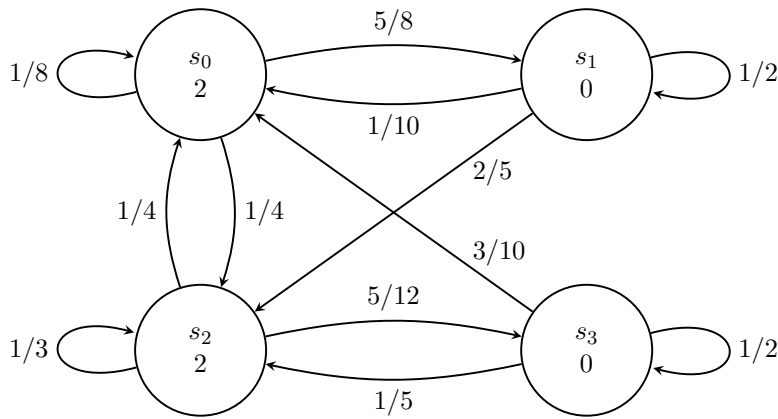

\subsection{Additional remarks (with $(\bigstar)$ hypothesis)}

\subsubsection{Unifying Cases 0, 1 and 2}

We now state a proposition that unifies the different insights developed for each Case.

\begin{proposition}
	For every superfair family of matrices $\{\M_a\}_{a \in \alph}$ satisfying hypothesis $(\bigstar)$ there exists some $\alpha > 0$ such that, for any $\eta > 0$ and $N$ large enough,
	\begin{align*}
		\Pr_{|z| = N}\left[ \forall \v\, \Live(\v \M_z) \leq e^{-\alpha N} \Live(\v) \right] \geq 1 - \eta
	\end{align*} 
\end{proposition}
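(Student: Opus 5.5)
We argue case by case, following the trichotomy of Propositions~\ref{prop:case_0}, \ref{prop:case_0_statement} and~\ref{prop:case_1_statement}. In each case we aim for a bound $\Live(\v \M_z) \leq C(\eta) e^{-\alpha' N}\Live(\v)$ holding for all $\v$ with probability at least $1-\eta$ over $|z|=N$, where $\alpha'$ does not depend on $\eta$. The subexponential factor $C(\eta)$ is then absorbed by replacing $\alpha'$ with any smaller $\alpha$ and taking $N$ large. Throughout, we reduce to basis vectors: once $\Live(\e_i\M_z)\leq c_N \norm{\e_i}$ holds for all $i$ simultaneously, the sublinearity of $\Live$ (Proposition~\ref{prop:live}(b,c)) gives $\Live(\v\M_z)\leq c_N\norm{\v}$. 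The decomposition $\v=\s+\v'$ from~\eqref{eq:from_norm_to_live} then upgrades $\norm{\v}$ to $\Live(\v)$, exactly as in the passage from Claim~\ref{claim:live4} to Claim~\ref{claim:live5}.

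\textbf{Case 2.} This is Claim~\ref{claim:live5} with $\eta$ replaced by $\eta/(2m)$. The factor $8m/\eta$ is absorbed by taking $\alpha$ slightly smaller than the constant in~\eqref{eq:shrinking-distance}, which is independent of $\eta$.

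\textbf{Case 1.} We use $\Live(\u)\leq\norm{\u}$ (Proposition~\ref{prop:live}(a)). Claim~\ref{claim:allmost_all_sequences_lose_exponentially_fast} then gives $\Live(\v\M_z)\leq e^{-(\alpha_1/2)N}\norm{\v}$ for all $\v$ with probability $\geq 1-2m\eta$, and the decomposition argument turns $\norm{\v}$ into $\Live(\v)$. In fact, in Case 1 one can check that $\S$ is essentially trivial on the relevant supports, but the decomposition argument makes this unnecessary.

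\textbf{Case 0.} We use the BSCC-synchronizing word $w_{sync}$ of $\hat{\A}$. Since $\{\emptyset\}$ is the only BSCC reachable from singletons, $\{i\}\cdot w_{sync}=\emptyset$ for every $i$, so $\M_{w_{sync}}=0$. A uniformly random $z$ of length $N$ contains $w_{sync}$ as a factor with probability tending to $1$ (consider disjoint blocks of length $|w_{sync}|$). For such $z$ we have $\M_z=0$, hence $\Live(\v\M_z)=0$, and the bound holds for any $\alpha$.

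Finally we take $\alpha$ to be the minimum of the case-specific constants; only one case applies to a given family, so this is simply that case's constant.

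The main obstacle is bookkeeping rather than mathematics: we must make sure the exponent $\alpha$ is chosen before $\eta$. In Case 2 the prefactor $4/\eta$ from Kolmogorov's inequality has to be absorbed by shrinking the exponent, which works only because $\alpha$ in~\eqref{eq:shrinking-distance} is uniform. In Case 1 we must verify that the constant $\alpha_1$ from Claim~\ref{claim:gamma_grows_fast} is likewise independent of $\eta$. It is: $\alpha_1$ depends only on $\pi(F)$, $|w^kz|$ and $\delta$.
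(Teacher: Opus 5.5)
Your proposal is correct and follows the paper's proof. It uses the same three-way case split: BSCC-synchronizing words for Case 0, Claim~\ref{claim:allmost_all_sequences_lose_exponentially_fast} for Case 1, and Claims~\ref{claim:live4}--\ref{claim:live5} for Case 2. The only variation is in Case 1: the paper uses that $\S=\{\zero\}$ exactly there (a nonzero $\s\in\S$ would keep its norm under every $\M_z$, contradicting the exponential loss), so $\Live=\norm{\cdot}$, whereas your decomposition via~\eqref{eq:from_norm_to_live} gets there without that observation; your explicit absorption of the $\eta$-dependent prefactor into a slightly smaller, $\eta$-independent exponent is a detail the paper leaves implicit.
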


\begin{proof}
	For families of matrices in Case 0 this holds immediately because for $N$ big enough it is the case that $\norm{\v\M_z }=0$ for most sequences $z$ and all vectors $\v$ due to the existence of BSCC synchronizing words. Meanwhile, for those families in Case 2 this can be deduced from Claim~\ref{claim:live4}. Lastly, for those in Case 1 this is a
	consequence of Claim~\ref{claim:allmost_all_sequences_lose_exponentially_fast} and the fact that in this context it is the case that $\Live(\v) = \norm{\v}$ for every $\v$ (otherwise, there would be some initial vectors for which the norm does not converge to zero).
\end{proof}

\subsubsection{Unfair families of matrices and leakage}

Case 2 corresponds to a situation in which the martingale stabilizes because the amount of capital ``in play'' converges exponentially fast to zero. We observe that this cannot happen when the martingale is unfair.

\begin{proposition}\label{prop:unfair_implies_loss}
	Let $\{\M_a\}_{a \in \alph}$ be a family of matrices which satisfies $(\bigstar)$ and is superfair but not fair. Namely, there is some vector $\v$ such that
	\begin{align*}
		\norm{\v} > \frac{\sum_{a \in \alph} \norm{\v \M_a}}{|\alph|}	
	\end{align*} 
	Then, Case 2 cannot happen for this family of matrices. 
\end{proposition}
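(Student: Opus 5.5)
Suppose for contradiction that the family is in Case 2, so with $\x \in \U_m^+(F)$ as in Proposition~\ref{prop:our-perron-frobenius} we have $\Delta(\x \cdot z)=0$, i.e.\ $\norm{\x\M_z}=\norm{\x}$, for every word $z$. The plan is to use the strict-loss vector $\v$ given by the hypothesis to force a strictly positive risk somewhere along the orbit of $\x$.

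\textbf{Step 1: the loss has positive weight on $F$.} Define the defect
\[
g(\u)=\norm{\u}-\frac{1}{|\alph|}\sum_{a\in\alph}\norm{\u\M_a}.
\]
This is linear on $\vect$ and non-negative by superfairness. Since $g(\v)>0$ and $g(\v)=\sum_i \v(i)\,g(\e_i)$, some index $i_0$ satisfies $g(\e_{i_0})>0$. By Lemma~\ref{lem:unique-bscc}(a) there is $E\in\B$ with $i_0\in E$. Because $F$ and $E$ lie in the same SCC $\B$, there is a word $u$ with $F\cdot u=E$. Hence $\y=\x\cdot u\in\U_m^+(E)$ has $\y(i_0)>0$ (it is nonzero, since $\x\M_u$ has norm $\norm{\x}>0$).

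\textbf{Step 2: contradiction.} By linearity and non-negativity,
\[
g(\y)\geq \y(i_0)\,g(\e_{i_0})>0,
\]
so $\frac{1}{|\alph|}\sum_a\norm{\y\M_a}<\norm{\y}$. Averaging over $a$, some letter $a$ gives $\norm{\y\M_a}<\norm{\y}$, that is $\norm{\x\M_{ua}}<\norm{\x\M_u}=\norm{\x}$. This contradicts $\norm{\x\M_z}=\norm{\x}$ for $z=ua$. Equivalently, $\Delta(\x\cdot u)>0$, since by Jensen the inequality in \eqref{eq:jensen} is then strict. So Case 2 cannot hold.

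\textbf{Main point to check.} The one delicate step is guaranteeing that the orbit of $\x$ covers the index $i_0$. I resolve this through $(\bigstar)$ via Lemma~\ref{lem:unique-bscc}(a) together with the strong connectivity of $\B$, which supplies a word $u$ reaching a support containing $i_0$. The remaining step, that $g$ is linear, is immediate because all the vectors involved are non-negative, so the norm is additive on them.
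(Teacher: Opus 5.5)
Your proof is correct and follows the paper's argument: you use linearity of the defect to reduce unfairness to a single index $i_0$ with $g(\e_{i_0})>0$, then find a word moving $\x$ to a vector whose support contains $i_0$, which forces $\Delta(\x\cdot u)>0$. The only difference is cosmetic. The paper gets the word $z$ with $i_0\in F\cdot z$ directly from $(\bigstar)$, whereas you go through Lemma~\ref{lem:unique-bscc}(a) and the strong connectivity of $\B$, which amounts to the same thing.
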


\begin{proof}
	First, we observe that under the hypothesis there exists some canonical vector $\e_i$ that serves as the witness for the unfairness of $\{\M_a\}_{a \in \alph}$. This follows by linearity:
	if for all $\e_i$ it is the case that $\sum_{a\in \alph} \norm{\e_i \M_a} = |\alph|$ then this equality also holds for any other vector. Thus, we may assume that $\v = \e_i$ for some $i$, which we now fix.

	Let $F \subseteq [1\ldots m]$ be a set in the non-null BSCC of the support graph, and let $\x$ be the corresponding vector obtainable using Proposition~\ref{prop:our-perron-frobenius}. Because of the assumption $(\bigstar)$ there is some word $z$ such that $i \in F \cdot z$. Letting $\y = \x \M_z$, note that
	\begin{align*}
		\sum_{a \in \alph} \norm{\y \M_{a}} &= \sum_{a \in \alph} \sum_{j=1}^m  \y(j)\norm{\e_j \M_{a}}\\
		&= \sum_{j=1}^m \y(j) \sum_{a \in \alph} \norm{\e_j \M_{a}} < |\alph| \,\,\norm{\y}
	\end{align*}
	where the last inequality holds because $\y(i) > 0$. This implies that $\Delta(\y)= \Delta(\x \cdot z) > 0$, which shows that we are not in Case 2.
\end{proof}

Proposition~\ref{prop:unfair_implies_loss} shows that if the game is ``unfair'' to the martingale then its value will always converge exponentially fast to zero (or will even be ultimately~$0$). This fact will be useful when considering the general case
without the $(\bigstar)$ hypothesis. 

\subsubsection{Decidability of the case disjunction}

We show that it is possible to decide which one of the three cases holds if the matrices entries are algebraic using the existential theory of the reals \cite{canny1988some}. The following lemma connects the Perron-Frobenius theorem with the field of algebraic reals. 

\begin{lemma}\label{lemma:computable-perron-frobenius}
Let $\M \in \R^{m \times m}_{> 0}$ be a positive matrix with algebraic entries. Then,
\begin{enumerate}
	\item $\M$ has a unique maximum (in absolute value) eigenvalue $\lambda \in \R_{\geq 0}$ whose related eigenspace $E_{\lambda} = \langle \v \rangle$ is one-dimensional and such that $\v > \zero$ and $\norm{\v}=1$.
	\item For every other eigenvector $\w$ of $\M$ such that $\w \notin E_{\lambda}$ it is the case that there is some $i \in [1\ldots m]$ with $\w_i \notin \R_{> 0}$.
	\item There is a predicate $\Phi_{\v}(\x)$ over the existential theory of the reals that captures $\v$ (i.e. $\Phi_{\v}(\x)$ is true if and only if $\x = \v$)
\end{enumerate}
\end{lemma}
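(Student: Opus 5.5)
Items 1 and 2 are the classical Perron--Frobenius theorem for positive matrices. I would not reprove it from scratch; instead I would derive it from the tools already in the paper, namely Birkhoff's theorem and the argument of Proposition~\ref{prop:our-perron-frobenius}.

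\textbf{Existence of the positive eigenvector.} Take $E=\interval$ and $w$ a single letter acting by $\M$. This word mixes $E$, since every entry of $\M$ is positive. The nested-image argument then yields a unique $\v\in\U_m^+$ with $\v\cdot\M=\v$, i.e.\ $\v\M=\lambda\v$ with $\lambda=\norm{\v\M}>0$.

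\textbf{Uniqueness among positive vectors (item 2).} Suppose $\w>\zero$ is an eigenvector, normalized to $\norm{\w}=1$. Then $\w\cdot \M=\w$, so $\dh(\v,\w)=\dh(\v\M,\w\M)\le\tau(\M)\,\dh(\v,\w)$. Since $\tau(\M)<1$, this forces $\dh(\v,\w)=0$, hence $\w=\v$. This is exactly item 2, and it also gives one-dimensionality of the positive part.

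\textbf{Dominance of $\lambda$ (rest of item 1).} Let $\mu$ be another eigenvalue with eigenvector $\w$. Applying $\M$ to $|\w|$ coordinatewise gives $|\mu|\,|\w|\le|\w|\M$. Pair this with the positive left/right Perron vector $\mathbf{r}$ of $\M^{T}$, obtained by the same argument applied to the transpose. Since $\M\mathbf{r}=\lambda'\mathbf{r}$ and $\v\M\mathbf{r}=\lambda\v\mathbf{r}=\lambda'\v\mathbf{r}$, we get $\lambda=\lambda'$, and therefore
\[
|\mu|\,|\w|\mathbf{r}\le |\w|\M\mathbf{r}=\lambda|\w|\mathbf{r}.
\]
This shows $|\mu|\le\lambda$.

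\textbf{Strictness.} Equality $|\mu|=\lambda$ forces $|\mu|\,|\w|=|\w|\M$. Then $|\w|$ is a nonnegative eigenvector, and since $|\w|\M>\zero$ it is in fact positive, so $|\w|$ is collinear with $\v$. Positivity of the entries of $\M$ forces all the complex phases of $\w$ to align, because the triangle inequality is tight only then. Hence $\w$ is a scalar multiple of $\v$ and $\mu=\lambda$. The same argument gives geometric multiplicity one. Algebraic simplicity is not actually needed for item 1 as stated (only the eigenspace dimension), so I would skip it.

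\textbf{Item 3.} By item 2, $\v$ is the unique solution of the system
\[
\exists \lambda\ \Big(\bigwedge_j \sum_i \x_i\M(i,j)=\lambda\x_j\Big)\wedge\bigwedge_i \x_i>0\wedge\sum_i\x_i=1 .
\]
The only issue is that the entries of $\M$ are algebraic rather than rational. For each entry I would introduce an existentially quantified variable $y_{ij}$ and pin it down by its minimal polynomial with rational coefficients together with rational isolating-interval bounds. Such a polynomial and interval exist and uniquely determine the real algebraic number. This gives an existential first-order formula $\Phi_\v(\x)$ over the reals, with integer coefficients, that holds exactly when $\x=\v$.

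\textbf{Main obstacle.} The delicate points are the strictness part of item 1 and the encoding of algebraic constants in item 3. The former is the standard phase-alignment step; for the latter, a single isolating interval per entry suffices.
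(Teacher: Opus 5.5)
Your proof is correct, but for items 1 and 2 it takes a different route from the paper.

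\textbf{What the paper does.} It simply cites the Perron--Frobenius theorem for items 1 and 2. It then writes down the predicate $\x>\zero\wedge\norm{\x}=1\wedge\exists\lambda>0\,(\M\x=\lambda\x)$, which is correct by item 2.

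\textbf{What you do differently.} You re-derive the needed part of Perron--Frobenius from the paper's own Hilbert-metric machinery:
\begin{itemize}
\item The positive fixed direction comes from Proposition~\ref{prop:our-perron-frobenius}, with $\M$ treated as a mixing one-letter word. Rescaling $\M$ to fit the superfair setting is harmless, since the action is scale-invariant.
\item Item 2 is a one-line consequence of Birkhoff's contraction.
\item The spectral claims of item 1 come from the standard Wielandt-type argument. You pair $|\mu|\,|\w|\le|\w|\M$ with the positive Perron vector of $\M^T$, then use tightness of the triangle inequality.
\end{itemize}
Each step checks out, including geometric multiplicity one, since every eigenvector for $\lambda$ falls under your equality case.

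\textbf{What each route buys.} The citation is shorter and gives more, such as algebraic simplicity. Your route is self-contained. It also makes visible that item 3 only depends on item 2, which is essentially Birkhoff's theorem.

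\textbf{Item 3.} Your treatment is more careful than the paper's. The paper writes the algebraic entries of $\M$ directly into the formula. You eliminate them via existentially quantified variables, each pinned down by its minimal polynomial and a rational isolating interval. That is what makes $\Phi_\v$ a formula of the existential theory of the reals with integer coefficients.

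\textbf{Convention.} Your predicate $\x\M=\lambda\x$ picks out the left Perron vector. This matches the row-vector convention used when the lemma is later applied ($\x=\v\M_w$). The paper's $\M\x=\lambda\x$ picks out the right Perron vector. The lemma holds under either convention, provided it is used consistently.
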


\begin{proof}
The Perron-Frobenius Theorem \cite{horn2012matrix} implies 1) and 2). For the predicate we can take
\begin{align*}
	\Phi_{\v}(\x) = \left[\x > \zero \wedge \norm{\x} = 1 \wedge \exists \lambda > 0 \left( \M\x = \lambda \x \right)\right]
\end{align*}
which is correct due to 1) and 2).
\end{proof}

Lemma~\ref{lemma:computable-perron-frobenius} provides an algorithm to compute the vector $\x$ from Proposition~\ref{prop:our-perron-frobenius} in \PSPACE{}\footnote{Because the existential theory of the reals can be decided in \PSPACE{}\cite{canny1988some}.}. Thus, the only remaining tasks
are to find a maximal node $E$ in the support graph, a pseudo-mixing word $w$ for $E$ and deciding whether $\Delta(\x \cdot w) > 0$ for some word $w$.

\begin{theorem}\label{teo:decidability}
One can decide, given a superfair family of matrices $(\M_a)_{a \in \Sigma}$, which of the three cases (Case 0, Case 1 or Case 2) holds. Note that many cases can hold at the same time.
\end{theorem}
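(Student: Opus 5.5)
The decision procedure works directly on the finite support automaton $\A$ and reduces the only analytic question to a sentence of the existential theory of the reals. I assume throughout that the entries are algebraic, as in the discussion preceding the theorem.

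\textbf{Step 1: Case 0.} Construct $\A$ explicitly, with states $2^\interval$. Each transition $E \overset{a}{\rightarrow} E\cdot a$ depends only on the zero pattern of $\M_a$, so it is computable. Compute the sub-automaton $\hat{\A}$ reachable from the singletons and its BSCCs by standard graph algorithms. Case 0 holds exactly when the only BSCC of $\hat{\A}$ is $\{\emptyset\}$.

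\textbf{Step 2: the data for Cases 1 and 2.} If Case 0 fails, Lemma~\ref{lem:unique-bscc} gives a unique non-null BSCC $\B$. Pick an inclusion-minimal $F\in\B$. Find a pseudo-mixing word $w$ for $F$, which exists by Lemma~\ref{lem:unique-bscc}(c). It can be found by enumerating words in length order and testing the finite combinatorial condition ``$F\cdot w=F$ and each $\{i\}\cdot w\in\{F,\emptyset\}$''; the proof of the lemma even bounds its length by roughly $2^{m}\cdot 2^m$. Then characterize $\x$ by a first-order formula. By Proposition~\ref{prop:our-perron-frobenius}, $\x$ is the unique point of $\U_m^+(F)$ fixed by the action of $w$: $\x\cdot w=\x$ by continuity, and uniqueness follows from contraction. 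So take
\[
\Phi(\y)\equiv \supp(\y)=F\wedge \norm{\y}=1\wedge \exists\lambda>0\ (\y\M_w=\lambda\y).
\]
This plays the role of Lemma~\ref{lemma:computable-perron-frobenius}, adapted to the pseudo-mixing setting.

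\textbf{Step 3: reduce the quantifier over all words.} Case 2 says $\norm{\x\M_z}=\norm{\x}$ for every $z\in\alph^*$. Since $\Delta(\u)=0$ iff $\norm{\u\M_a}=\norm{\u}$ for all $a$, this amounts to $\x$ lying in the non-betting cone $\S$. The set $\{\u\in\R^m : \u\M_z\one^T=\u\one^T\ \forall z\}$ is a linear subspace $V$. It equals the largest subspace contained in $\{\u : \u(\M_a\one^T-\one^T)=0\ \forall a\}$ and invariant under all $\M_a$, and it is computable by iterating $V_0=\R^m$, $V_{k+1}=V_k\cap\bigcap_a\{\u : \u\M_a\in V_k\}\cap\{\u : \u\M_a\one^T=\u\one^T\ \forall a\}$. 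This stabilizes after at most $m$ steps, so only words of length $\le m$ matter. Case 2 then holds iff $\exists \y\,(\Phi(\y)\wedge \bigwedge_{|z|\le m}\norm{\y\M_z}=1)$, and Case 1 holds iff $\exists\y\,(\Phi(\y)\wedge\bigvee_{|z|\le m+1}\norm{\y\M_z}\ne 1)$. Both are existential sentences over the reals with algebraic parameters. They are decidable in \PSPACE{} via \cite{canny1988some}; alternatively, solve the linear algebra exactly over the number field.

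The main obstacle is Step 3, justifying that finitely many words suffice. The nonnegative cone $\S$ is not obviously finitely described, but passing to the linear span works. The norm on nonnegative vectors is the linear functional $\u\mapsto\u\one^T$, and every $\x\M_z$ stays nonnegative. Hence membership of $\x$ in $\S$ is equivalent to membership in the invariant subspace $V$, and the dimension argument bounds the word length by $m$. A secondary point is checking that the fixed point defined by $\Phi$ is unique even though $\M_w$ is only pseudo-mixing; Proposition~\ref{prop:birkhoff-pseudo-mixing} gives this, since two fixed directions at finite $\dh$-distance would have to contract to each other.
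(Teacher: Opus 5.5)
Your proposal is correct and follows the same route as the paper: detect Case 0 on the support automaton, pin down $\x$ by a predicate in the existential theory of the reals, and use a dimension argument to reduce the quantification over all words to words of length at most $m$. The differences are minor. You characterize $\x$ directly as the unique fixed direction of $\M_w$ with support $F$, using the paper's pseudo-mixing contraction, instead of applying Perron--Frobenius to the positive compression $\M_w^{[E'\times E]}\mathbf{P}$. You also phrase the dimension argument dually, as the largest $\M_a$-invariant subspace inside the annihilator, rather than through the spans $S_a^{k}$.
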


\begin{proof}
We first recall that we can decide whether two nodes of the support graph are connected in $\PSPACE{}$ by using the reachability algorithm from Savitch's Theorem~\cite{arora2009computational}. Thus, Case 0 can be detected trivially in $\PSPACE{}$. From now on we focus on the other two cases.

Let $E\subseteq [1,\ldots,m]$ be a maximal set that belongs to the BSCC of the support graph as in the proof of Proposition~\ref{prop:birkhoff-pseudo-mixing}. We can find such an $E$ in $\PSPACE$ by exploiting again the algorithm for reachability that acts implicitly on the support graph.
Moreover, observe that the word $w$ that pseudo-mixes $E$ was constructed explicitly in Lemma~\ref{lem:unique-bscc}, and it has exponential size on the dimension of the matrices\footnote{This is the bottleneck of our algorithm. Given a proof that pseudomixing words have length $O(n^k)$ for some $k \in \mathbb{N}$ we would be able to distinguish the cases in $\PSPACE{}$.}. Letting $\M_w = \mathbf{P} \M_w^{[E' \times E]}$ using the notation from Proposition~\ref{prop:birkhoff-pseudo-mixing}, we can consider the matrix $\M_w' = \M_w^{[E' \times E]} \mathbf{P}$ which is strictly positive when seen as an operator from $E'$ to $E'$.
By Lemma~\ref{lemma:computable-perron-frobenius} we can compute a predicate for the unique eigenvector $\v$ of $\M_w'$. It holds that $\x = \v \M_w $ where $\x$ is the vector from Proposition~\ref{prop:birkhoff-pseudo-mixing}.

Now, to detect whether there is some word $w$ such that $\Delta(\x \cdot w) > 0$, recall that $\Delta(\x \cdot w) = 0 \iff \norm{\x \cdot w} = \norm{\x \cdot w a}$ for all $a \in \alph$.
Thus, $\Delta(\x \cdot w) = 0$ for all $z$ if and only if, for all $w$ and $a\in \alph$, it holds that
\newcommand{\aOne}{\mathbf{1}}
\begin{align*}
\norm{\x \cdot w} = \norm{\x \cdot w a}&\iff \x \M_w \aOne = \x \M_w \M_a \aOne\\
&\iff \x \M_w (\mathbf{I} - \M_a) \aOne = 0
\end{align*}
Let $\aOne_a = (\mathbf{I} - \M_a) \aOne$. To check this condition we compute a base for the subspace $S_a = span(\{\M_w \aOne_a : w \in \alph^*\})$. We define
\begin{align*}
S_a^k = span(\{\M_w \aOne_a : w \in \alph^{\leq k}\})
\end{align*}
and observe that $S_a = S_a^{m-1}$. To prove this, we show by induction on $i$ that $S_a^{m-1+i} \subseteq S_a^{m-1}$. For $i=1$ take $w \in \alph^{m}$ and note that, because $dim(S_a) \leq m$, there is some $j \in \interval$ such that
$\M_{w[j \ldots m]} \aOne_a \in S_a^{m-j}$: note that $dim(S_a^0) = 1$, and thus it cannot be the case that each of the $m$ vectors $\M_{w[j \ldots m]} \aOne_a$ increases the dimension of the set $S_a^{m-j}$ when added to it. Then, $\M_{w[j \ldots m]} \aOne_a  = \sum_{w' \in A^{\leq m-j-1}} \alpha_{w'} \M_{w'} \aOne_a$, and therefore
\begin{align*}
\M_{w} \aOne_a &= \M_{w[1:j-1]} \M_{w[j \ldots m]} \aOne_a\\
&= \M_{w[1:j-1]} \sum_{w' \in \alph^{\leq m-j-1}} \alpha_{w'} \M_{w'} \aOne_a\\
&= \sum_{w' \in \alph^{\leq m-j-1}} \alpha_{w'} \M_{w[1:j-1]} \M_{w'} \aOne_a \in S_a^{(j-1) + (m-j)} = S_a^{m-1}
\end{align*}
For the general case, we use a similar strategy: take $bw$ with $b \in \alph$ and $w \in \alph^{m-1+i}$ and note that
\begin{align*}
\M_b \M_{w} \aOne_a &= \M_b \left(\sum_{w' \in \alph^{\leq m-1}} \alpha_{w'} \M_{w'} \aOne_a\right)\\
&= \sum_{w' \in \alph^{\leq m-1}} \alpha_{w'} \M_b \M_{w'} \aOne_a \in S_a^m = S_a^{m-1}
\end{align*}

Then, to check if $\norm{\x\cdot w} = 0$ for all words $w$ we check that $\x$ is annihilated by all the subspaces $S_a^{m-1}$ for $a\in \alph$. This can be done within a formula of the existential theory of the reals in which we can use the predicate from Lemma~\ref{lemma:computable-perron-frobenius} to capture $\x$.
\end{proof}

\section{The general case: removing hypothesis~$(\bigstar)$}

Now we prove Theorem~\ref{thm:main} without assuming the $(\bigstar)$ hypothesis. To do this, we need to introduce one last object: the reachability graph.

\newcommand{\aFamily}{\mathcal{M}}
\newcommand{\numberSCCs}{s}

Give a superfair family of non-negative matrices $\aFamily = \{\M_a\}_{a \in \alph}$ where each matrix has dimension $m \times m$ we consider the reachability graph $G_{\aFamily}$ whose nodes are the indices $[1\ldots m]$ and there is an edge $i \to j$ if and 
only if there is some matrix $\M_a$ such that $\M_a(i,j) > 0$. To prove the general theorem we will use an inductive argument on the number $\numberSCCs$ of strongly connected components of $G_{\mathcal{M}}$.
More precisely, we will show inductively on $s$ that 
\begin{claim}\label{claim:convergence_for_s_SCCs}
	Let $\{\M_a\}_{a \in \alph}$ be a superfair family of matrices whose reachability graph has $s$ strongly connected components. Then, there exists some $\alpha > 0$ such that, for every $\eta > 0$ and $N$ big enough,
	\begin{align}\label{eq:claim_convergence_s_SCCs}
		\Pr_{|z| = N}\left[ \forall \v\, \Live(\v \M_z) \leq e^{-\alpha N} \Live(\v) \right] \geq 1 - \eta 
	\end{align}  
\end{claim}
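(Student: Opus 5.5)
The plan is an induction on the number $\numberSCCs$ of strongly connected components of $G_{\aFamily}$, with the already-established $(\bigstar)$ case as the base.

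\textbf{Base case.} If $\numberSCCs=1$, then for any $i,j$ there is a path $i=i_0\to i_1\to\dots\to i_r=j$ in $G_{\aFamily}$, with edge $i_{t-1}\to i_t$ witnessed by $\M_{a_t}(i_{t-1},i_t)>0$. Since all matrices are non-negative, $\M_{a_1\cdots a_r}(i,j)>0$. For $i=j$ the empty word works, because $\M_\emptystr$ is the identity. So $(\bigstar)$ holds, and the unifying proposition of the previous section gives exactly \eqref{eq:claim_convergence_s_SCCs}.

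\textbf{Block decomposition.} For the inductive step, choose a bottom SCC $C$ of $G_{\aFamily}$ and let $R=\interval\setminus C$. Since no edge leaves $C$, each matrix is block upper-triangular:
\[
\M_a=\begin{pmatrix}\mathbf{P}_a & \mathbf{Q}_a\\ \zero & \mathbf{B}_a\end{pmatrix},
\]
with rows and columns ordered $R$ then $C$. Both families $\{\mathbf{P}_a\}$ on $R$ and $\{\mathbf{B}_a\}$ on $C$ are superfair, since their outputs are sub-vectors of $\u\M_a$.
\begin{itemize}
\item $G$ restricted to $C$ is strongly connected, so the base case applies to $\{\mathbf{B}_a\}$.
\item $G$ restricted to $R$ has $\numberSCCs-1$ SCCs, so the induction hypothesis applies to $\{\mathbf{P}_a\}$.
\end{itemize}
Write $\S_R,\S_C$ and $\Live_R,\Live_C$ for the corresponding non-betting cones and live functions, and write $\v=(\v_R,\v_C)$. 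Then
\[
\v\M_z=\Big(\v_R\mathbf{P}_z,\ \v_C\mathbf{B}_z+\sum_{z=z_1az_2}\v_R\mathbf{P}_{z_1}\mathbf{Q}_a\mathbf{B}_{z_2}\Big).
\]

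\textbf{Key lemma: non-betting mass on $R$ never leaks.} Let $\s_R\in\S_R$. Then $\sum_a\norm{\s_R\mathbf{P}_{ya}}=|\alph|\,\norm{\s_R}$ for every $y$. Superfairness of the full family applied to $\s_R\mathbf{P}_y$ gives
\[
\sum_a\big(\norm{\s_R\mathbf{P}_{ya}}+\norm{\s_R\mathbf{P}_y\mathbf{Q}_a}\big)\le|\alph|\,\norm{\s_R}.
\]
Hence $\s_R\mathbf{P}_y\mathbf{Q}_a=\zero$ for all $y,a$, and $(\s_R,\zero)\in\S$. Similarly, $(\zero,\s_C)\in\S$ whenever $\s_C\in\S_C$. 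By the sub-additivity in Proposition~\ref{prop:live}, this gives
\[
\Live(\v)\le\Live_R(\v_R)+\Live_C(\v_C).
\]
Conversely, I expect $\Live_R(\v_R)+\Live_C(\v_C)\le K\,\Live(\v)$ for some constant $K$: the live part of $\v$ should dominate the live parts of its blocks, because a non-betting vector of the full family restricts on $R$ to a $\{\mathbf{P}_a\}$-supermartingale that leaks only what the $C$-block absorbs. I would verify this either directly or by working throughout with the quantity $\Live_R(\v_R)+\Live_C(\v_C)$ in place of $\Live(\v)$ and converting only at the end.

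\textbf{Main estimate.} Split $z=z_1z_2$ with $|z_1|=|z_2|=N/2$. With probability at least $1-\eta$:
\begin{itemize}
\item the induction hypothesis gives $\Live_R(\v_R\mathbf{P}_{z_1})\le e^{-\alpha N/2}\Live_R(\v_R)$ and the same for $z_2$;
\item the base case gives $\Live_C(\cdot\,\mathbf{B}_{z_2})\le e^{-\alpha N/2}\Live_C(\cdot)$.
\end{itemize}
The leaked term during $z_2$ comes only from the live part $\mathbf{l}$ of $\v_R\mathbf{P}_{z_1}$, by the key lemma. Its total mass is at most $|\alph|^{N/2}\norm{\mathbf{l}}\le|\alph|^{N/2}e^{-\alpha N/2}\Live_R(\v_R)$, which is too crude. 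The fix is to take $|z_1|=N-\theta N$ and $|z_2|=\theta N$ with $\theta$ small, so that $|\alph|^{\theta N}e^{-\alpha(1-\theta)N}$ is still exponentially small.

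The leaked mass entering during $z_1$ then sits in the $C$-block and must be contracted by $\mathbf{B}$ during $z_2$. This conflicts with $z_2$ being short, and it is the main obstacle. I would resolve it by a three-segment split $z_1z_2z_3$:
\begin{enumerate}
\item a long $z_1$ in which the $R$-live part decays;
\item a short $z_2$ during which the remaining leakage is bounded crudely;
\item a long $z_3$ in which the $\mathbf{B}$-contraction kills whatever live mass sits in $C$.
\end{enumerate}
The mass that leaked during $z_1$ is bounded by the total norm decrease of the $R$-live part, and is therefore at most $\Live_R(\v_R)$ up to a factor controlled by Kolmogorov's inequality (Theorem~\ref{thm:kolmogorov-inequality}). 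Choosing the three proportions and absorbing the $1/\eta$ factors into a slightly smaller exponent yields \eqref{eq:claim_convergence_s_SCCs} with some new $\alpha>0$.
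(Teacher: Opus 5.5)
Your route differs from the paper's, and it is viable. You peel off a bottom SCC $C$ and apply the induction hypothesis to the upper part $R$, using your key lemma that $\S_R$-mass never leaks; that lemma is correct. The paper peels off the top SCC. It notes that leakage makes $\{\M_a^{[C\times C]}\}$ superfair but not fair, and uses Proposition~\ref{prop:unfair_implies_loss} to make the whole norm on $C$ decay. However, two steps of your proposal fail as written.

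The first failure is the converse inequality $\Live_R(\v_R)+\Live_C(\v_C)\le K\,\Live(\v)$, which is false: a betting $R$-part can be hedged by the $C$-part. Take $\alph=\{0,1\}$, indices ordered $(r,c_0,c_1)$, and
\[
\M_0=\begin{pmatrix}0&1&1\\0&1&1\\0&0&0\end{pmatrix},\qquad
\M_1=\begin{pmatrix}0&0&0\\0&0&0\\0&1&1\end{pmatrix}.
\]
This family is fair, and $C=\{c_0,c_1\}$ is the bottom SCC. Since $\mathbf{P}_0=\mathbf{P}_1=(0)$, we have $\Live_R(\v_R)=\norm{\v_R}$. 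Yet $\v=(1,0,1)$ gives $\v\M_0=\v\M_1=(0,1,1)$, which both matrices fix. So $\v\in\S$ and $\Live(\v)=0$, while $\Live_R(\v_R)=1$.

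Hence the final conversion cannot go through the block live parts. Use instead the device behind \eqref{eq:from_norm_to_live}:
\begin{itemize}
\item prove $\Live(\v\M_z)\le c\,\norm{\v}$ for all $\v$;
\item write $\v=\s+\v'$ with $\s\in\S$ and $\norm{\v'}=\Live(\v)$;
\item use $\S\M_z\subseteq\S$ and subadditivity to get $\Live(\v\M_z)\le c\,\Live(\v)$.
\end{itemize}

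The second failure is in your main estimate. The obstacle you identify is not real, and your three-segment fix leaves a term uncontrolled. The residual $R$-live mass keeps leaking into $C$ throughout the long segment $z_3$, and mass entering near its end is never contracted by $\mathbf{B}$.

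You invoke Kolmogorov only for the $z_1$-leakage, but it does more. Applying Theorem~\ref{thm:kolmogorov-inequality} to each $\e_i$ with a union bound gives, at every length, $\norm{\u\M_y}\le(1/\eta)\norm{\u}$ for all $\u$ simultaneously, with probability at least $1-m\eta$. This replaces every crude $|\alph|^{|y|}$ factor.

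With it, the equal split $|z_1|=|z_2|=N/2$ already works. On the intersection of the high-probability events, write
\[
\v\M_{z_1}=(\s',\zero)+(\mathbf{l}',\zero)+(\zero,\w_C),
\]
with $\s'\in\S_R$ and $\norm{\mathbf{l}'}=\Live_R(\v_R\mathbf{P}_{z_1})\le e^{-\alpha_R N/2}\norm{\v_R}$. Bound the three contributions to $\Live(\v\M_z)$ after $z_2$ as follows:
\begin{itemize}
\item the first contributes no live mass, by your key lemma;
\item the second contributes at most $(1/\eta)\norm{\mathbf{l}'}$, by Kolmogorov on $z_2$;
\item the third contributes at most $e^{-\alpha_C N/2}\norm{\w_C}\le(1/\eta)e^{-\alpha_C N/2}\norm{\v}$, by the base case on $z_2$ and Kolmogorov on $z_1$.
\end{itemize}
Hence $\Live(\v\M_z)\le(2/\eta)e^{-\min(\alpha_R,\alpha_C)N/2}\norm{\v}$, and the conversion above finishes the proof.

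Repaired this way, your argument handles the no-leakage case uniformly and never needs Proposition~\ref{prop:unfair_implies_loss}. The paper's decomposition, in exchange, yields the stronger full-norm decay on its peeled-off component.
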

From this claim the fact that $\norm{\v \M_{X[1\ldots n]}}$ converges exponentially fast to some finite value for every initial vector $\v$ and normal sequence $X$ follows using the same techniques as before. Note that under the $(\bigstar)$ hypothesis this graph is strongly connected, and thus we have already shown the base case $\numberSCCs = 1$.
We prove the general case below.

\newcommand{\firstComponent}{C}

\begin{proof}
	Fix some family $\aFamily = \{\M_a\}_{a \in \alph}$ with $\numberSCCs + 1$ strongly connected components, and sort the components topologically as $\firstComponent, D_1, \ldots, D_s$ (i.e. there are no edges going into $\firstComponent$ and if there is some edge from a node of $D_i$ to a node of $D_j$ then $i \leq j$).

	Let $D = \bigcup_{i=1}^s D_i$. Note that if $\firstComponent$ has no outgoing edges to $D$ then the results follows immediately considering the subfamilies $\{\M_a^{[\firstComponent \times \firstComponent]}\}_{a \in \alph}$ and $\{\M_{a}^{[D \times D]}\}_{a \in \alph}$ which each have at most~$s$ strongly connected components.
	More precisely, we apply Claim~\ref{claim:convergence_for_s_SCCs} inductively to $\firstComponent$ and $D$ to obtain $\alpha_\firstComponent$ and $\alpha_D$. Then, we claim that Equation~\eqref{eq:claim_convergence_s_SCCs} holds for $\aFamily$ with $\alpha = \min\{\alpha_\firstComponent, \alpha_D\}$: for any $\eta > 0$ we can take an $N$ big enough so that both
	\begin{align*}
		\Pr_{|z| = N}\left[ \forall \v\, \Live(\v \M_z^{[C\times C]}) \leq e^{-\alpha_\firstComponent N} \Live(\v) \right] \geq 1 - \eta/2
	\end{align*}
	and
	\begin{align*}	
		\Pr_{|z| = N}\left[ \forall \v\, \Live(\v \M_z^{[D\times D]}) \leq e^{-\alpha_D N} \Live(\v) \right] \geq 1 - \eta/2
	\end{align*}
	hold. Then, using the union bound and Proposition~\ref{prop:live}(c), we conclude that
	\begin{align*}	
		\Pr_{|z| = N}\left[ \forall \v\, \Live(\v \M_z) \leq e^{-\alpha N} \Live(\v) \right] \geq 1 - \eta
	\end{align*}
	Therefore, from now on we assume that $\firstComponent$ has some outgoing edges to $D$, i.e. that there is some \textit{leakage} from $\firstComponent$ to $D$.

	Let $\v$ be some initial vector. We will denote by $\v_F$ the projection of $\v$ into the indices from $F \subseteq [1\ldots m]$. In particular, note that $\v = \v_\firstComponent + \v_D$. 
	We will consider a two-step process in which we first evolve $\v$ as $\v\M_{z_1}$ according to some random sequence $z_1 \in \alph^N$ and then again using a second
	random sequence $z_2 \in \alph^N$. We will argue that, during the first step, $\norm{(\v \M_{z_1})_\firstComponent}$ converges exponentially fast to zero with high probability, while $\norm{(\v \M_{z_1})_D}$ does not increase too much (i.e. it grows subexponentially). Later, in the second step
	we will control the live capital of $(\v \M_{z_1})_D \M_{z_2}$ using the inductive hypothesis while arguing that $\norm{(\v \M_{z_1})_\firstComponent \M_{z_2}}$ cannot grow fast enough to compensate the capital lost on the first step (again, with high probability). See Figure~\ref{fig:sketch_of_proof} for a sketch of this proof.

	Consider the family $\{\M_{a}^{[\firstComponent \times \firstComponent]}\}_{a \in \alph}$ obtained by restricting $\aFamily$ to the indices from $\firstComponent$. This family is superfair, but not fair: this stems from the fact
	that there is some edge from $\firstComponent$ to $D$ in the reachability graph, and thus there is some matrix $\M_a$ and indices $i\in \firstComponent,j\in D$ such that $\M_a(i,j) > 0$. Hence, by Proposition~\ref{prop:unfair_implies_loss} this family loses exponentially fast.
	More precisely, the family belongs to Case 1 described above, and thus there is some $\alpha_\firstComponent$ such that for any $\eta>0$ and big enough $N$ it holds that
	\begin{align}\label{eq:control_first_component_first_phase}	
		\Pr_{|z_1| = N}\left[ \forall \v\, \norm{\v \M_{z_1}^{[\firstComponent \times \firstComponent]}} \leq e^{-\alpha_\firstComponent N} \right] \geq 1 - \eta
	\end{align}
	Observe that $\v \M_{z_1}^{[\firstComponent \times \firstComponent]} = (\v \M_{z_1})_{\firstComponent}$ because $\firstComponent$ is a top BSCC of the reachability graph. Thus, Eq.~\eqref{eq:control_first_component_first_phase} lets us control $\norm{(\v\M_{z_1})_\firstComponent}$ for most sequences
	$z_1$ during our first step. To bound $\norm{(\v \M_{z_1})_{D}}$ we use Theorem~\ref{thm:kolmogorov-inequality}, which states that for any $\eta > 0$ and big enough $N$ it holds that
	\begin{align}\label{eq:control_second_component_first_phase}
		\Pr_{|z_1| = N}\Big[ \forall \v\, \norm{\v \M_{z_1}} \leq (1/\eta)\norm{\v} \Big] \geq 1 - m\eta
	\end{align}
	Since $\norm{(\v \M_{z_1})_D} \leq \norm{\v \M_{z_1}}$ Eq.~\eqref{eq:control_second_component_first_phase} is enough for our purposes.

	For the second step, we will control $\norm{(\v\M_{z_1})_\firstComponent \M_{z_2}}$ by using again Theorem~\ref{thm:kolmogorov-inequality}, while we control $\norm{(\v\M_{z_1})_D \M_{z_2}}$ using the inductive hypothesis. Namely, there is some $\alpha_D$ such that for any $\eta > 0$ and $N$ big enough
	\begin{align}\label{eq:control_second_component_second_phase}	
		\Pr_{|z_2| = N}\left[ \forall \w\, \Live(\w \M_{z_2}^{[D\times D]}) \leq e^{-\alpha_D N} \Live(\w) \right] \geq 1 - \eta	
	\end{align}
	Note that if $\w$ only has support on $D$ then $\w \M_{z_2}^{[D\times D]} = \w \M_{z_2}$ since there are no edges from $D$ to $C$, and therefore Eq.~\eqref{eq:control_second_component_second_phase} can be used to bound our expression. 

	Now we mix all these observations in a final expression. Looking at a set of sequences $z =z_1z_2 \in \alph^{2N}$ such that all equations hold (which happens with probability bigger than $1-\eta(m+2)$) it is the case that for any vector $\w$
	\begin{align*}
		\Live(\w \M_{z_2}) &\leq \Live(\w_\firstComponent \M_{z_2}) + \Live(\w_D \M_{z_2})\\
		&\leq \norm{\w_\firstComponent \M_{z_2}} + \Live(\w_D \M_{z_2})\\
		&\leq \norm{\w_\firstComponent}/\eta + e^{-\alpha_D N} \Live(\w_D)
	\end{align*}
	where the first and second inequalities follow by properties of the $\Live$ function and the last inequality by Theorem~\ref{thm:kolmogorov-inequality} and Eq.~\eqref{eq:control_second_component_second_phase}.
	If we now instantiate $\w = \v \M_{z_1}$ for an arbitrary vector $\v$, it holds that
	\begin{align*}
		\Live(\v \M_{z_1} \M_{z_2}) &\leq (1/\eta) \norm{(\v \M_{z_1})_\firstComponent} + e^{-\alpha_D N} \Live((\v\M_{z_2})_D)\\
		&\leq (1/\eta) \norm{(\v \M_{z_1})_\firstComponent} + e^{-\alpha_D N} \norm{(\v\M_{z_2})_D}\\
		&\leq (1/\eta)e^{-\alpha_C N} \norm{\v_C} + (1/\eta) e^{-\alpha_D N} \norm{\v_D}\\
		&\leq (1/\eta) e^{-\alpha N} \norm{\v}
	\end{align*}
	where the second inequality follows again by properties of the $\Live$ function, the third one by Eqs.~\eqref{eq:control_first_component_first_phase} and~\eqref{eq:control_second_component_first_phase} and the last one by considering $\alpha = \min\{\alpha_C, \alpha_D\}$
	and recalling that $\norm{\v_\firstComponent} + \norm{\v_D} = \norm{\v}$. Finally, using the same ideas from the proof of Case 2 (more specifically, from Eq.~\eqref{eq:from_norm_to_live}) we can turn this expression into an inequality of the form
	\begin{align*}
		\Live(\v \M_{z_1 z_2}) \leq (1/\eta) e^{-\alpha N} \Live(\v)	
	\end{align*}
	
	After fixing some $\eta>0$, since for $N$ big enough it holds that $(1/\eta) e^{-(\alpha/2) N} \leq 1$, we have proven that 
	\begin{align*}
		\Pr_{|z| = 2N}\left( \forall \v\, \norm{\v \M_{z}} \leq e^{-(\alpha/2) N} \norm{ \v} \right) \leq 1-\eta(m+2)
	\end{align*}
	for almost all $N$. This is enough to conclude that Claim~\ref{eq:claim_convergence_s_SCCs} holds for the family of matrices $\{\M_a\}_{a \in \alph}$.
\end{proof}

\begin{figure}
	\centering
	\begin{tikzpicture}[>=stealth,thick]

		\node (V) at (0,2) {$\v$};

		\node (W) at (-2,-0.35) {$\underbrace{(\v\M_{z_1})_\firstComponent}_{=\v \M_{z_1}^{[\firstComponent \times \firstComponent]}}$};
		\node (Z) at (2,0) {$(\v\M_{z_1})_D$};

		\node (X) at (-2,-2.5) {$(\v\M_{z_1})_\firstComponent \M_{z_2}$};
		\node (Y) at (2,-2.85) {$\underbrace{(\v\M_{z_1})_D \M_{z_2}}_{(\v\M_{z_1})_D \M_{z_2}^{[D \times D]}}$};

		\node at (0,0) {$+$};
		\node at (0,-2.5) {$+$};

		\draw[->] (V) to[out=-150,in=90] node[midway,left,xshift=-3mm]{Prop.~\ref{prop:unfair_implies_loss}} (W);
		\draw[->] (V) to[out=-30,in=90] node[midway,right,xshift=3mm]{Thm.~\ref{thm:kolmogorov-inequality}} (Z);

		\draw[->] (W) -- node[left]{Thm.~\ref{thm:kolmogorov-inequality}} (X);
		\draw[->] (Z) -- node[right, pos=0.7]{IH} (Y);

		\draw[->, dashed] (4,1.8) -- node[right]{$z_1$} (4,0.3);
		\draw[->, dashed] (4,-0.5) -- node[right]{$z_2$} (4,-2);

	\end{tikzpicture}
	\caption{Sketch of the proof of Claim~\ref{eq:claim_convergence_s_SCCs}. The dashed arrows indicate the evolution of the vector $\v$ according to $z_1$ and $z_2$. The solid arrows indicate which previous results we use to bound the increase
	of the norm. IH denotes the inductive hypothesis.}
	\label{fig:sketch_of_proof}
\end{figure}
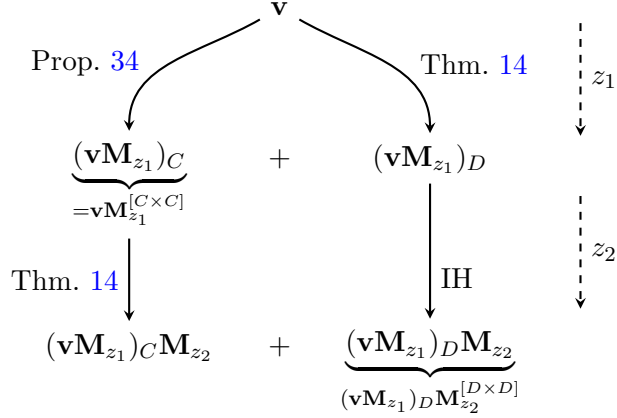

We observe that in the general case the distinction between Case 0, 1 and 2 is not as simple as before. Namely, that if there is one BSCC belonging to Case 1 and another BSCC belonging to Case 2 then both behaviours can be attainable depending on the prefix of the normal sequence.

\section{Conclusion and open questions}

In this work we showed that martingales given by the expected value of probabilistic automata do not succeed against normal sequences (Corollary~\ref{coro:expected_success_and_normality}). In particular, we showed
that for strongly connected automata these martingales behave in three distinct ways: they either lose all their capital at some finite step (Case 0, Proposition~\ref{prop:case_0}), lose their capital exponentially fast, so that it converges to zero at this speed (Case 1, Proposition~\ref{prop:case_0_statement}), or 
their capital converges to some finite non-zero constant exponentially fast (Case 2, Proposition~\ref{prop:case_1_statement}). Moreover, it is possible to decide, given some automaton, to which case it belongs (Theorem~\ref{teo:decidability}). To obtain this result we proved a more general statement regarding products of non-negative fair families of matrices, 
where a similar trichotomy holds for them (Theorem~\ref{thm:main}), as well as the decidability result.

We leave a concrete open question. Our procedure to decide to which case a given family of matrices/automaton belongs runs in \textsc{EXPSPACE}, even though most of the steps of the algorithm belong to \PSPACE{}. The bottleneck comes from the fact that our construction of pseudo-mixing word in Lemma~\ref{lem:unique-bscc} obtains words whose length
might be exponential in the dimension of the matrices. A proof the there are pseudo-mixing words of polynomial length would immediately imply a $\PSPACE{}$ algorithm to solve this problem.

\bibliographystyle{plain}
\bibliography{biblio}

\end{document}